\title{\Large\textbf{Interpretable and flexible non-intrusive reduced-order models using reproducing kernel Hilbert spaces}}
\author[1]{Alejandro~N.~Diaz\thanks{Corresponding author. E-mail: \href{mailto:andiaz@sandia.gov}{andiaz@sandia.gov}.}}
\affil[1]{\normalsize Sandia National Laboratories}
\author[1,2]{Shane~A.~McQuarrie}
\affil[2]{\normalsize Department of Mathematics, Brigham Young University}
\author[1]{John~T.~Tencer}
\author[1]{Patrick~J.~Blonigan}
\date{}
\numberwithin{equation}{section}
\newtheorem{definition}{Definition}[section]
\newtheorem{proposition}{Proposition}[section]
\newtheorem{lemma}{Lemma}[section]
\newtheorem{theorem}{Theorem}[section]
\newtheorem{corollary}{Corollary}[section]
\theoremstyle{definition}
\newtheorem{remark}{Remark}[section]
\newtheorem{example}[theorem]{Example}
\newcommand{\apdxref}[1]{\hyperref[#1]{Appendix~\ref{#1}}}
\newcommand{\trp}{^{\mathsf{T}}}
\newcommand{\fun}{v}
\newcommand{\bfun}{\mathbf{v}}
\newcommand{\set}[1]{\left\{ #1 \right\} }
\newcommand{\norm}[1]{\left\| #1 \right\|}
\newcommand{\innerprod}[1]{\left\langle #1 \right\rangle}
\newcommand{\real}{\mathbb{R}}
\newcommand{\cD}{{\cal D}}
\newcommand{\cH}{{\cal H}}
\newcommand{\bzero}{\mathbf{0}}
\newcommand{\bb}{\mathbf{b}}
\newcommand{\bd}{\mathbf{d}}
\newcommand{\be}{\mathbf{e}}
\newcommand{\bff}{\mathbf{f}}
\newcommand{\bg}{\mathbf{g}}
\newcommand{\bh}{\mathbf{h}}
\newcommand{\bp}{\mathbf{p}}
\newcommand{\bq}{\mathbf{q}}
\newcommand{\bs}{\mathbf{s}}
\newcommand{\bu}{\mathbf{u}}
\newcommand{\bv}{\mathbf{v}}
\newcommand{\bw}{\mathbf{w}}
\newcommand{\bx}{\mathbf{x}}
\newcommand{\by}{\mathbf{y}}
\newcommand{\bz}{\mathbf{z}}
\newcommand{\bA}{\mathbf{A}}
\newcommand{\bB}{\mathbf{B}}
\newcommand{\bC}{\mathbf{C}}
\newcommand{\bD}{\mathbf{D}}
\newcommand{\bG}{\mathbf{G}}
\newcommand{\bH}{\mathbf{H}}
\newcommand{\bI}{\mathbf{I}}
\newcommand{\bL}{\mathbf{L}}
\newcommand{\bM}{\mathbf{M}}
\newcommand{\bQ}{\mathbf{Q}}
\newcommand{\bU}{\mathbf{U}}
\newcommand{\bV}{\mathbf{V}}
\newcommand{\bW}{\mathbf{W}}
\newcommand{\bX}{\mathbf{X}}
\newcommand{\bY}{\mathbf{Y}}
\newcommand{\bZ}{\mathbf{Z}}
\newcommand{\tbb}{\tilde{\mathbf{b}}}
\newcommand{\tbc}{\tilde{\mathbf{c}}}
\newcommand{\tbf}{\tilde{\mathbf{f}}}
\newcommand{\tbq}{\tilde{\mathbf{q}}}
\newcommand{\tbA}{\tilde{\mathbf{A}}}
\newcommand{\tbB}{\tilde{\mathbf{B}}}
\newcommand{\tbH}{\tilde{\mathbf{H}}}
\newcommand{\tbQ}{\tilde{\mathbf{Q}}}
\newcommand{\tbW}{\tilde{\mathbf{W}}}
\newcommand{\hbc}{\hat{\mathbf{c}}}
\newcommand{\hbe}{\hat{\mathbf{e}}}
\newcommand{\hbf}{\hat{\mathbf{f}}}
\newcommand{\hbq}{\hat{\mathbf{q}}}
\newcommand{\hbA}{\hat{\mathbf{A}}}
\newcommand{\hbH}{\hat{\mathbf{H}}}
\newcommand{\hbQ}{\hat{\mathbf{Q}}}
\newcommand{\hbL}{\hat{\mathbf{L}}}
\newcommand{\hbM}{\hat{\mathbf{M}}}
\newcommand{\hbO}{\hat{\mathbf{O}}}
\newcommand{\hbZ}{\hat{\mathbf{Z}}}
\newcommand{\obq}{\bar{\mathbf{q}}}
\newcommand{\obx}{\bar{\mathbf{x}}}
\newcommand{\bdelta}{{\boldsymbol{\delta}}}
\newcommand{\bGamma}{{\boldsymbol{\Gamma}}}
\newcommand{\bmu}{{\boldsymbol{\mu}}}
\newcommand{\bnu}{{\boldsymbol{\nu}}}
\newcommand{\bsigma}{{\boldsymbol{\sigma}}}
\newcommand{\bphi}{{\boldsymbol{\phi}}}
\newcommand{\bpsi}{{\boldsymbol{\psi}}}
\newcommand{\bSigma}{{\boldsymbol{\Sigma}}}
\newcommand{\bomega}{{\boldsymbol{\omega}}}
\newcommand{\bOmega}{{\boldsymbol{\Omega}}}
\newcommand{\dt}{\frac{\textrm{d}}{\textrm{d}t}}
\newcommand{\pdt}{\frac{\partial}{\partial t}}
\newcommand{\pdx}{\frac{\partial}{\partial x}}
\newcommand{\pdxx}{\frac{\partial^2}{\partial x^2}}
\begin{document}

\thispagestyle{empty}
\maketitle

\begin{abstract}
\noindent
This paper develops an interpretable, non-intrusive reduced-order modeling technique using regularized kernel interpolation.
Existing non-intrusive approaches approximate the dynamics of a reduced-order model (ROM) by solving a data-driven least-squares regression problem for low-dimensional matrix operators.
Our approach instead leverages regularized kernel interpolation, which yields an optimal approximation of the ROM dynamics from a user-defined reproducing kernel Hilbert space.
We show that our kernel-based approach can produce interpretable ROMs whose structure mirrors full-order model structure by embedding judiciously chosen feature maps into the kernel.
The approach is flexible and allows a combination of informed structure through feature maps and closure terms via more general nonlinear terms in the kernel.
We also derive a computable \emph{a posteriori} error bound that combines standard error estimates for intrusive projection-based ROMs and kernel interpolants.
The approach is demonstrated in several numerical experiments that include comparisons to operator inference using both proper orthogonal decomposition and quadratic manifold dimension reduction.
\end{abstract}

\vspace{2mm}
\noindent \emph{Keywords:} data-driven model reduction, kernel interpolation, feature maps, interpretable reduced-order model, error bounds, quadratic manifolds

\section{Introduction}
\label{sec:introduction}

Large-scale numerical simulations are a crucial component of the engineering design process.
For many applications, the complexity of the underlying physics and the required fidelity make such simulations highly computationally expensive, which renders many-query simulation tasks such as uncertainty quantification and design optimization infeasible.
Model reduction techniques seek to mitigate high computation costs in numerical simulations by systematically extracting the relevant dynamics of a large-scale system, called the full-order model (FOM), and constructing a low-dimensional, computationally efficient reduced-order model (ROM), which can be used as a substitute for the FOM in many-query design tasks.
Two appealing features of ROMs over other surrogate modeling techniques are that they aim to incorporate underlying physics from the FOM and often come equipped with rigorous error bounds.
In this paper, we propose a novel model reduction framework that uses regularized kernel interpolation to compute data-driven ROMs that are interpretable, flexible, and have rigorous error estimates.

Classical projection-based model reduction techniques construct ROMs by identifying a low-dimensional linear subspace that best represents the FOM dynamics in some sense, then projecting the governing equations onto the subspace. Examples
of projection-based approaches include balanced truncation~\cite{ACAntoulas_2005a,PBenner_TBreiten_2017a};
interpolatory projections~\cite{ACAntoulas_CABeattie_SGugercin_2020a,ANDiaz_IVGosea_MHeinkenschloss_ACAntoulas_2023a};
moment-matching~\cite{PBenner_TBreiten_2015a,CGu_2011a};
and proper orthogonal decomposition (POD)~\cite{MGubisch_SVolkwein_2017a,MHinze_SVolkwein_2005a}, in which the optimal low-dimensional subspace is defined as the span of the leading left singular vectors of a representative set of state data.
In recent years, several dimension reduction approaches have been proposed that aim to overcome approximation limitations of linear subspaces, including
nonlinear manifolds (NMs) using autoencoders~\cite{JCocola_JTencer_FRizzi_EParish_PBlonigan_2023a,ANDiaz_YChoi_MHeinkenschloss_2024a,YKim_YChoi_DWidemann_TZohdi_2022a,KLee_KTCarlberg_2020a,FRomor_GStabile_GRozza_2023a},
quadratic manifolds (QMs)~\cite{JBarnett_CFarhat_2022a,RGeelen_LBalzano_SWright_KWillcox_2024a,RGeelen_SWright_KWillcox_2023a,PSchwerdtner_BPeherstorfer_2024a},
and the projection-based ROM + artificial neural network (PROM-ANN) approach~\cite{JBarnett_CFarhat_YMaday_2023a}.
These strategies are especially beneficial when applied to problems with slowly decaying Kolmogorov $n$-width, such as transport-dominated problems or problems with sharp gradients \cite{MOhlberger_SRave_2016a,peherstorfer2022kolmogorov}.
In many cases, these nonlinear dimension reduction approaches can still be used to produce projection-based ROMs by inserting the state approximation into the governing FOM and projecting the residual by a test basis.

Projection-based methods have enjoyed success in a number of applications. However, a common disadvantage is that they require intrusive access to code of a given FOM.
This is often an infeasible request when the FOM is defined through legacy or commercial code, and hence an intrusive projection-based ROM is unobtainable.
Several so-called non-intrusive model reduction approaches have been developed recently to overcome this difficulty.
These methods apply a dimension reduction technique, such as POD or an autoencoder, to project pre-computed snapshot data onto a low-dimensional latent space and learn a function that models the system dynamics within the latent space.
For example, dynamic mode decomposition (DMD) \cite{JNKutz_SLBrunton_BWBrunton_JLProctor_2016a,CWRowley_IMezic_SBagheri_PSchlatter_DSHenningson_2009a,PJSchmid_2010a,PJSchmid_2022a,JHTu_CWRowley_DMLuchtenburg_SLBrunton_JNKutz_2014a} approximates a dynamical system by fitting a least-squares optimal linear operator to time series data. This approach has been extended to approximate nonlinear dynamical systems using Koopman operator theory, but selecting observables that yield approximately linear dynamics can be challenging \cite{SLBrunton_BWBrunton_JLProctor_JNKutz_2016a,IMezic_2005a,rosenfeld2023singulardmd,MOWilliams_IGKevrekidis_CWRowley_2015a}.
Operator inference (OpInf) \cite{ghattas2021acta,kramer2024survey,BPeherstorfer_KWillcox_2016a} is a related method that constrains the learnable dynamics to have the same structure (e.g, polynomial) as a projection-based ROM, thereby producing interpretable nonlinear ROMs.
In this method, reduced-order operators are computed by solving a linear least-squares regression that minimizes the residual of the desired reduced dynamics.
Non-polynomial nonlinearities can often be incorporated by first applying a lifting transformation to the training data, then learning a polynomial ROM \cite{EQian_BKramer_BPeherstorfer_KWillcox_2020a}.
By contrast, neural network (NN)-based approaches \cite{KBhattacharya_BHosseini_NBKovachki_AMStuart_2021a,RMaulik_BLusch_PBalaprakash_2021a,OSan_RMaulik_2018a,OSan_RMaulik_MAhmed_2019a} typically use autoencoders for the dimension reduction and model the reduced dynamics using a NN.
While these methods are very flexible in that they can model dynamics with arbitrary structure, the resulting ROMs are not interpretable.
Another method, latent space dynamics identification (LaSDI)
\cite{CBonneville_YChoi_DGhosh_JLBelof_2024a,CBonneville_etal_2024a,WDFries_XHe_YChoi_2022a,XHe_YChoi_WDFries_JLBelof_JSChen_2023a,JSRPark_SWCheung_YChoi_YShin_2024a}, can be viewed as a hybrid of OpInf and NN-based approaches that typically uses autoencoder-based dimension reduction and learns reduced-order dynamics by solving a least-squares regression problem for coefficient matrices corresponding to a library of nonlinear candidates functions.
This approach is related to the SINDy algorithm \cite{EKaiser_JNKutz_SLBrunton_2018a} but does not enforce a sparsity requirement.
The library of candidate functions for LaSDI is typically chosen to be polynomial, which results in solving a similar least-squares regression problem to OpInf when learning the latent dynamics.
Unlike OpInf, while the resulting ROM structure is interpretable, a natural structure for the ROM dynamics cannot be deduced \emph{a priori} since autoencoder-based dimension reduction does not preserve structure from the FOM.
In each of these approaches, error estimates for the resulting ROMs are limited, with the exception of the recent thermodynamics-based LaSDI approach \cite{JSRPark_SWCheung_YChoi_YShin_2024a}.

Our proposed kernel-based non-intrusive ROMs, which we call ``Kernel ROMs'',
share similarities with the aforementioned approaches while overcoming some noticeable drawbacks.
Like other approaches, we begin by applying POD or QM dimension reduction to a set of training snapshots and learn a function that approximates the system dynamics within a latent space.
However, instead of modeling the ROM dynamics as a polynomial and learning the polynomial coefficients through least-squares regression as in OpInf and LaSDI, we use regularized kernel interpolation
\cite{CAMicchelli_MPontil_2005a,GSantin_2018a,GSantin_BHaasdonk_2021a} to model the reduced dynamics with a function belonging to a user-defined reproducing kernel Hilbert space (RKHS).
The structure of the learned function depends on the positive-definite kernel that defines the RKHS.
For example, if the governing FOM has a polynomial structure, we can use a kernel induced by a feature map to compute ROM dynamics that share the same polynomial structure.
On the other hand, if the FOM dynamics have unknown or only partially known structure, a more generic nonlinear kernel can be used to model the unknown part of the ROM dynamics.
In this sense, our proposed approach has a natural way of incorporating closure terms into the ROM dynamics.
While kernel methods have been used to emulate reduced dynamics in previous work \cite{JPhillips_JAfonso_AOliveira_LMSilveira_2003a, DWirtz_BHaasdonk_2012a}, these approaches are intrusive in that they assume that the FOM dynamics can be sampled explicitly, and they do not demonstrate a way to inject explicit structure into the learned ROM.
The authors in \cite{PJBaddoo_BHermann_BJMcKeon_SLBrunton_2022a} propose a generalization of DMD using kernel methods to learn a high-dimensional dynamical system consisting of a linear term and a nonlinear term, implicitly injecting structure by modeling nonlinear terms using polynomial kernels.
The structure is implicit because a post-processing step is required to extract the linear term.
While this approach bears some similarities to ours, we focus on constructing low-dimensional ROMs by applying a dimension reduction technique, projecting the training snapshots onto a low-dimensional latent space, and learning a low-dimensional dynamical system in the latent space whose structure \emph{explicitly} mirrors the FOM structure. The key to preserving the FOM structure is the use of feature map kernels, which enables identifying the ROM coefficient matrices without requiring a post-processing step.
In summary, our proposed approach is entirely data-driven, can produce interpretable and flexible ROMs, and yields computable \emph{a posteriori} error bounds between the non-intrusive ROM and FOM solutions.

The outline of this paper is as follows.
We first review essential aspects of regularized kernel interpolation in \Cref{sec:kernel_interpolation}.
We then review intrusive projection-based model reduction in \Cref{sec:intrusive_rom}, with a focus on quadratic dimension reduction and the resulting model structure.
\Cref{sec:nonintrusive_rom} details the application of regularization kernel interpolation in the non-intrusive model reduction setting, and a corresponding \emph{a posteriori} error analysis is provided in \Cref{sec:error_estimates}.
We demonstrate our proposed approach numerically on several examples in \Cref{sec:numerics}, including comparisons to OpInf and intrusive ROMs when possible.
The results show that our proposed approach can accommodate either POD or QM dimension reduction and produces comparable results to OpInf while also yielding a computable error bound.
Finally, \Cref{sec:conclusion} provides a few concluding remarks and identifies potential avenues for future development.

\section{Regularized kernel interpolation}\label{sec:kernel_interpolation}

This section reviews the essentials of regularized kernel interpolation, the key ingredient for our non-intrusive model reduction approach.
\Cref{sec:scalar_valued_interpolation} reviews scalar-valued interpolation, which is extended to vector-valued interpolation in \Cref{sec:vector_valued_interpolation}.
Scenario-specific kernel design is then discussed in \Cref{sec:kernel_selection}.

\subsection{Scalar-valued kernel interpolation}\label{sec:scalar_valued_interpolation}
We begin with a review of regularized kernel interpolation for scalar-valued functions.
By the Moore--Aronszajn Theorem (see, e.g., \cite[Theorem 3.10]{GSantin_2018a}),
a positive-definite kernel function defines a unique Hilbert space with desirable properties.
This result leads to the Representer Theorem --- the key result used for computing an optimal interpolant in an RKHS --- as well as a pointwise error bound on the interpolant.

\begin{definition}[Positive-definite kernels]
\label{def:pd_kernels}
A function $K:\real^{n_x} \times \real^{n_x}\to \real$ is a (real-valued) \emph{kernel function} if it is symmetric, i.e., $K(\bx,\bx') = K(\bx',\bx)$ for all $\bx,\bx'\in\real^{n_x}$. A kernel function $K$ is said to be \emph{positive definite} if for any matrix $\bX=[~\bx_1~~\cdots~~\bx_m~]\in \real^{n_x\times m}$ with pairwise distinct columns, the kernel matrix $K(\bX, \bX)\in \real^{m\times m}$ with entries $K(\bX, \bX)_{ij}=K(\bx_i, \bx_j)$ is positive semi-definite. If $K(\bX, \bX)$ is strictly positive definite, then $K$ is said to be strictly positive definite.
\end{definition}

\begin{definition}[RKHS]
Let $K:\real^{n_x}\times\real^{n_x}\to\real$ be a positive-definite kernel function.
Consider the pre-Hilbert space of functions
\begin{align*}\cH_K^0(\real^{n_x})
    = \set{
        \fun:\real^{n_x}\to\real
        ~\bigg|~
        \exists\,m\in\mathbb{N},\,
        \bomega\in\real^{m},\,
        \set{\bx_j}_{j=1}^{m}\subset\real^{n_x}
        ~\textup{such that}~
        \fun(\bx) = \sum_{j=1}^m \omega_j K(\bx_j, \bx)
    }.
\end{align*}
The \emph{reproducing kernel Hilbert space (RKHS)} $\cH_K(\real^{n_x})$ induced by the kernel $K$ is the (unique) completion of $\cH_K^0(\real^{n_x})$ with respect to the norm $\norm{\cdot}_{\cH_K(\real^{n_x})} \coloneqq \innerprod{\cdot,\cdot}_{\cH_K(\real^{n_x})}^{1/2}$ induced by the inner product
\begin{align*}
\innerprod{\fun, \fun'}_{\cH_K(\real^d)}
    \coloneqq \sum_{j=1}^{m} \sum_{k=1}^{m'} \omega_j \omega_k' K(\bx_j, \bx_k'),
\end{align*}
in which $\fun(\bx) = \sum_{j=1}^m \omega_j K(\bx_j, \bx)$ and $\fun'(\bx) = \sum_{j=1}^{m'} \omega_k' K(\bx_k', \bx)$.
\end{definition}

For an ordered collection of pairwise-distinct vectors $\{\bx_j\}_{j=1}^m \subset \real^{n_x}$, we use $K(\bX,\bX)$ to denote $m \times m$ kernel matrix of \Cref{def:pd_kernels} and define the vector $K(\bX,\bx) = [~K(\bx_1,\bx)~~\cdots~~K(\bx_m,\bx)~]\trp\in\real^m$.
To simplify notation, we will write $\cH_K$ for $\cH_K(\real^{n_x})$ when it is understood that $K$ is defined over $\real^{n_x}\times \real^{n_x}$.
Importantly, for $\fun(\bx) = \sum_{j=1}^m \omega_j K(\bx_j, \bx)$, the induced RKHS norm $\norm{\fun}_{\cH_K}$ can be computed efficiently via the corresponding kernel matrix,
\begin{align}\label{eq:kernel_norm}
    \norm{\fun}_{\cH_K}^2
    = \sum_{j=1}^{m} \sum_{k=1}^{m} \omega_j \omega_k K(\bx_j, \bx_k)
    = \bomega\trp K(\bX, \bX)\bomega.
\end{align}
We now state a main result from RKHS theory that is fundamental for our method.

\begin{definition}[Regularized kernel interpolant]
    \label{def:kernel_interpolant}
    Let $\fun:\real^{n_x}\to \real$, $\set{\bx_j}_{j=1}^m\subset \real^{n_x}$ be pairwise distinct, and denote $y_j=\fun(\bx_j)$.
    For a given RKHS $\cH_K$ and a regularization parameter $\gamma \geq 0$, a \emph{regularized interpolant} $s_{\fun}^\gamma \in \cH_K$ of $\fun$ is a solution to the minimization problem
\begin{align}\label{eq:kernel_optimization_problem}
    \min_{s\in \cH_K} \; \sum_{j=1}^m(y_j - s(\bx_j))^2 + \gamma \norm{s}_{\cH_K}^2.
\end{align}
\end{definition}

\begin{theorem}[Representer Theorem]\label{thm:representer}
    If $\gamma >0$, then the minimization problem \cref{eq:kernel_optimization_problem} has a solution of the form
    \begin{subequations}
    \begin{align}\label{eq:kernel_interpolant}
        s_{\fun}^\gamma(\bx)
        = \sum_{j=1}^m \omega_j K(\bx_j, \bx)
        = \bomega\trp K(\bX, \bx),
    \end{align}
    where the coefficient vector $\bomega = [~\omega_1~~\cdots~~\omega_m~]\trp\in \real^m$ solves the $m\times m$ linear system
    \begin{align}\label{eq:coefficient_equation}
        \big(K(\bX, \bX) + \gamma \bI\big)
        \bomega
= \left[\begin{array}{c}
            y_1 \\ \vdots \\ y_m
        \end{array}\right].
    \end{align}
    \end{subequations}
Moreover, if $K$ is strictly positive definite, then $s_{\fun}^\gamma$ is the unique minimizer of \cref{eq:kernel_optimization_problem}.
\end{theorem}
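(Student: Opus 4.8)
The plan is to establish the two claims in turn: first, existence of a minimizer of the stated representer form whose coefficients solve \cref{eq:coefficient_equation}, and second, its uniqueness under strict positive-definiteness. Everything hinges on the \emph{reproducing property} of $\cH_K$, namely $\innerprod{s, K(\bx,\cdot)}_{\cH_K} = s(\bx)$ for every $s\in\cH_K$ and every $\bx\in\real^{n_x}$. This follows by taking the second argument in the defining RKHS inner product to be the single-term element $K(\bx,\cdot)$, and it extends from the dense pre-Hilbert space $\cH_K^0$ to the completion because point evaluation is a bounded (continuous) functional on $\cH_K$, which is precisely the property guaranteed by the Moore--Aronszajn construction.

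First I would reduce the infinite-dimensional problem to a finite-dimensional one via an orthogonal decomposition. Let $V = \mathrm{span}\{K(\bx_j,\cdot)\}_{j=1}^m \subseteq \cH_K$ and write any candidate as $s = s_\parallel + s_\perp$ with $s_\parallel$ the projection of $s$ onto the finite-dimensional (hence closed) subspace $V$ and $s_\perp \in V^\perp$. By the reproducing property, $s(\bx_j) = \innerprod{s, K(\bx_j,\cdot)}_{\cH_K} = \innerprod{s_\parallel, K(\bx_j,\cdot)}_{\cH_K} = s_\parallel(\bx_j)$, so the misfit term $\sum_j (y_j - s(\bx_j))^2$ in \cref{eq:kernel_optimization_problem} depends on $s$ only through $s_\parallel$. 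Meanwhile orthogonality gives $\norm{s}_{\cH_K}^2 = \norm{s_\parallel}_{\cH_K}^2 + \norm{s_\perp}_{\cH_K}^2 \geq \norm{s_\parallel}_{\cH_K}^2$, so the regularization term can only grow when $s_\perp \neq 0$. Hence replacing $s$ by $s_\parallel$ never increases the objective, and a minimizer exists of the form $s(\bx) = \sum_{j=1}^m \omega_j K(\bx_j,\bx)$. This orthogonal-decomposition argument is the conceptual heart of the theorem.

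Next I would substitute this form back into \cref{eq:kernel_optimization_problem}. Writing $\bG = K(\bX,\bX)$, the identities $s(\bx_i) = (\bG\bomega)_i$ and $\norm{s}_{\cH_K}^2 = \bomega\trp\bG\bomega$ (the latter from \cref{eq:kernel_norm}) convert the objective into the quadratic $\Phi(\bomega) = \norm{\by - \bG\bomega}^2 + \gamma\,\bomega\trp\bG\bomega$ in $\bomega\in\real^m$, where $\by = [~y_1~\cdots~y_m~]\trp$. Its gradient is $\nabla\Phi(\bomega) = 2\bG\big[(\bG + \gamma\bI)\bomega - \by\big]$, which vanishes whenever $(\bG + \gamma\bI)\bomega = \by$; and because $\bG \succeq 0$ and $\gamma \geq 0$ make the Hessian $2\bG^2 + 2\gamma\bG$ positive semidefinite, $\Phi$ is convex and any such stationary point is a global minimizer. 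This yields exactly \cref{eq:kernel_interpolant,eq:coefficient_equation}.

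Finally, for uniqueness I would use that strict positive-definiteness of $K$ forces $\bG \succ 0$, so $\bG + \gamma\bI$ is invertible and the coefficient vector solving \cref{eq:coefficient_equation} is unique. The delicate point, and the step I expect to require the most care, is promoting this to uniqueness of the minimizer over all of $\cH_K$: the misfit term alone is not strictly convex, since it sees $s$ only through its $m$ point values, so strict convexity of the full objective comes solely from the regularization term $\gamma\norm{\cdot}_{\cH_K}^2$ and genuinely requires $\gamma > 0$, after which the orthogonal-decomposition argument rules out any minimizer with a nonzero $V^\perp$ component. I would therefore treat the pure-interpolation case $\gamma = 0$ with care, where strict positive-definiteness still pins down the coefficients $\bomega$ uniquely but uniqueness holds only among interpolants of the representer form.
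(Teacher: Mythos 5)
The paper does not prove \Cref{thm:representer} itself; it defers to \cite[Theorem 9.3]{GSantin_BHaasdonk_2021a}, and your argument --- reduction to the finite-dimensional subspace $V = \mathrm{span}\{K(\bx_j,\cdot)\}_{j=1}^{m}$ via the reproducing property and orthogonal decomposition, followed by minimizing the resulting convex quadratic in $\bomega$ --- is exactly that standard proof, and it is correct. Your closing caution is also well placed rather than an idiosyncrasy: with $\gamma \geq 0$ permitted as in \Cref{def:kernel_interpolant}, the uniqueness claim genuinely requires $\gamma > 0$, since for $\gamma = 0$ any $s_\perp \in V^\perp$ can be added to a minimizer without changing the objective; and for $\gamma = 0$ with a merely positive semi-definite kernel the system \cref{eq:coefficient_equation} need not even be solvable (one only gets the normal equations $K(\bX,\bX)^2\bomega = K(\bX,\bX)\by$). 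Both of these are imprecisions inherited from the theorem statement itself, not gaps in your proof.
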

See, e.g., \cite[Theorem 9.3]{GSantin_BHaasdonk_2021a} for a proof of \Cref{thm:representer}.
A key observation from \Cref{thm:representer} is that a solution to the infinite-dimensional minimization problem \cref{eq:kernel_optimization_problem} can be obtained by solving the finite-dimensional linear system \cref{eq:coefficient_equation}.

Without regularization ($\gamma=0$) and assuming that $K(\bX, \bX)$ is invertible, the function $s_{\fun}^0\in\cH_K$ exactly interpolates the data, i.e., $s_{\fun}(\bx_j)=y_j$ for each $j=1, \dots, m$. Moreover, $s_{\fun}^0$ satisfies the following error bound.

\begin{theorem}[Power function error bound]\label{thm:power_function_bound}
    If $\fun\in \cH_K$ and $s_{\fun}^0\in \cH_K$ is an (unregularized) interpolant of $\fun$ corresponding to the pairwise distinct data $\set{\bx_j}_{j=1}^m\subset \real^{n_x}$ and $y_j = \fun(\bx_j)\in \real$, then
    \begin{subequations}
    \begin{align}\label{eq:power_function_bound}
        |\fun(\bx)-s_{\fun}^0(\bx)|\leq P_{K, \bX}(\bx)\norm{\fun}_{\cH_K} \qquad \forall \; \bx \in \real^d,
    \end{align}
    where $P_{K, \bX}:\real^{n_x}\to \real$ is the so-called \emph{power function} defined by
    \begin{align}\label{eq:power_function}
        P_{K, \bX}(\bx)
        = \sqrt{K(\bx, \bx)-K(\bX, \bx)\trp K(\bX, \bX)^{-1}K(\bX, \bx)}.
    \end{align}
    \end{subequations}
\end{theorem}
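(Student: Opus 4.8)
The plan is to recognize the unregularized interpolant $s_\fun^0$ as an orthogonal projection inside $\cH_K$ and then apply the Cauchy--Schwarz inequality. The one tool not yet stated explicitly but immediately available is the reproducing property: specializing the RKHS inner product to $\fun' = K(\bx', \cdot)$ (that is, $m' = 1$, $\omega_1' = 1$, $\bx_1' = \bx'$) gives $\innerprod{\fun, K(\bx', \cdot)}_{\cH_K} = \fun(\bx')$ for every $\fun \in \cH_K$ and every $\bx' \in \real^{n_x}$.

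First I would introduce the finite-dimensional subspace $V = \mathrm{span}\{K(\bx_j, \cdot)\}_{j=1}^m \subset \cH_K$ and show that $s_\fun^0 = P_V \fun$, the orthogonal projection of $\fun$ onto $V$. By the Representer Theorem (\Cref{thm:representer}) with $\gamma = 0$ the minimizer $s_\fun^0$ belongs to $V$ and interpolates the data, so the reproducing property gives $\innerprod{\fun - s_\fun^0, K(\bx_j, \cdot)}_{\cH_K} = \fun(\bx_j) - s_\fun^0(\bx_j) = 0$ for each $j$. Thus $\fun - s_\fun^0 \perp V$, which is precisely the characterization of $P_V \fun$.

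With this in hand I would expand the pointwise error via the reproducing property and remove the in-$V$ component of the representer using orthogonality,
\begin{align*}
\fun(\bx) - s_\fun^0(\bx) = \innerprod{\fun - s_\fun^0, K(\bx, \cdot)}_{\cH_K} = \innerprod{\fun - s_\fun^0, K(\bx, \cdot) - P_V K(\bx, \cdot)}_{\cH_K},
\end{align*}
where the final equality holds because $\fun - s_\fun^0$ is orthogonal to $P_V K(\bx, \cdot) \in V$. Applying Cauchy--Schwarz and the fact that an orthogonal projection is norm-nonincreasing, $\norm{\fun - s_\fun^0}_{\cH_K} = \norm{\fun - P_V \fun}_{\cH_K} \leq \norm{\fun}_{\cH_K}$, immediately produces the bound with the power function replaced by $\norm{K(\bx, \cdot) - P_V K(\bx, \cdot)}_{\cH_K}$.

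The last step is to verify that this residual norm equals $P_{K, \bX}(\bx)$. Writing $P_V K(\bx, \cdot) = \sum_{j=1}^m b_j K(\bx_j, \cdot)$, the normal equations for the projection read $K(\bX, \bX)\bb = K(\bX, \bx)$, so $\bb = K(\bX, \bX)^{-1} K(\bX, \bx)$. The Pythagorean identity $\norm{K(\bx, \cdot) - P_V K(\bx, \cdot)}_{\cH_K}^2 = \norm{K(\bx, \cdot)}_{\cH_K}^2 - \norm{P_V K(\bx, \cdot)}_{\cH_K}^2$, combined with $\norm{K(\bx, \cdot)}_{\cH_K}^2 = K(\bx, \bx)$ and $\norm{P_V K(\bx, \cdot)}_{\cH_K}^2 = \bb\trp K(\bX, \bX)\bb = K(\bX, \bx)\trp K(\bX, \bX)^{-1} K(\bX, \bx)$, reproduces exactly the quantity under the square root in \cref{eq:power_function}. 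I expect the main obstacle to be the projection bookkeeping: cleanly establishing $s_\fun^0 = P_V \fun$ and the coefficient equation $K(\bX, \bX)\bb = K(\bX, \bx)$, both of which rely on the invertibility of $K(\bX, \bX)$ furnished by strict positive definiteness.
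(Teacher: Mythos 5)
Your proof is correct. The paper does not prove this theorem itself---it defers to the cited reference \cite[Thm.~4.9, Props.~4.11--4.12]{GSantin_2018a}---and your argument (the reproducing property, the orthogonal-projection characterization of the unregularized interpolant, Cauchy--Schwarz, and the Pythagorean computation of $\norm{K(\bx,\cdot) - P_V K(\bx,\cdot)}_{\cH_K}$) is essentially that standard cited proof, down to identifying the power function as the norm of the projection residual of $K(\bx,\cdot)$ onto $\mathrm{span}\set{K(\bx_j,\cdot)}_{j=1}^{m}$.
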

See, e.g., \cite[Thm 4.9]{GSantin_2018a} for a proof of the bound \cref{eq:power_function_bound} and \cite[Prop. 4.11, Prop. 4.12]{GSantin_2018a} for the characterization~\cref{eq:power_function} of the power function.
While this error bound is for the unregularized, fully interpolatory case, it is still useful in practice for the regularized case when $\gamma > 0$ is small.

\subsection{Vector-valued kernel interpolation}\label{sec:vector_valued_interpolation}

Kernel interpolation can be readily extended to vector-valued functions.
The simple extension presented here, which is sufficient for our use case, is a special case of a more general extension relying on matrix-valued kernels
(see, e.g., \cite{CAMicchelli_MPontil_2005a,GSantin_BHaasdonk_2021a}).

Consider the vector-valued function $\bfun:\real^{n_x}\to\real^{n_y}$, $n_y>1$.
As before, let $\set{\bx_j}_{j=1}^m\subset \real^{n_x}$ be pairwise distinct and suppose $\by_j=\bfun(\bx_j)\in \real^{n_y}$ for $j=1,\ldots,m$.
Also let $\fun_i:\real^{n_x}\to \real$ denote the $i$-th component of $\bfun$,
$y_{j, i}$ be the $i$-th component of $\by_j$, and define the input and output data matrices
\begin{align}
    \bX &= [~\bx_1~~\cdots~~\bx_m~]\in\real^{n_x\times m},
    &
    \bY &= [~\by_1~~\cdots~~\by_m~]\in\real^{n_y\times m}.
\end{align}
We construct a vector-valued regularized kernel interpolant $\bs_{\bfun}^\gamma$ by fitting scalar-valued kernel interpolants to each component $\fun_i$ of $\bfun$.
Consequently, $\bs_{\bfun}^\gamma$ is an element of the $n_y$-fold Cartesian product $\cH_K^{n_y} \coloneqq \cH_K\times \dots \times\cH_K$, which has the inner product
$\innerprod{\bu, \bw}_{\cH_K^{n_y}} = \sum_{i=1}^{n_y}\innerprod{u_i, w_i}_{\cH_K}$ for all $\bu = (u_1,\ldots,u_{n_y})\in \cH_K^{n_y}$ and $\bw = (w_1,\ldots,w_{n_y}) \in \cH_K^{n_y}$.
The regularized kernel interpolant constructed in this manner solves the optimization problem
\begin{align}\label{eq:vv_kernel_optimization_problem}
    \min_{\bs\in \cH_K^{n_y}} \; \sum_{j=1}^m \norm{\by_j - \bs(\bx_j)}_2^2 + \gamma\norm{\bs}_{\cH_K^{n_y}}^2,
\end{align}
where $\norm{\cdot}_2$ denotes the Euclidean $2$-norm and $\norm{\cdot}_{\cH_K^{n_y}}^2 = \innerprod{\cdot, \cdot}_{\cH_K^{n_y}}$.
To see this, note that the objective function in \cref{eq:vv_kernel_optimization_problem} can be rewritten as
\begin{align*}
    \sum_{j=1}^m \norm{\by_j - \bs(\bx_j)}_2^2 + \gamma\norm{\bs}_{\cH_K^{n_y}}^2
    =
    \sum_{i=1}^{n_y} \left(\sum_{j=1}^m \left(y_{j, i} - s_i(\bx_j)\right)^2 + \gamma\norm{s_i}_{\cH_K}^2\right),
    \quad
    \bs(\bx) = \left[\begin{array}{c}s_1(\bx)\\ \vdots \\ s_{n_y}(\bx)\end{array}\right],
\end{align*}
and therefore \cref{eq:vv_kernel_optimization_problem} decouples into ${n_y}$ independent scalar-valued regularized interpolation problems:
\begin{align}
    \min_{\bs_i\in \cH_K} \; \sum_{j=1}^m(y_{j, i} - s_i(\bx_j))^2 + \gamma \norm{s_i}_{\cH_K}^2,
    \qquad
    i=1, \dots, n_y.
\end{align}
\Cref{thm:representer} can then be applied to each subproblem to yield scalar-valued interpolants $s_{\fun_i}^\gamma$ of the form
\begin{subequations}
\begin{align}
    s_{\fun_i}^\gamma(\bx)
    = \sum_{j=1}^m\omega_{i, j}K(\bx_j, \bx)
    = \bomega_i\trp K(\bX, \bx),
\end{align}
where each coefficient vector $\bomega_1,\ldots,\bomega_{n_y}\in \real^{m}$ solves an $m \times m$ linear system,
\begin{align}
    \big(K(\bX, \bX) + \gamma \bI\big)
    \bomega_i
= \left[\begin{array}{c}
        y_{i, 1} \\ \vdots \\ y_{i, m}
    \end{array}\right],
    \quad
    i = 1,\ldots,p.
\end{align}
\end{subequations}
As before, $\gamma \ge 0$ is a given regularization parameter.
An interpolant of $\bfun$ can then be defined by
\begin{align*}
    \bs_{\bfun}^\gamma(\bx)
    = [~s_{\fun_1}^\gamma(\bx)~~\cdots~~s_{\fun_{n_y}}^\gamma(\bx)~]\trp.
\end{align*}
We summarize with the following corollary of \Cref{thm:representer} and a straightforward extension of \Cref{thm:power_function_bound}.

\begin{corollary}[Vector Representer Theorem]\label{cor:vv_representer}
    If $\gamma>0$, then the minimization problem \cref{eq:vv_kernel_optimization_problem} has a solution of the form
    \begin{subequations}\label{eq:vv_kernel_interpolant-both}
    \begin{align}\label{eq:vv_kernel_interpolant}
        \bs_{\bfun}^\gamma(\bx)
        = \bOmega\trp K(\bX, \bx),
    \end{align}
    where the coefficient matrix $\bOmega\in \real^{m\times {n_y}}$ solves the linear system
    \begin{align}\label{eq:vv_coefficient_equation}
        \big(K(\bX, \bX) + \gamma \bI\big)
        \bOmega
        =
        \bY\trp.
    \end{align}
    \end{subequations}
    Moreover, if $K$ is strictly positive definite, $s_{\bfun}^\gamma$ is the unique minimizer.
\end{corollary}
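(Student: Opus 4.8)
The plan is to derive \Cref{cor:vv_representer} directly from the scalar-valued \Cref{thm:representer} by exploiting the decoupling of the objective in \cref{eq:vv_kernel_optimization_problem} that is already set up in the paragraph preceding the corollary. First I would observe that the key algebraic identity
\begin{align*}
    \sum_{j=1}^m \norm{\by_j - \bs(\bx_j)}_2^2 + \gamma\norm{\bs}_{\cH_K^{n_y}}^2
    = \sum_{i=1}^{n_y}\left(\sum_{j=1}^m (y_{j,i} - s_i(\bx_j))^2 + \gamma\norm{s_i}_{\cH_K}^2\right)
\end{align*}
splits the sum over vector components $i$ and shows that the $i$-th inner term depends only on the scalar component $s_i\in\cH_K$. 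Because each summand is nonnegative and involves disjoint sets of variables, minimizing the left-hand side over $\bs\in\cH_K^{n_y}$ is equivalent to minimizing each summand independently over $s_i\in\cH_K$. This justifies replacing the single vector-valued problem by $n_y$ scalar problems.

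Next I would apply \Cref{thm:representer} to the $i$-th scalar subproblem with data $\set{\bx_j}_{j=1}^m$ and scalar targets $y_{j,i}=\fun_i(\bx_j)$. The theorem guarantees a minimizer of the form $s_{\fun_i}^\gamma(\bx)=\bomega_i\trp K(\bX,\bx)$ whose coefficient vector $\bomega_i\in\real^m$ solves $(K(\bX,\bX)+\gamma\bI)\bomega_i = [~y_{1,i}~~\cdots~~y_{m,i}~]\trp$. Stacking these into the interpolant $\bs_{\bfun}^\gamma=[~s_{\fun_1}^\gamma~~\cdots~~s_{\fun_{n_y}}^\gamma~]\trp$ and assembling the coefficient vectors as columns of $\bOmega=[~\bomega_1~~\cdots~~\bomega_{n_y}~]\in\real^{m\times n_y}$, I would verify that $\bs_{\bfun}^\gamma(\bx)=\bOmega\trp K(\bX,\bx)$, which is \cref{eq:vv_kernel_interpolant}. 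The $n_y$ scalar linear systems share the same matrix $K(\bX,\bX)+\gamma\bI$ and differ only in their right-hand sides, so writing them simultaneously yields $(K(\bX,\bX)+\gamma\bI)\bOmega = \bY\trp$, since the $i$-th column of $\bY\trp$ is exactly $[~y_{1,i}~~\cdots~~y_{m,i}~]\trp$; this is \cref{eq:vv_coefficient_equation}.

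For uniqueness under strict positive definiteness, I would note that strict positive definiteness makes $K(\bX,\bX)$ positive definite for distinct columns, hence $K(\bX,\bX)+\gamma\bI$ is nonsingular for every $\gamma\ge 0$. The uniqueness clause of \Cref{thm:representer} then gives that each scalar subproblem has a unique minimizer $s_{\fun_i}^\gamma$; since the decoupled objective is a sum of strictly convex functionals in the separate variables, the unique componentwise minimizers assemble into the unique minimizer $\bs_{\bfun}^\gamma$ of \cref{eq:vv_kernel_optimization_problem}.

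I do not anticipate a genuine obstacle here, since the result is a corollary whose scaffolding is already laid out in the text; the only point requiring modest care is the bookkeeping between the index conventions for $\by_j$, the components $y_{j,i}$, and the columns of $\bY\trp$, so that the right-hand sides of the stacked linear systems line up correctly to give $\bY\trp$ rather than $\bY$ or a permutation thereof. Verifying that the stacked coefficient equation is genuinely equivalent to the $n_y$ separate equations — rather than merely implied by them — is immediate because the systems are uncoupled across columns of $\bOmega$.
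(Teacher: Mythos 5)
Your proposal is correct and follows essentially the same route as the paper: the text immediately preceding the corollary performs exactly this decoupling of the vector-valued objective into $n_y$ independent scalar problems, applies \Cref{thm:representer} to each, and stacks the resulting coefficient vectors $\bomega_i$ into the columns of $\bOmega$ so that the shared system matrix $K(\bX,\bX)+\gamma\bI$ yields \cref{eq:vv_coefficient_equation} with right-hand side $\bY\trp$. Your added remarks on the uniqueness argument and the column bookkeeping are consistent with the paper's intent and introduce no gaps.
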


\begin{corollary}\label{cor:vector_power_function_bound}
Let $\bfun\in \cH_K^{n_y}$ and $\bM\in \real^{{n_y}\times {n_y}}$ be a symmetric positive definite weighting matrix with Cholesky factorization $\bM = \bL\bL\trp$.
If $\bs_{\bfun}^0\in \cH_K^{n_y}$ is an (unregularized) vector-valued interpolant of $\bfun$ of the form \cref{eq:vv_kernel_interpolant-both} corresponding to the pairwise distinct data $\set{\bx_i}_{i=1}^m\subset \real^{n_x}$ and $\by_i = \bfun(\bx_i)\in \real^{n_y}$, then
\begin{align}\label{eq:vector_power_function_bound}
    \norm{\bfun(\bx)-\bs_{\bfun}^0(\bx)}_\bM \leq P_{K, \bX}(\bx)\norm{\bL}_2\norm{\bfun}_{\cH_K^{n_y}} \qquad \forall \; \bx \in \real^{n_x}.
\end{align}
\begin{proof}
Since $\bs_{\bfun}^0$ interpolates $\bfun$ component-wise using the same kernel $K$ and interpolation points $\set{\bx_i}_{i=1}^{m}$, applying \Cref{thm:power_function_bound} yields
\begin{align*}
    \norm{\bfun(\bx)-\bs_{\bfun}^0(\bx)}_\bM^2
    = \norm{\bL\trp(\bfun(\bx)-\bs_{\bfun}^0(\bx))}_2^2
    &\leq \norm{\bL}_2^2\norm{\bfun(\bx)-\bs_{\bfun}^0(\bx)}_2^2
    = \norm{\bL}_2^2\sum_{i=1}^{n_y} |\fun_i(\bx) - s_{\fun_i}^0(\bx)|^2
    \\
    &\leq \norm{\bL}_2^2\sum_{i=1}^{n_y}  P_{K, \bX}(\bx)^2\norm{\fun_i}_{\cH_K}^2
= P_{K, \bX}(\bx)^2\norm{\bL}_2^2\norm{\bfun}_{\cH_K^p}^2.
\end{align*}
\end{proof}
\end{corollary}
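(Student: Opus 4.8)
The plan is to reduce the weighted, vector-valued bound \cref{eq:vector_power_function_bound} to the scalar power-function bound of \Cref{thm:power_function_bound} applied component-wise, using the Cholesky factor $\bL$ to translate between the $\bM$-norm and the Euclidean norm and using submultiplicativity of the spectral norm to peel off $\bL$.

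First I would rewrite the weighted norm in terms of the Cholesky factor. Since $\bM = \bL\bL\trp$, any $\bz\in\real^{n_y}$ satisfies $\norm{\bz}_\bM^2 = \bz\trp\bM\bz = \bz\trp\bL\bL\trp\bz = \norm{\bL\trp\bz}_2^2$. Applying this with $\bz = \bfun(\bx)-\bs_{\bfun}^0(\bx)$ converts the left-hand side into a Euclidean norm. Next I would peel off the matrix $\bL\trp$ via submultiplicativity, $\norm{\bL\trp\bz}_2 \le \norm{\bL\trp}_2\norm{\bz}_2 = \norm{\bL}_2\norm{\bz}_2$, where the last equality uses invariance of the spectral norm under transposition.

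The key step is then component-wise application of the scalar bound. Because $\bs_{\bfun}^0$ was constructed in \Cref{sec:vector_valued_interpolation} by fitting unregularized scalar interpolants that share the same kernel $K$ and the same nodes $\bX$, each component $s_{\fun_i}^0$ is exactly the scalar interpolant of $\fun_i$. The hypothesis $\bfun\in\cH_K^{n_y}$ guarantees $\fun_i\in\cH_K$, so \Cref{thm:power_function_bound} gives $|\fun_i(\bx)-s_{\fun_i}^0(\bx)|\le P_{K,\bX}(\bx)\norm{\fun_i}_{\cH_K}$ for every $i$. Expanding the Euclidean norm component-wise and summing these bounds yields $\sum_{i=1}^{n_y}|\fun_i(\bx)-s_{\fun_i}^0(\bx)|^2 \le P_{K,\bX}(\bx)^2\sum_{i=1}^{n_y}\norm{\fun_i}_{\cH_K}^2$. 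I would finish by recognizing $\sum_{i=1}^{n_y}\norm{\fun_i}_{\cH_K}^2 = \norm{\bfun}_{\cH_K^{n_y}}^2$ from the definition of the Cartesian-product inner product, chaining the three inequalities, and taking square roots.

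The argument is essentially routine; the one point meriting care is that a single power function $P_{K,\bX}(\bx)$ bounds every component simultaneously. This holds because $P_{K,\bX}$ defined in \cref{eq:power_function} depends only on the kernel $K$ and the node matrix $\bX$ and not on the interpolated data, so the same $P_{K,\bX}(\bx)$ applies uniformly across all $n_y$ components and factors cleanly out of the sum, leaving the clean product form on the right-hand side.
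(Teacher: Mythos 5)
Your proposal is correct and follows essentially the same route as the paper's proof: rewriting the $\bM$-norm via the Cholesky factor, peeling off $\bL\trp$ by submultiplicativity of the spectral norm, applying \Cref{thm:power_function_bound} component-wise, and summing to recover $\norm{\bfun}_{\cH_K^{n_y}}$. Your additional remark that $P_{K,\bX}$ depends only on $K$ and $\bX$ (not on the interpolated data), which is what lets a single power function factor out of the sum, is a worthwhile clarification that the paper leaves implicit.
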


In \Cref{sec:nonintrusive_rom}, we use \Cref{cor:vv_representer} to develop a strategy for constructing reduced-order models (ROMs) from data; \Cref{cor:vector_power_function_bound} is used in \Cref{sec:error_estimates} to derive \emph{a posteriori} error estimates for these ROMs.

\subsection{Kernel selection}\label{sec:kernel_selection}

Since a positive-definite kernel $K$ uniquely defines the RKHS $\cH_K$, the choice of kernel determines what form an interpolant can take as well as the approximation power of the optimal interpolant.
We argue for the use of different types of kernels depending on how much information is available about the function $\bfun:\real^{n_x}\to \real^{n_y}$ being interpolated.

\subsubsection{Unknown structure: radial basis function kernels}

If the structure of $\bfun$ is unknown, one effective choice is to generate the kernel using a radial basis function~(RBF).
These general-purpose kernels have the form
\begin{subequations}
\label{eq:rbf-psi-def}
\begin{align}\label{eq:rbf_kernel}
    K(\bx, \bx')=\psi(\epsilon \norm{\bx-\bx'}_2),
\end{align}
where $\psi:\real_{\geq 0}\to\real$ and $\epsilon>0$.
Hence, RBF kernel interpolants are given by
\begin{align}\label{eq:rbf_interpolant}
    \bs_{\bfun}^\gamma(\bx) = \bOmega\trp\bpsi_{\!\epsilon}(\bx),
    \qquad
    \bpsi_{\!\epsilon}(\bx)
    = \begin{bmatrix}
        \psi(\epsilon\norm{\bx_1-\bx}_2) \\
        \vdots \\
        \psi(\epsilon\norm{\bx_m-\bx}_2)
    \end{bmatrix}
    \in\real^{m}.
\end{align}
\end{subequations}
The so-called shape parameter $\epsilon$ is a hyperparameter that should be tuned to achieve optimal performance.
\Cref{tbl:rbf_kernels} provides examples of commonly used RBF generator functions $\psi$.
Note that the cost of evaluating an RBF kernel interpolant is $\mathcal{O}(m(n_x + n_y))$.
A thorough discussion of the use of RBFs in kernel interpolation can be found in, e.g., \cite{GBWright_2003a}.

\begin{table}[t]
    \centering
    \begin{tabular}{r|l}
        Name & $\psi(x)$ \\ \hline
        Gaussian & $\exp(-x^2)$ \\ Basic Mat\'{e}rn & $\exp(-x)$ \\ Inverse Quadratic & $(1 + x^2)^{-1}$ \\ Inverse Multiquadric & $(1 + x^2)^{-1/2}$ \\ Thin Plate Spline & $x^2 \log(x)$ \\
    \end{tabular}
    \caption{Examples of common RBF kernel-generating functions.}
    \label{tbl:rbf_kernels}
\end{table}

\subsubsection{Known structure: feature map kernels}

If the structure of $\bfun$ is known, kernels induced by \emph{feature maps} can often be used to endow the interpolant with matching structure, which can result in more accurate and interpretable approximations than when using general-purpose kernels.
A feature map kernel can be written as
\begin{subequations}
\begin{align}\label{eq:feature_map_kernel}
    K(\bx, \bx') = \bphi(\bx)\trp\bG\bphi(\bx'),
\end{align}
where $\bphi:\real^{n_x}\to \real^{n_\phi}$ is called the feature map and $\bG\in \real^{n_\phi\times n_\phi}$ is a symmetric positive definite weighting matrix.
It can be easily verified that feature map kernels are positive definite kernels (see, e.g., \cite{GSantin_2018a}).
A feature map kernel results in a kernel interpolant of the form
\begin{align}\label{eq:feature_map_interpolant}
    \bs_{\bfun}^\gamma(\bx) &= \bOmega\trp K(\bX, \bx)
    = \underbrace{\bOmega\trp \bphi(\bX)\trp\bG}_{\bC} \bphi(\bx)
    = \bC\bphi(\bx),
\end{align}
\end{subequations}
where $\bphi(\bX) \coloneqq [~\bphi(\bx_1)~~\cdots~~\bphi(\bx_m)~]\in\real^{n_\phi\times m}$.
Importantly, the matrix $\bC\in\real^{n_y\times n_\phi}$ can be computed once and reused repeatedly for online kernel evaluations.
After constructing $\bC$, the cost of evaluating a feature map kernel interpolant is therefore $\mathcal{O}(n_\phi n_y)$, plus the expense of evaluating $\bphi$ once.

The advantage of feature map kernels is that one can imbue $\bs_{\bfun}^\gamma$ with specific structure by designing the feature map $\bphi$ accordingly.
For example, if
\begin{align}\label{eq:quadratic_fm}
\bphi(\bx) = \begin{bmatrix}
    \bx \\ \bx\otimes \bx
\end{bmatrix} \in \real^{n_x+n_x^2},
\end{align}
where $\otimes$ denotes the Kronecker product \cite{vanLoan2000kronecker},
then the associated kernel interpolant can be written as
\begin{align}
    \bs_{\bfun}^\gamma(\bx)
    = \bC_1 \bx + \bC_2 [\bx \otimes \bx],
    \qquad
    \bC = [~\bC_1~~\bC_2~] \in \real^{n_y\times (n_x+n_x^2)}.
\end{align}
Therefore, if it is known that $\bfun$ has linear-quadratic structure, then using a kernel induced by the feature map \cref{eq:quadratic_fm} results in a kernel interpolant that has the same linear-quadratic structure.

\subsubsection{Hybrid approach}

For the purposes of model reduction, it is critical to keep the cost of evaluating the kernel interpolant low.
The cost of evaluating an RBF kernel interpolant \cref{eq:rbf_interpolant} scales with the number of training samples $m$; by contrast, the cost of evaluating a feature map kernel interpolant \cref{eq:feature_map_interpolant} is independent of $m$, but depends on the feature dimension $n_\phi$.
If a feature map that fully specifies the desired structure requires a large $n_\phi$, one alternative is to define a new kernel that sums a less aggressive feature map kernel with an RBF kernel:
\begin{subequations}
\begin{align}\label{eq:hybrid_kernel}
    K(\bx, \bx')
    = c_\phi \bphi(\bx)\trp\bG\bphi(\bx') + c_\psi \psi(\epsilon \norm{\bx-\bx'}_2),
\end{align}
where $c_\phi, c_\psi \in \real_{>0}$ are positive weighting coefficients and $\bphi$ is chosen to keep $n_\phi$ from being too large.
The resulting kernel interpolant then has the form
\begin{align}\label{eq:hybrid_kernel_interpolant}
    \bs_{\bfun}^\gamma(\bx) &= \bOmega\trp K(\bX, \bx)
    = \bOmega\trp\big(c_\phi\bphi(\bX)\trp\bG\bphi(\bx) + c_\psi\bpsi_{\!\epsilon}(\bx)\big)
    = \bC\bphi(\bx)+ c_\psi \bOmega\trp\bpsi_{\!\epsilon}(\bx),
\end{align}
\end{subequations}
where $\bC = c_\phi \bOmega\trp\bphi(\bX)\trp\bG$ now incorporates the weighting coefficient $c_\phi$.
The idea is to use the feature map to incorporate dominant structure while relying on the RBF to approximate additional, potentially expensive terms.
Note that this framework also applies to scenarios where the structure of $\bv$ is only partially known.

As an example, consider the case where $\bv$ is a quartic polynomial, i.e.,
\begin{align}
    \label{eq:quartic-target-function}
    \bv(\bx)
    = \bA_1\bx
    + \bA_2[\bx\otimes \bx]
    + \bA_3[\bx\otimes \bx \otimes \bx]
    + \bA_4[\bx\otimes \bx \otimes \bx\otimes \bx],
\end{align}
where $\bA_1\in\real^{n_x\times n_x}$, $\bA_2\in\real^{n_x\times n_x^2}$, $\bA_3\in\real^{n_x\times n_x^3}$, and $\bA_4\in\real^{n_x\times n_x^4}$.
One option is to fully capture the structure using a quartic feature map,
\begin{align}
    \label{eq:quartic_feature_map}
    \bphi(\bx)
    = \begin{bmatrix}
        \bx \\
        \bx \otimes \bx \\
        \bx \otimes \bx \otimes \bx \\
        \bx \otimes \bx \otimes \bx \otimes \bx
    \end{bmatrix}
    \in\real^{n_x + n_x^2 + n_x^3 + n_x^4}.
\end{align}
However, evaluating the associated kernel interpolant costs $\mathcal{O}(n_x^4 n_y)$ operations, which is quite large for moderate $n_x$.
Using the linear-quadratic feature map \cref{eq:quadratic_fm} decreases $n_\phi$ from $n_x^4$ to $n_x^2$, and supplementing with an RBF kernel results in a kernel interpolant of the form
\begin{align}\label{eq:hybrid_interpolant_example}
    \bs_{\bfun}^\gamma(\bx)
    = \bC_1\bx + \bC_2[\bx\otimes \bx] + c_\psi\bOmega\trp\bpsi_{\!\epsilon}(\bx).
\end{align}
This interpolant does not fully represent the quartic structure of \cref{eq:quartic-target-function}, but it can be evaluated with only $\mathcal{O}((n_x^2 + m)n_y)$ operations.
In this case, the RBF term acts as a type of closure term for structure that is not accounted for by the feature map.

\begin{remark}[Input normalization]\label{remark:kernel_normalization}
In some cases, in particular when using high-order polynomial feature maps, the kernel matrix $K(\bX, \bX)$ used for determining $\bOmega$ may be poorly conditioned.
Increasing the regularization constant $\gamma$ can improve the conditioning of the system \cref{eq:vv_coefficient_equation}, but this can also degrade the accuracy of the resulting kernel interpolant. Applying a normalization to the inputs can help remedy the situation:
for any injective $\bnu:\real^{n_x}\to\real^{n_x}$, if $K$ is positive definite, then the function $K_\bnu:\real^{n_x}\times\real^{n_x}\to \real$ defined by
\begin{align}\label{eq:normalized_kernel}
    K_\bnu(\bx, \bx') = K(\bnu(\bx), \bnu(\bx'))
\end{align}
is also a positive-definite kernel function \cite{GSantin_2018a},
and choosing $\bnu$ judiciously can improve the conditioning of $K_\bnu(\bX,\bX)$ compared to $K(\bX,\bX)$.
A common choice is $\bnu(\bx) = \bSigma^{-1}(\bx - \bar{\bx})$, where $\bSigma = \operatorname{diag}(\bsigma)\in\real^{n_x \times n_x}$ and $\bar{\bx}\in\real^{n_x}$ with components
\begin{align}\label{eq:kernel_normalization_choice}
    \sigma_i = \max_{j}(\bX_{ij}) - \min_{j}(\bX_{ij}),
    \qquad
    \bar{x}_i = \min_{j}(\bX_{ij}),
    \qquad
    i = 1, \dots, d,
\end{align}
which maps the entries of each row of inputs to the interval $[0, 1]$.
In this case, an effective choice for the weighting matrix $\bG$ in feature map kernels is $\bG = (1/n_\phi)\bI$, where $n_\phi$ is the feature map dimension.
\end{remark}

\section{Intrusive projection-based model reduction}\label{sec:intrusive_rom}

We now return to the model reduction setting and give a brief overview of intrusive projection-based ROMs, which inherit certain structure from the systems they emulate.
\Cref{sec:nonintrusive_rom} presents a non-intrusive alternative to intrusive model reduction for which kernel interpolation is the key ingredient and which can be designed to mimic the structure inheritance enjoyed by projection-based ROMs.

\subsection{Generic projection-based reduced-order models}

We consider high-dimensional systems of ordinary differential equations (ODEs) of the form
\begin{align}\label{eq:generic_fom}
    \dt\bq(t) = \bff(\bq(t); \bu(t)),
    \qquad
    \bq(0) = \bq_0(\bmu),
\end{align}
where $\bq: [0, T]\to \real^{n_q}$ is the state,
$\bu:[0, T]\to \real^{n_u}$ is an input function,
$\bff:\real^{n_q} \times \real^{n_u} \to \real^{n_q}$ governs the state evolution,
$\bq_0(\bmu)\in\real^{n_q}$ is the initial condition parameterized by $\bmu\in \cD \subset \real^{n_\mu}$, and $T > 0$ is the final desired simulation time.
Models of this form often arise from semi-discretizations of time-dependent partial differential equations (PDEs), in which case the large state dimension $n_q$ corresponds to the fidelity of the underlying mesh.
We call \cref{eq:generic_fom} the full-order model (FOM).
For ease of exposition, we focus on autonomous systems without an input term $\bu$; see \Cref{rmk:input_terms} for discussion on incorporating input terms.

A ROM for \cref{eq:generic_fom} is a low-dimensional system of ODEs whose solution can be used to approximate the FOM state $\bq(t)$.
To that end, we consider a low-dimensional state approximation,
\begin{align}\label{eq:generic_decoder}
    \bq(t) \approx \bg(\tbq(t)),
\end{align}
where $\bg:\real^{r}\to\real^{n_q}$ and $\tbq:[0,T]\to\real^{r}$ is the reduced-order state, with $r \ll n_q$.
The function $\bg$ represents a decompression operation, mapping from reduced coordinates to the original high-dimensional space.
We assume the existence of a corresponding compression map $\bh:\real^{n_q}\to\real^{r}$, mapping high-dimensional states to reduced coordinates, such that $\bh \circ \bg :\real^r\to \real^r$ is the identity.
Importantly, $(\bg\circ\bh)^2 = \bg \circ (\bh \circ \bg) \circ \bh = \bg \circ \bh$, i.e., $\bg \circ \bh :\mathbb{R}^{n_q}\to\mathbb{R}^{n_q}$ is a projection.
The evolution for the reduced state $\tbq(t)$ is then given by
\begin{align}
    \dt\tbq(t)
    = \dt\bh(\bg(\tbq(t)))
    = \bh'(\bg(\tbq(t)))\dt\bg(\tbq(t))
    \approx \bh'(\bg(\tbq(t)))\bff(\bg(\tbq(t))),
\end{align}
in which $\bh':\real^{n_q}\to\real^{n_q\times n_q}$ is the Jacobian of $\bh$ and where the final step comes from inserting the approximation \cref{eq:generic_decoder} into the FOM \cref{eq:generic_fom}.
The resulting system
\begin{align}
    \label{eq:generic_rom-nostructure}
    \dt\tbq(t)
    = \bh'(\bg(\tbq(t)))\bff(\bg(\tbq(t))),
    \qquad
    \tbq(0) = \bh(\bq_0(\bmu))
\end{align}
is the projection-based ROM for \cref{eq:generic_fom} corresponding to $\bg$ and $\bh$.

As written, \cref{eq:generic_rom-nostructure} is not highly practical because it involves mapping up to the high-dimensional state space, performing computations in that space, then compressing the results.
However, for many common choices of $\bff$, $\bg$, and $\bh$, \cref{eq:generic_rom-nostructure} simplifies in such a way that all computations can be performed in the reduced space, as we will demonstrate shortly.

\subsection{Linear and quadratic dimension reduction}

Classical model reduction methods typically define $\bg$ and $\bh$ as affine functions.
In this work, we consider a slightly generalized approximation introduced in \cite{jain2017quadraticmanifolds} and leveraged in \cite{JBarnett_CFarhat_2022a,RGeelen_LBalzano_SWright_KWillcox_2024a,RGeelen_SWright_KWillcox_2023a,PSchwerdtner_BPeherstorfer_2024a}: let
\begin{align}
    \label{eq:quadratic_decoder}
    \bg(\tbq) = \obq + \bV\tbq + \bW[\tbq \otimes \tbq],
\end{align}
where $\obq \in \real^{n_q}$ is a fixed reference vector,
$\bV\in\real^{n_q\times r}$ has orthonormal columns, and $\bW\in\real^{n_q\times r^2}$ satisfies $\bV\trp\bW = \bf0$.
This approximation defines an $r$-dimensional quadratic manifold embedded in $\real^{n_q}$.
An appropriate compression map corresponding to \cref{eq:quadratic_decoder} is given by
\begin{align}
    \bh(\bq) &= \bV\trp(\bq - \obq),
\end{align}
which has Jacobian $\bh'(\bq) = \bV\trp$ and satisfies
\begin{align}
    \begin{aligned}
    (\bh \circ \bg)(\tbq)
&= \bV\trp\big((\obq + \bV\tbq + \bW[\tbq \otimes \tbq]) - \obq\big)
    = \bV\trp\obq + \tbq - \bV\trp\obq
    = \tbq,
    \end{aligned}
\end{align}
since $\bV\trp\bV$ is the identity and $\bV\trp$ annihilates $\bW$.
With $\bg$ and $\bh$ thus defined, the ROM \cref{eq:generic_rom-nostructure} becomes
\begin{align}
    \label{eq:generic_rom-withgandh}
    \dt\tbq(t)
    = \tbf(\tbq(t))
    \coloneqq \bV\trp\bff\big(\obq + \bV\tbq(t) + \bW [\tbq(t) \otimes \tbq(t)]\big),
    \qquad
    \tbq(0) = \bV\trp(\bq_0(\bmu) - \obq),
\end{align}
a system of $r \ll n_q$ ODEs defined by the function $\tbf:\real^{r}\to\real^{r}$.

The choices of $\obq$, $\bV$, and $\bW$ dictate the quality of the approximation \cref{eq:quadratic_decoder} and of the resulting ROM~\cref{eq:generic_rom-withgandh}.
To make an informed selection, we assume access to a limited set of training data: given a set of training parameters $\bmu_1,\ldots,\bmu_M \subset \cD$ and observation times $t_0,t_1,\ldots,t_{n_t}$, let
\begin{align}
    \label{eq:snapshots}
    \bq_k^{(\ell)}
    \coloneqq \bq(t_k; \bmu_\ell)
    \in \real^{n_q},
    \qquad
    \begin{aligned}
    \ell &= 1,\ldots,M,
    \\
    k &= 0, 1, \ldots, n_t,
    \end{aligned}
\end{align}
which are snapshots of the full-order state solution to the FOM \cref{eq:generic_fom}.
The reference vector $\obq$ is usually set to zero, the initial condition at a fixed training parameter value, or the average snapshot, i.e.,
\begin{align}
    \obq = \frac{1}{M n_t}\sum_{\ell=1}^{M}\sum_{k=0}^{n_t}\bq_{k}^{(\ell)}.
\end{align}
The model reduction framework developed in \Cref{sec:nonintrusive_rom} applies for any $\obq$, $\bV$ and $\bW$ such that $\bV\trp\bV = \bI$ and $\bV\trp\bW = \bf0$, but we focus on two best-practice cases.

First, if $\bW = \bf0$, the manifold defined by $\bg$ has no curvature and reduces to an affine subspace (or a linear subspace if $\obq = \bf0$) of $\real^{n_q}$.
In this case, we select $\bV$ using proper orthogonal decomposition (POD)~\cite{berkooz1993pod,graham1999optimal,sirovich1987turbulence}.
Define
\begin{align}
    \bQ
    \coloneqq \begin{bmatrix}
        (\bq_0^{(1)} - \obq)
        & \cdots &
        (\bq_{n_t}^{(1)} - \obq)
        &
        (\bq_0^{(2)} - \obq)
        & \cdots &
        (\bq_{n_t}^{(M)} - \obq)
    \end{bmatrix}
    \in \real^{n_q \times M(n_t + 1)},
\end{align}
the matrix of snapshots stacked column-wise and shifted by the reference snapshot.
The rank-$r$ POD basis matrix $\bV$ is given by the first $r$ left singular vectors of $\bQ$.
With this choice, $\bg\circ\bh$ is the optimal $r$-dimensional approximator for the (shifted) training snapshots in an $L^2$ sense.

Second, to construct a nonzero $\bW$, we use the greedy-optimal quadratic manifold (QM) approach of \cite{PSchwerdtner_BPeherstorfer_2024a}.
This method iteratively selects the columns of $\bV$ from the left singular vectors of $\bQ$ and solves a least-squares problem to determine $\bW$,
\begin{align}
    \label{eq:qm_optimization_problem}
    \min_{\bv_i,\bW}\;\norm{(\bI-\bV\bV\trp)\bQ - \bW[\bV\trp\bQ \odot \bV\trp\bQ]}_F^2 + \rho \norm{\bW}_F^2,
\end{align}
where $\bv_i$ is the final column of $\bV$ and all other columns are fixed from previous iterations.
Here, $\odot$ indicates the Khatri--Rao (column-wise Kronecker) product, and $\rho \geq 0$ is a scalar regularization parameter.
Traditional POD always sets $\bv_i$ to the $i$-th left singular vector, but here each $\bv_i$ can be chosen from among any of the left singular vectors that have not yet been selected, which can lead to substantial accuracy gains.

\begin{remark}[Kronecker redundancy]\label{remark:compressed-kronecker}
The product $\tbq\otimes \tbq$ contains redundant terms, i.e., $\tilde{q}_{i}\tilde{q}_{j}$ appears twice for each $i\neq j$, which means two columns of $\bW\in\real^{r\times r^{2}}$ act on the same quadratic state interaction in the product $\bW[\tbq\otimes\tbq]$.
As a consequence, the learning problem \cref{eq:qm_optimization_problem} has infinitely many solutions.
In practice, this issue is avoided by replacing $\otimes$ in \cref{eq:quadratic_decoder} with a compressed Kronecker product $\tilde{\otimes}$, defined by
\begin{align}
    \tbq\,\tilde{\otimes}\,\tbq
    \coloneqq \left[
        ~\tilde{q}_1^2~
        ~\tilde{q}_1 \tilde{q}_2~
        ~\tilde{q}_2^2~
        ~\dots~
        ~\tilde{q}_{r-1} \tilde{q}_r~
        ~\tilde{q}_r^2~
    \right]\trp \in \real^{r(r+1)/2},
\end{align}
which leads to a matrix $\tbW\in\real^{r\times r(r+1)/2}$ such that $\bW[\tbq\otimes\tbq] = \tbW[\tbq\,\tilde{\otimes}\,\tbq]$ for all $\tbq\in\real^{r}$.
Then, if $\odot$ applies $\tilde{\otimes}$ column-wise, the optimization \cref{eq:qm_optimization_problem} has a unique solution.
Similar adjustments can be made for higher-order Kronecker products.
\end{remark}

\subsection{Intrusive reduced-order models for quadratic systems}

The key observation in projection-based model reduction is that projection preserves certain structure.
Suppose that the function $\bff$ defining the dynamics of the FOM \cref{eq:generic_fom} has linear-quadratic structure, i.e.,
\begin{align}
    \label{eq:linearquadratic_fom}
    \dt\bq(t)
    = \bff(\bq(t))
    \coloneqq \bA\bq(t) + \bH[\bq(t) \otimes \bq(t)],
\end{align}
where $\bA\in \real^{n_q\times n_q}$, $\bH\in\real^{n_q\times n_q^2}$.
It is assumed that $\bH$ is symmetric in the sense that $\bH[\bq\otimes\bp] = \bH[\bp\otimes \bq]$ for all $\bq, \bp\in \real^{n_q}$.
Models with quadratic structure arise from quadratic PDEs, but can also result from applying lifting transformations to models with other structure \cite{kramer2019lifting,EQian_BKramer_BPeherstorfer_KWillcox_2020a}.
With a linear state approximation ($\obq = \bf0$ and $\bW = \bf0$), the ROM \cref{eq:generic_rom-withgandh} can be written as
\begin{align}
    \label{eq:linear_qm_rom}
    \dt\tbq(t)
    = \tbA\tbq(t) + \tbH[\tbq(t) \otimes \tbq(t)],
    \qquad
    \tbq(0) = \bV\trp\bq_0(\bmu),
\end{align}
in which $\tbA = \bV\trp\bA\bV \in \real^{r \times r}$ and $\tbH = \bV\trp\bH[\bV\otimes\bV]\in\real^{r \times r^2}$.
Constructing \cref{eq:linear_qm_rom} is an intrusive process because $\tbA$ and $\tbH$ depend explicitly on $\bA$ and $\bH$; however, we need not have access to $\bA$ and $\bH$ to observe that the quadratic structure is preserved.

In the QM case ($\bW\neq\bf0$, but still with $\obq = \bf0$ for convenience), the ROM \cref{eq:generic_rom-withgandh} has quartic dynamics,
\begin{align}
    \label{eq:quadratic_qm_rom}
    \begin{aligned}
    \dt \tbq(t)
    &= \tbA\tbq(t)
    + \tbH_{2}[\tbq(t)\otimes\tbq(t)]
    + \tbH_{3}[\tbq(t)\otimes\tbq(t)\otimes\tbq(t)]
    + \tbH_{4}[\tbq(t)\otimes\tbq(t)\otimes\tbq(t)\otimes\tbq(t)],
    \\
    \tbq(0) &= \bV\trp\bq_0(\bmu),
    \end{aligned}
\end{align}
where
$\tbA = \bV\trp \bA \bV$,
$\tbH_{2} = \bV\trp\bA\bW + \bV\trp\bH(\bV\otimes \bV)$,
$\tbH_{3} = \bV\trp\bH(\bV\otimes\bW + \bW\otimes\bV)$, and
$\tbH_{4} = \bV\trp\bH(\bW\otimes\bW)$.
Again, this process is intrusive, but the key result is that if one knows the structure of the FOM dynamics, one can also deduce the structure of the projection-based ROM.
See \apdxref{appendix:fullquadstructure} for the case when $\obq\neq\bf0$, in which a constant term appears in the reduced dynamics.

\section{Non-intrusive model reduction via kernel interpolation}\label{sec:nonintrusive_rom}

This section leverages regularized kernel interpolation to construct ROMs akin to \cref{eq:generic_rom-withgandh}, denoted
\begin{align}
    \label{eq:nonintrusive_rom}
    \dt\hbq(t) = \hbf(\hbq(t)),
    \qquad
    \hbq(0) = \bV\trp(\bq_0(\bmu) - \obq),
\end{align}
where $\hbq:[0,T]\to\real^{r}$ and $\hbf:\real^{r}\to\real^{r}$.
The structure of $\hbf$ can be informed by intrusive projection, but unlike projection, defining $\hbf$ through kernel interpolation does not require access to FOM operators such as $\bA$ or $\bH$ in \cref{eq:linearquadratic_fom}.
We use the notation $\hat{\cdot}$ to mark non-intrusive objects and differentiate from intrusive objects, which are marked with $\tilde{\cdot}$.

\subsection{Kernel reduced-order models}

We pose the problem of learning an appropriate $\hbf$ for the ROM \cref{eq:nonintrusive_rom} as a regression, which requires data for the state $\hbq(t)$ and its time derivative.
For the former, we reduce the FOM snapshots \cref{eq:snapshots} using the compression map $\bh$, that is,
\begin{align}
    \label{eq:reduced_state_data}
    \hbq_{k}^{(\ell)}
    \coloneqq \bh(\bq_{k}^{(\ell)})
    = \bV\trp (\bq_{k}^{(\ell)}-\obq)
    \in \real^{r}.
\end{align}
If the time step between observations is sufficiently small, an accurate approximation for the time derivatives of the state can be computed from finite differences of the reduced states, for example,
\begin{align}
    \dot{\hbq}_{k}^{(\ell)}
    \coloneqq
    \frac{\hbq_{k}^{\ell} - \hbq_{k-1}^{\ell}}{t_{k} - t_{k-1}}
    \approx \dt\hbq(t)\big|_{t=t_k}.
\end{align}
The ROM function $\hbf$ can then be defined as the solution to a minimization problem,
\begin{align}
    \label{eq:generic_nonintrusive_regression}
    \hbf
    = \underset{\bs \in S}{\arg \min} \; \sum_{\ell=1}^M\sum_{k=0}^{n_t}
    \norm{\dot{\hbq}_{k}^{(\ell)} - \bs(\hbq_{k}^{(\ell)})}_2^2
    + R(\bs),
\end{align}
where $S$ is some set of functions and $R:S\to \real_{\ge 0}$ is a regularization function.

The generic minimization \cref{eq:generic_nonintrusive_regression} encompasses several data-driven approaches which each use different choices for the space $S$ and the regularizer $R$.
By defining a kernel $K$ and an associated RKHS $S = \mathcal{H}_{K}^{r}$, and setting $R(\bs) = \gamma\|\bs\|_{\mathcal{H}_{K}^{r}}$, we obtain a vector regularized kernel interpolation problem,
\begin{align}
    \label{eq:kernel_nonintrusive_regression}
    \hbf
    = \underset{\bs \in \cH_K^r}{\arg\min} \; \sum_{\ell=1}^M\sum_{k=0}^{n_t}
    \norm{\dot{\hbq}_{k}^{(\ell)} - \bs(\hbq_{k}^{(\ell)})}_2^2
    + \gamma \norm{\bs}_{\cH_K^r}^2,
\end{align}
which is \cref{eq:vv_kernel_optimization_problem} with $\bx_j = \hbq_{k}^{(\ell)}$, $\by_j = \dot{\hbq}_{k}^{(\ell)}$ and $m = M (n_t + 1)$ after some minor reindexing for $k$ and $\ell$.
\Cref{cor:vv_representer} gives an explicit representation for $\hbf$, resulting in the ROM
\begin{subequations}\label{eq:kernel-rom}
\begin{align}
    \label{eq:kernel-rom-dynamics}
    \dt\hbq(t)
    = \hbf(\hbq(t))
    \coloneqq \bOmega\trp K(\hbQ, \hbq(t)),
    \qquad
    \hbq(0) = \bV\trp(\bq_0(\bmu) - \obq),
\end{align}
where $\bOmega\in\real^{M(n_t+1) \times r}$ solves the linear system
\begin{align}
    \label{eq:kernelrom-solution-system}
    \big(K(\hbQ,\hbQ) + \gamma\bI\big)\bOmega
    = \hbZ\trp,
\end{align}
with interpolation input and output matrices
\begin{align}
    \label{eq:training-data-matrices}
    \begin{aligned}
    \hbQ &= \begin{bmatrix}
        \hbq_{0}^{(1)} & \cdots & \hbq_{n_t}^{(1)} &
        \hbq_{0}^{(2)} & \cdots & \hbq_{n_t}^{(M)}
    \end{bmatrix} \in \real^{r\times M(n_t+1)},
    \\
    \hbZ &= \begin{bmatrix}
        \dot{\hbq}_{0}^{(1)} & \cdots & \dot{\hbq}_{n_t}^{(1)} &
        \dot{\hbq}_{0}^{(2)} & \cdots & \dot{\hbq}_{n_t}^{(M)}
    \end{bmatrix} \in \real^{r\times M(n_t+1)}.
    \end{aligned}
\end{align}
\end{subequations}
Note that the cost of evaluating $\hbf$ is $\mathcal{O}(rMn_t)$, plus the cost of evaluating the kernel term $K(\hbQ,\hbq(t))$.

\begin{remark}
    If the time derivatives of the FOM snapshots $\dt \bq_k^{(\ell)}$ are available, the time derivatives of the reduced state can instead be computed as
    \begin{align}
        \dot{\hbq}_{k}^{(\ell)} = \dt \hbq_k^{(\ell)}
        &\approx \dt (\bh(\bq_k^{(\ell)}))
        = \bh'(\bq_k^{(\ell)})\dt \bq_k^{(\ell)}
        = \bV\trp \dt \bq_k^{(\ell)}.
    \end{align}
\end{remark}
\subsection{Specifying structure through kernel design}

We now employ the observations of \Cref{sec:kernel_selection} to endow Kernel ROMs with structure.
If the structure of the FOM function $\bff$ is unknown, an RBF kernel is a reasonable general-purpose choice for $K$.
However, if the structure of $\bff$ is known, a feature map kernel can be employed so that the resulting $\hbf$ has the same structure of the intrusive projection-based ROM function $\tbf$.
This is best shown by example.

\begin{example}
Consider again the quartic QM ROM \cref{eq:quadratic_qm_rom}.
Using the quartic feature map $\bphi$ of \cref{eq:quartic_feature_map} (with $\bx = \hbq$ and $n_x = r$) to define a feature map kernel $K(\hbq, \hbq')=\bphi(\hbq)\trp\bG\bphi(\hbq')$, the Kernel ROM \cref{eq:kernel-rom} takes the form
\begin{align}
    \label{eq:quartic-kernel-rom}
    \begin{aligned}
    \dt\hbq(t)
    &= \bC\bphi(\hbq(t))
    \\
    &= \hbA \hbq(t)
    + \hbH_{2}[\hbq(t)\otimes\hbq(t)]
    + \hbH_{3}[\hbq(t)\otimes\hbq(t)\otimes\hbq(t)]
    + \hbH_{4}[\hbq(t)\otimes\hbq(t)\otimes\hbq(t)\otimes\hbq(t)],
    \end{aligned}
\end{align}
in which $\bC = \bOmega\trp\bphi(\hbQ)\trp\bG = [~\hbA~~\hbH_{2}~~\hbH_{3}~~\hbH_{4}~]$.
This ROM has the same dynamical structure as \cref{eq:quadratic_qm_rom} but can be constructed non-intrusively.
The structure can be tailored by adjusting the feature map: if the FOM~\cref{eq:linearquadratic_fom} is linear ($\bH = \bf0$), then the intrusive QM ROM \cref{eq:quadratic_qm_rom} simplifies to a quadratic form,
\begin{align}
    \dt\tbq(t)
    = \tbA\tbq(t)
    + \tbH_{1}[\tbq(t)\otimes\tbq(t)],
    \qquad
    \tbH_{1} = \bV\trp\bA\bW,
\end{align}
which can be mimicked by a Kernel ROM by employing a linear-quadratic feature map as in \cref{eq:quadratic_fm}.
\end{example}

\begin{remark}[Input terms]\label{rmk:input_terms}
Kernel ROMs can be designed to account for known input terms by including them in the feature map.
Suppose we wish to construct a ROM with the structure
\begin{align}
    \label{eq:rom-with-inputs}
    \dt\tbq(t)
    = \tbf(\tbq(t)) + \tbb(\bu(t)),
\end{align}
where $\tbb:\real^{n_u}\to\real^{r}$ and $\bu:[0, T]\to \real^{n_u}$ model, for example, time-varying boundary conditions or forcing terms.
In this case, we can construct feature maps $\bphi_q$ and $\bphi_u$ which aim to emulate the structures of $\tbf$ and $\tbb$, respectively, and define a concatenated feature map
\begin{align}
    \bphi(\hbq; \bu)
    = \begin{bmatrix}
        \bphi_q(\hbq) \\ \bphi_u(\bu)
    \end{bmatrix}.
\end{align}
The resulting Kernel ROM has the form
\begin{align}
    \dt\hbq(t)
= \bC_q\bphi_q(\hbq(t)) + \bC_u\bphi_u(\bu(t)),
\end{align}
whose structure can be tailored to that of \cref{eq:rom-with-inputs} by designing $\bphi_q$ and $\bphi_u$ appropriately.
\end{remark}

As discussed in \Cref{sec:kernel_selection}, feature map kernels can lead to cost savings over generic kernels.
Let $n_\phi$ be the dimension of the feature map, i.e., $\bphi(\hbq)\in\real^{n_\phi}$.
Because the matrix $\bC \in \real^{r \times n_\phi}$ can be computed once and reused, the cost of evaluating the ROM function $\hbf$ online is $\mathcal{O}(r n_\phi)$.
Hence, if $n_\phi < Mn_t$, a feature map kernel is less expensive to evaluate than a generic kernel.
If $n_\phi > Mn_t$ (e.g., due to a moderate reduced state dimension $r$), it can be beneficial to reduce $n_\phi$ and add a more generic element to the kernel to compensate. For instance, in place of the quartic ROM \cref{eq:quartic-kernel-rom}, we may choose a quadratic feature map and add an RBF term to account for the cubic and quartic nonlinearities, resulting in a ROM of the form
\begin{align}
    \label{eq:linquad-rbf-hybrid}
    \dt\hbq(t)
    = \hbA\hbq(t) + \hbH[\hbq(t) \otimes \hbq(t)]
    + c_\psi \bOmega\trp\bpsi_{\!\epsilon}(\hbq(t)),
\end{align}
where $\bpsi_{\!\epsilon}:\real^{r}\to\real^{M(n_t+1)}$ is as in \cref{eq:rbf-psi-def} and $c_\psi > 0$ is a weighting coefficient as in \cref{eq:hybrid_kernel_interpolant}.
We test ROMs with this hybrid structure in \Cref{sec:numerics}.
Note that this strategy can also apply to cases where the desired ROM structure is only partially known or representable by a feature map kernel.

\subsection{Comparison to operator inference}

Our kernel-based method is philosophically similar to the operator inference (OpInf) framework pioneered in \cite{BPeherstorfer_KWillcox_2016a}, with a few key differences.
Like our method, OpInf stipulates the form of a ROM based on structure that arises from intrusive projection, and the objects defining the ROM are learned from a regression problem of reduced states and corresponding time derivatives.
However, the learning problems in each approach use different candidate function spaces and regularizers, resulting in different ROMs even when the same training data and model structure are used for both procedures.

Generally speaking, OpInf constructs ROMs of the form
\begin{subequations}\label{eq:opinf-rom-general}
\begin{align}
    \dt\hbq(t)
    = \hbO\bphi(\hbq(t))
\end{align}
for a specified feature map $\bphi:\real^{r}\to\real^{n_\phi}$ by solving the regularized residual minimization problem
\begin{align}\label{eq:opinf_regression-general}
    \min_{\hbO\in \real^{r\times n_\phi}} \;
    \sum_{\ell=1}^{M}\sum_{k=0}^{n_t}
    \norm{\dot{\hbq}_{k}^{(\ell)} - \hbO\bphi(\hbq_{k}^{(\ell)})}_2^2
    + \|\bGamma\hbO\trp\|_F^2,
\end{align}
\end{subequations}
where $\bGamma\in\real^{n_\phi\times n_\phi}$.
This is the generic learning problem \cref{eq:generic_nonintrusive_regression} with the function space $S$ given by
\begin{align}
    S_{\bphi} = \set{
        \bs:\real^r\to \real^r \; : \;
        \bs(\hbq) = \hbO\bphi(\hbq)
        \quad\textup{for some}\quad
        \hbO\in\real^{r\times n_\phi}
    },
\end{align}
and where $R$ is a Tikhonov regularizer.
The so-called operator matrix $\hbO$ satisfies the linear system
\begin{align}
    \label{eq:opinf-solution-system}
    (\bD\trp\bD + \bGamma\trp\bGamma)\hbO\trp
    = \bD\trp\hbZ\trp,
\end{align}
where $\bD = \bphi(\hbQ)\trp$ and $\hbQ$ and $\hbZ$ are the training data matrices in \cref{eq:training-data-matrices}.
As with our kernel-based approach, the feature map is chosen to emulate the structure of a projection-based ROM.
For example, the OpInf regression to learn a linear-quadratic ROM of the form \cref{eq:linear_qm_rom} is given by
\begin{align}\label{eq:opinf_regression}
    \min_{\hbA\in \real^{r\times r}, \hbH\in \real^{r\times r^2}} \;
    \sum_{\ell=1}^{M}\sum_{k=0}^{n_t}
    \norm{\dot{\hbq}_{k}^{(\ell)} - \left(
        \hbA \hbq_{k}^{(\ell)}
        + \hbH[\hbq_{k}^{(\ell)}\otimes \hbq_{k}^{(\ell)}]
    \right)}_2^2 + \norm{\bGamma[~\hbA~~\hbH~]\trp}_F^2,
\end{align}
and the solution $\hbO = [~\hbA~~\hbH~]$ satisfies \cref{eq:opinf-solution-system} with $\bD = [~\hbQ\trp~~(\hbQ \odot \hbQ)\trp~]\in\real^{M(n_t + 1)\times (r + r^{2})}$.
The underlying feature map in this case is the linear-quadratic map \cref{eq:quadratic_fm}.
In practice, the compressed Kronecker product of \Cref{remark:compressed-kronecker} is used so that \cref{eq:opinf_regression} has a unique solution.

For a kernel ROM with the kernel specified entirely by a feature map, the resulting ROM can be expressed in terms of the training data and the feature map as
\begin{align}
    \begin{aligned}
    \dt\hbq(t)
    &= \hbZ\big(K(\hbQ,\hbQ) + \gamma\bI\big)^{-1}K(\hbQ,\hbq(t))
    \\
    &= \underbrace{\hbZ\big(\bphi(\hbQ)\trp\bG\bphi(\hbQ) + \gamma\bI\big)^{-1}\bphi(\hbQ)\trp\bG}_{\bC}\bphi(\hbq(t)),
    \end{aligned}
\end{align}
whereas the OpInf ROM with the same training data and feature map is given by
\begin{align}
    \dt\hbq(t)
    &= \underbrace{\hbZ\bphi(\hbQ)\trp\big(\bphi(\hbQ)\bphi(\hbQ)\trp + \bGamma\trp\bGamma\big)^{-1}}_{\hbO}\bphi(\hbq(t)).
\end{align}
These models share the same nonlinear structure due to the final term $\bphi(\hbq(t))$, but the coefficients on the feature map differ: the Kernel ROM coefficient matrix $\bC$ solves the $M(n_t+1)\times M(n_t+1)$ linear system~\cref{eq:kernelrom-solution-system}, while the solution $\hbO$ to the OpInf regression satisfies an $n_\phi\times n_\phi$ linear system~\cref{eq:opinf-solution-system}.
Furthermore, OpInf is in general restricted to the feature map formulation \cref{eq:opinf-rom-general}, though it has in some cases been augmented with additional nonlinear terms through, e.g., the discrete empirical interpolation method \cite{benner2020opinfdeim}; by contrast, Kernel ROMs can be designed to have general nonlinear (RBF) structure or hybrid structure such as in \cref{eq:linquad-rbf-hybrid}, depending on the choice of kernel.
Finally, establishing error bounds is an open problem for OpInf ROMs, whereas Kernel ROMs inherit properties from the underlying RKHS which lead to error estimates.

\section{Error estimates}\label{sec:error_estimates}

We now derive several \emph{a posteriori} error estimates for Kernel ROMs that relate the FOM solution $\bq(t)$, the intrusive ROM solution $\tbq(t)$, and the Kernel ROM solution $\hbq(t)$.
Mathematical error estimates for non-intrusive ROMs are uncommon in the literature, typically restricted to special cases, and often difficult to obtain. Even in the intrusive case, error estimates for QM ROMs are not well-studied. One advantage of our kernel methods-based approach is that certain error estimates are straightforward to derive. This section presents an error bound for a general, non-intrusive kernel ROM with a generic nonlinear decoder $\bg$. One immediate consequence is an intrusive error bound (\Cref{cor:galerkin_error}), which we include to fill the gap in the literature for intrusive QM ROMs.

These results require three main ingredients: the so-called local logarithmic Lipschitz constant, a Gr{\"o}nwall-type inequality, and standard error results for kernel interpolants.
In this section, $\bM\in \real^{n_x\times n_x}$ denotes a symmetric positive definite weighting matrix with Cholesky factorization $\bM=\bL\bL\trp$.
The $\bM$-weighted inner product and norm are denoted with $\innerprod{\bx,\by}_\bM \coloneqq \bx\trp\bM\by = \innerprod{\bL\trp\bx,\bL\trp\by}$ and $\norm{\bx}_\bM \coloneqq \innerprod{\bx, \bx}_\bM^{1/2} = \|\bL\trp\bx\|_2$, respectively.

\subsection{Preliminaries}\label{sec:error_preliminaries}

We begin with the definition of the local logarithmic Lipschitz constant.
The reader is directed to, e.g.,~\cite{GSoderlind_2006a, DWirtz_DCSorensen_BHaasdonk_2014a} for a more complete overview.

\begin{definition}\label{def:local_log_lipschitz}
    For a function $\bb:\real^{n_x}\to \real^{n_x}$, the local logarithmic Lipschitz constant at $\bx \in \real^{n_x}$ with respect to $\bM$ is defined as
    \begin{align}\label{eq:local_log_lipschitz}
        \Lambda_{\bM}[\bb](\bx) \coloneqq \sup_{\bz \in \real^d} \frac{\innerprod{\bz-\bx, \bb(\bz)-\bb(\bx)}_\bM}{\norm{\bz-\bx}_\bM^2}.
    \end{align}
\end{definition}

The local logarithmic Lipschitz constant can be seen as a nonlinear generalization of the logarithmic norm of a matrix.

\begin{definition}[Logarithmic norm]\label{def:log_norm}
    The logarithmic norm of a matrix $\bB \in \real^{n_x\times n_x}$ with respect to $\bM$ is defined as
    \begin{align}\label{eq:log-norm}
        \lambda_{\bM}(\bB) \coloneqq \sup_{\bx \in \real^{n_x}} \frac{ \innerprod{\bx, \bB \bx}_\bM}{\norm{\bx}_\bM^2} = \max \sigma \left( \frac{1}{2} \left( \tbB +  \tbB\trp\right) \right),
    \end{align}
    where $\sigma(\frac{1}{2} ( \tbB +  \tbB\trp))$ is the spectrum of $\frac{1}{2} ( \tbB +  \tbB\trp)$ and $\tbB=\bL\trp\bB\bL^{-\mathsf{T}}.$
\end{definition}

If $\bb$ is an affine function, i.e., $\bb(\bx) = \bB \bx + \bd$ for some $\bB\in\real^{n_x\times n_x}$ and $\bd\in\real^{n_x}$, then $\Lambda_\bM[\bb](\bx) =  \lambda_\bM(\bB)$.
Note that the local logarithmic Lipschitz constant and the logarithmic norm can  be negative, unlike a standard Lipschitz constant.
We also note that if $\bb$ is differentiable, then $\Lambda_{\bM}[\bb](\bx)$ can be approximated by the logarithmic norm of the Jacobian $\bb'(\bx)$:

\begin{align*}
    \frac{\innerprod{\bz-\bx, \bb(\bz)-\bb(\bx)}_\bM}{\norm{\bz-\bx}_\bM^2}
    &= \frac{\innerprod{\bz-\bx, \bb'(\bx)(\bz-\bx)}_\bM}{\norm{\bz-\bx}_\bM^2}+ \mathcal{O}(\norm{\bz-\bx}_\bM)
    \approx \frac{\innerprod{\bz-\bx, \bb'(\bx)(\bz-\bx)}_\bM}{\norm{\bz-\bx}_\bM^2}
    \leq \lambda_\bM(\bb'(\bx)).
\end{align*}

We also need the following Gr{\"o}nwall-type inequality.

\begin{lemma}[Gr{\"o}nwall inequality]\label{lem:gronwall}
    Let $T>0$ and $\alpha, \beta:[0, T]\to \real$ be integrable functions.
    If $u:[0, T]\to \real$ is differentiable and satisfies
    $
        u'(t)\leq \beta(t)u(t)+\alpha(t)
    $
    for all $t \in (0, T)$, then
    \begin{align*} u(t)\leq \int_0^t \alpha(s) e^{\int_s^t \beta(\tau)d\tau}ds + e^{\int_0^t \beta(\tau)d\tau} u(0)
    \end{align*}
    for any $0 \le t \le T$.
\end{lemma}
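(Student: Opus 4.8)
The plan is to use the classical \emph{integrating factor} technique. First I would introduce the factor $\mu(t) = \exp(-\int_0^t \beta(\tau)\,d\tau)$, which is strictly positive for all $t\in[0,T]$ and, since $\beta$ is integrable, absolutely continuous with $\mu'(t) = -\beta(t)\mu(t)$ almost everywhere. The motivation is that multiplying the differential inequality by $\mu(t)$ collapses the left-hand side into a single derivative: because $\mu(t) > 0$, the hypothesis $u'(t)\le \beta(t)u(t)+\alpha(t)$ is equivalent to $\mu(t)\big(u'(t)-\beta(t)u(t)\big)\le \mu(t)\alpha(t)$, and the left-hand side is exactly $\frac{d}{dt}\big(\mu(t)u(t)\big)$ by the product rule.

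Next I would integrate this inequality over $[0,t]$. Applying the fundamental theorem of calculus to $w(t)\coloneqq \mu(t)u(t)$ yields $\mu(t)u(t)-\mu(0)u(0)\le \int_0^t \mu(s)\alpha(s)\,ds$. Since $\mu(0)=1$, rearranging gives $\mu(t)u(t)\le u(0)+\int_0^t\mu(s)\alpha(s)\,ds$.

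Finally I would divide through by $\mu(t)$, again using $\mu(t)>0$ to preserve the direction of the inequality, and simplify the exponentials. Observing that $1/\mu(t)=\exp(\int_0^t\beta(\tau)\,d\tau)$ and $\mu(s)/\mu(t)=\exp(\int_s^t\beta(\tau)\,d\tau)$ recovers precisely the claimed bound.

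The main technical obstacle is justifying the integration step under the stated hypotheses, where $\alpha$ and $\beta$ are only assumed integrable and $u$ merely differentiable. The product $w=\mu u$ must be shown to be absolutely continuous so that the fundamental theorem of calculus applies and $w(t)-w(0)=\int_0^t w'(s)\,ds$ holds with $w'=\mu u' + \mu' u$ valid almost everywhere; this follows from the absolute continuity of $\mu$ together with the differentiability of $u$, with the identity $\mu'=-\beta\mu$ holding a.e. by the Lebesgue differentiation theorem. If one is willing to strengthen the hypotheses to $\alpha,\beta$ continuous and $u\in C^1$, these regularity concerns vanish and each step above is immediate.
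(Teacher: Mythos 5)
Your integrating-factor argument is the standard proof of this Gr\"onwall-type inequality, and its core steps are correct. Note that the paper itself does not actually prove \Cref{lem:gronwall}; it defers entirely to the citation (Lemma~2.6 of Wirtz, Sorensen, and Haasdonk), so there is no in-paper argument to compare against --- your write-up supplies essentially the argument that reference and most textbooks give: multiply by $\mu(t) = \exp(-\int_0^t \beta(\tau)\,d\tau)$, recognize a total derivative, integrate, and divide by $\mu(t)>0$.

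The one place where your justification over-claims is the regularity step. It is not true in general that the product of an absolutely continuous function with a merely differentiable function is absolutely continuous: an everywhere-differentiable function need not even be of bounded variation (consider $u(t) = t^2\sin(1/t^2)$, $u(0)=0$, on $[0,1]$), so ``AC of $\mu$ plus differentiability of $u$'' does not by itself license the fundamental theorem of calculus for $w = \mu u$. Under the lemma's stated hypotheses the conclusion can still be rescued, but it takes an extra classical ingredient: since $u$ is continuous (hence bounded) and $\alpha,\beta$ are integrable, the hypothesis $u' \le \beta u + \alpha$ dominates the positive part of $u'$ by an integrable function; combined with the fact that derivatives of everywhere-differentiable functions satisfy the fundamental theorem of calculus in the Henstock--Kurzweil sense, this forces $u' \in L^1$, and an everywhere-differentiable function with Lebesgue-integrable derivative is absolutely continuous (Rudin, \emph{Real and Complex Analysis}, Thm.~7.21). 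With $u$ absolutely continuous, $w = \mu u$ is a product of AC functions and your integration step is valid. Alternatively, your fallback of strengthening the hypotheses to $\alpha,\beta$ continuous and $u \in C^1$ is harmless in this paper: the lemma is only ever applied (in \Cref{thm:a_posteriori_error}, \Cref{cor:galerkin_error}, and \Cref{prop:non-intrusive_vs_intrusive_error}) to norms of solutions of smooth finite-dimensional ODE systems, where that regularity is available.
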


See, e.g., \cite[Lemma 2.6]{DWirtz_DCSorensen_BHaasdonk_2014a} for a proof.

\subsection{Error bounds}

We now present an \emph{a posteriori} error analysis for Kernel ROMs, which follows the approach detailed in \cite{DWirtz_DCSorensen_BHaasdonk_2014a}.
The strategy is to view the Kernel ROM function $\hbf$ as a regularized kernel interpolant of the intrusive projection-based ROM function $\tbf$, plus a discrepancy term $\bdelta$ that accounts for the approximation error between $\tbf(\hbq_{k}^{(\ell)})$ and the time derivative estimates $\dot{\hbq}_{k}^{(\ell)}$ used to train the interpolant.

First, define the Kernel ROM reconstruction error
\begin{align}\label{eq:reconstruction_error}
    \be(t)
    \coloneqq \bq(t) - \bg(\hbq(t)),
\end{align}
where $\bq(t)$ is the solution to the FOM \cref{eq:generic_fom}, $\hbq(t)$ is the solution to the Kernel ROM \cref{eq:nonintrusive_rom}, and $\bg$ is the decompression map \cref{eq:quadratic_decoder}.
The reconstruction error evolves according to the system
\begin{align}
    \label{eq:error_ode_non-intrusive}
    \dt\be(t) = \bff(\bq(t)) - \bg'(\hbq(t))\hbf(\hbq(t)),
    \qquad
    \be(0) = \bq_0 - \bg(\bV\trp(\bq_0-\obq)),
\end{align}
where $\bg'$ is the Jacobian of $\bg$.
Although we use a QM to define the reconstruction mapping $\bg$, the following error analysis holds for any reconstruction mapping.

\begin{theorem}[\emph{A posteriori} error]\label{thm:a_posteriori_error}
If $\hbf$ is an unregularized kernel interpolant of $\tbf+\bdelta\in \cH_K^r$ where $\norm{\bdelta(\hbq(s))}_\bM<\delta(s)$,
    then
    \begin{align}\label{eq:error_estimate}
        \norm{\be(t)}_\bM &\leq \int_0^t (\alpha_P(s)+\alpha_K(s))e^{\int_s^t \beta(\tau)d\tau}ds + e^{\int_0^t \beta(\tau)d\tau}\norm{\be_N(0)}_\bM,
        & \forall & \; t \in [0, T],
    \end{align}
    where
    \begin{subequations}\label{eq:error_functions}
    \begin{align}
        \label{eq:error_functions-aP}
        \alpha_P(s) &=
        \norm{(\bI-\bg'(\hbq(t))\bV\trp)\bff(\bg(\hbq(s)))}_\bM,
        \\
        \label{eq:error_functions-aK}
        \alpha_K(s) &=\norm{\bg'(\hbq(s))}_\bM\left(P_{K, \tbQ}(\hbq(s))\norm{\bL}_2\|\tbf+\bdelta\|_{\cH_K^r} + \delta(s)\right),
        \quad\text{and}
        \\
        \label{eq:error_functions-b}
        \beta(s) &= \Lambda_{\bM}[\bff](\bg(\hbq(s))).
    \end{align}
    \end{subequations}
\begin{proof}
Notice that the evolution equations in \cref{eq:error_ode_non-intrusive} can be rewritten as
\begin{align*}
    \dt\be(t) &= \bff(\bq(t)) -
    \bg'(\hbq(t))\hbf(\hbq(t))
    - \bff(\bg(\hbq(t))) + \bff(\bg(\hbq(t)))
    \\ &\quad
    - \bg'(\hbq(t))\bV\trp\bff(\bg(\hbq(t))) + \bg'(\hbq(t))\bV\trp\bff(\bg(\hbq(t)))
-\bg'(\hbq(t))\bdelta(\hbq(t)) + \bg'(\hbq(t))\bdelta(\hbq(t))
    \\
    &= \bff(\bq(t)) - \bff(\bg(\hbq(t))) + (\bI-\bg'(\hbq(t))\bV\trp)\bff(\bg(\hbq(t)))
    \\ &\quad
    +\bg'(\hbq(t))\left(\bV\trp\bff(\bg(\hbq)) +\bdelta(\hbq(t)) - \hbf(\hbq(t)) \right)
- \bg'(\hbq(t))\bdelta(\hbq(t)).
\end{align*}
Taking the $\bM$-weighted inner product with $\be(t)$ and using the definition of the logarithmic Lipschitz constant and \Cref{cor:vector_power_function_bound} yields
\begin{align*}
    &\innerprod{\be(t), \dt\be(t)}_\bM \\
    &=
    \innerprod{\be(t), \bff(\bq(t)) - \bff(\bg(\hbq(t)))}_\bM
+ \innerprod{\be(t),(\bI-\bg'(\hbq(t))\bV\trp)\bff(\bg(\hbq(t)))}_\bM
    \\ &\quad
    + \innerprod{\be(t), \bg'(\hbq(t))\left(\bV\trp\bff(\bg(\hbq)) +\bdelta(\hbq(t)) - \hbf(\hbq(t)) \right)}_\bM
- \innerprod{\be(t), \bg'(\hbq(t))\bdelta(\hbq(t))}_{\bM} \\
    & \leq \Lambda_{\bM}[\bff](\bg(\hbq(t)))\norm{\be(t)}_\bM^2 + \norm{(\bI-\bg'(\hbq(t))\bV\trp)\bff(\bg(\hbq(t)))}_\bM\norm{\be(t)}_\bM
    \\ &\quad
    + \norm{\bg'(\hbq(t))}_\bM \|\underbrace{\bV\trp\bff(\bg(\hbq(t)))}_{\tbf(\hbq(t))} +\bdelta(\hbq(t)) - \hbf(\hbq(t))\|_\bM\norm{\be(t)}_\bM
+ \norm{\bg'(\hbq(t))}_\bM \underbrace{\norm{\bdelta(\hbq(t))}_\bM}_{\leq \delta(t)} \norm{\be(t)}_\bM\\
    & \leq \Lambda_{\bM}[\bff](\bg(\hbq(t)))\norm{\be(t)}_\bM^2 + \norm{(\bI-\bg'(\hbq(t))\bV\trp)\bff(\bg(\hbq(t)))}_\bM\norm{\be(t)}_\bM \\
    &\quad + \norm{\bg'(\hbq(t))}_\bM P_{K, \tbQ}(\hbq(t))\norm{\bL}_2\|\tbf+\bdelta\|_{\cH_K^r}\norm{\be(t)}_\bM
+ \delta(t)\norm{\bg'(\hbq(t))}_\bM\norm{\be(t)}_\bM.
\end{align*}
Therefore,
\begin{align*}
    \dt \norm{\be(t)}_\bM = \frac{\innerprod{\be(t), \dt\be(t)}_\bM}{\norm{\be(t)}_\bM}
    &\leq \Lambda_\bM[\bff](\bg(\hbq(t)))\norm{\be(t)}_\bM + \norm{(\bI-\bg'(\hbq(t))\bV\trp)\bff(\bg(\hbq(t)))}_\bM \\
    &\quad + \norm{\bg'(\hbq(t))}_\bM \left(P_{K, \hbQ}(\hbq(t))\norm{\bL}_2\|\tbf+\bdelta\|_{\cH_K^r} + \delta(t)\right).
\end{align*}
Applying \Cref{lem:gronwall} yields the result.
\end{proof}
\end{theorem}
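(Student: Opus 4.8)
The plan is to reduce the vector-valued error ODE \cref{eq:error_ode_non-intrusive} to a scalar differential inequality for $\norm{\be(t)}_\bM$ and then close the argument with the Gr\"onwall inequality of \Cref{lem:gronwall}. The crucial preliminary step is an algebraic rewriting of the right-hand side of \cref{eq:error_ode_non-intrusive} that exposes three structurally distinct contributions. Concretely, I would add and subtract the terms $\bff(\bg(\hbq(t)))$, $\bg'(\hbq(t))\bV\trp\bff(\bg(\hbq(t)))$, and $\bg'(\hbq(t))\bdelta(\hbq(t))$, and then regroup. The first pair yields the genuinely nonlinear increment $\bff(\bq(t)) - \bff(\bg(\hbq(t)))$; the second produces the intrusive projection defect $(\bI - \bg'(\hbq(t))\bV\trp)\bff(\bg(\hbq(t)))$; and, recalling that $\bV\trp\bff(\bg(\hbq(t))) = \tbf(\hbq(t))$, the remaining terms assemble into $\bg'(\hbq(t))\big(\tbf(\hbq(t)) + \bdelta(\hbq(t)) - \hbf(\hbq(t))\big)$ together with a leftover $-\bg'(\hbq(t))\bdelta(\hbq(t))$.

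Next I would take the $\bM$-weighted inner product of this identity with $\be(t)$. The nonlinear increment is handled by \Cref{def:local_log_lipschitz}: choosing $\bx = \bg(\hbq(t))$ and $\bz = \bq(t)$ so that $\bz - \bx = \be(t)$ gives $\innerprod{\be(t), \bff(\bq(t)) - \bff(\bg(\hbq(t)))}_\bM \le \Lambda_\bM[\bff](\bg(\hbq(t)))\norm{\be(t)}_\bM^2$, which supplies the factor $\beta(t)$. Every remaining inner product is estimated by Cauchy--Schwarz in the $\bM$-inner product, $\innerprod{\be, \bv}_\bM \le \norm{\be}_\bM\norm{\bv}_\bM$, so each contributes $\norm{\be(t)}_\bM$ times the $\bM$-norm of the corresponding vector. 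This immediately produces the projection-defect factor $\alpha_P(t)$ and, from the leftover term, the bound $\norm{\bg'(\hbq(t))}_\bM\,\delta(t)$ after invoking the hypothesis $\norm{\bdelta(\hbq(t))}_\bM < \delta(t)$.

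The one term requiring the RKHS machinery is $\norm{\tbf(\hbq(t)) + \bdelta(\hbq(t)) - \hbf(\hbq(t))}_\bM$. Here I would use the hypothesis that $\hbf$ is the \emph{unregularized} kernel interpolant of $\tbf + \bdelta \in \cH_K^r$ at the reduced-state data: \Cref{cor:vector_power_function_bound} then bounds this quantity pointwise by $P_{K,\tbQ}(\hbq(t))\norm{\bL}_2\norm{\tbf + \bdelta}_{\cH_K^r}$. Combining this with the factor $\norm{\bg'(\hbq(t))}_\bM$ yields precisely $\alpha_K(t)$. Summing all contributions and dividing by $\norm{\be(t)}_\bM$, using $\dt\norm{\be(t)}_\bM = \innerprod{\be(t), \dt\be(t)}_\bM / \norm{\be(t)}_\bM$ wherever $\be(t) \neq \bzero$, gives the scalar inequality $\dt\norm{\be(t)}_\bM \le \beta(t)\norm{\be(t)}_\bM + \alpha_P(t) + \alpha_K(t)$. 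Applying \Cref{lem:gronwall} with $u(t) = \norm{\be(t)}_\bM$, $\alpha = \alpha_P + \alpha_K$, and the stated $\beta$ delivers \cref{eq:error_estimate}.

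The main obstacle is the bookkeeping of the add--subtract decomposition: the terms must be grouped so that exactly the nonlinear increment feeds the logarithmic Lipschitz constant, the projection defect is isolated with the oblique operator $\bg'(\hbq(t))\bV\trp$, and the interpolation residual appears in the precise form $\tbf + \bdelta - \hbf$ required to invoke the power-function bound. Identifying the correct reference function $\tbf + \bdelta$ to interpolate, so that $\bdelta$ absorbs the finite-difference and regression mismatch between $\tbf(\hbq_k^{(\ell)})$ and the training targets $\dot{\hbq}_k^{(\ell)}$, is what makes \Cref{cor:vector_power_function_bound} applicable; everything else is routine Cauchy--Schwarz and a single application of Gr\"onwall.
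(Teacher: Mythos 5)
Your proposal is correct and mirrors the paper's own proof essentially step for step: the same add--subtract decomposition isolating the nonlinear increment, projection defect, and interpolation residual $\tbf + \bdelta - \hbf$; the same use of the local logarithmic Lipschitz constant, Cauchy--Schwarz in the $\bM$-inner product, and \Cref{cor:vector_power_function_bound}; and the same closing application of \Cref{lem:gronwall}. No gaps or deviations to report.
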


A caveat to the result in \Cref{thm:a_posteriori_error} is that it relies on \Cref{cor:vector_power_function_bound}, which requires zero regularization.
However, as we demonstrate empirically in \Cref{sec:numerics}, the error bound \cref{eq:error_estimate} still holds when the regularization hyperparameter $\gamma$ is small.
Secondly, computing the local logarithmic Lipschitz constant $\lambda_\bM[\bff]$ is difficult to do in general.
In practice, we instead approximate it using the logarithmic norm of $\bff'(\bg(\hbq(t)))$.
Similarly, since $\tbf$ is unavailable, we estimate $\|\tbf + \bdelta\|_{\cH_K^r}$ with $\|\hbf\|_{\cH_K^r}$, which is easily computable.
Lastly, we note that the estimate \cref{eq:error_estimate} requires evaluating the FOM right-hand side $\bff$
and its Jacobian. While these evaluations are possible with some black-box PDE solvers, they are not possible in general, which limits the utility of \Cref{thm:a_posteriori_error} for the \emph{a posteriori} certification of a Kernel ROM. Nonetheless, the error estimate \cref{eq:error_estimate} provides mathematical justification for the usage of Kernel ROMs, and our numerical experiments empirically verify the estimate's validity. Using \Cref{thm:a_posteriori_error} for ROM certification would require the non-intrusive estimation of \cref{eq:error_estimate}, i.e., assuming that $\bff$ and $\bff'$ cannot be queried, which is non-trivial and thus left to future work.

We conclude with an \emph{a posteriori} error result for intrusive projection-based ROMs by examining the special case $\hbf=\tbf$.
As mentioned above, error estimates for intrusive QM ROMs are not widely available in the literature, so we provide the following result, which includes intrusive QM ROMs as the special case when $\bg$ is of the form \cref{eq:quadratic_decoder}.
\begin{corollary}\label{cor:galerkin_error}
    If $\hbf=\tbf$, then the following error estimate holds for all $t\in [0,T]$:
    \begin{align}\label{eq:galerkin_error_estimate}
        \norm{\bq(t)-\bg(\tbq(t))}_\bM &\leq
        \int_0^t \norm{(\bI-\bg'(\tbq(s))\bV\trp)\bff(\bg(\tbq(s)))}_\bM e^{\int_s^t \Lambda_\bM[\bff](\bg(\hbq_I(\tau)))d\tau}ds \\
        &\quad + e^{\int_0^t \Lambda_\bM[\bff](\bg(\tbq(\tau)))d\tau}\norm{\bq(0)-\bg(\hbq(0))}_\bM. \nonumber
    \end{align}
\end{corollary}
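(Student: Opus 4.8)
The plan is to obtain this result as a direct specialization of Theorem \ref{thm:a_posteriori_error} to the case where the Kernel ROM function coincides exactly with the intrusive projection-based ROM function, i.e., $\hbf = \tbf$. In the notation of the theorem, this corresponds to taking the discrepancy term $\bdelta \equiv \bf0$, so that $\hbf$ is an (exact, unregularized) interpolant of $\tbf$ with no approximation error. With $\hbq(t)$ now playing the role of the intrusive reduced solution $\tbq(t)$, the reconstruction error $\be(t) = \bq(t) - \bg(\tbq(t))$ is exactly the quantity bounded on the left-hand side of \cref{eq:galerkin_error_estimate}.

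First I would set $\bdelta \equiv \bf0$, which immediately gives $\delta(s) \equiv 0$ in the hypothesis $\norm{\bdelta(\hbq(s))}_\bM < \delta(s)$ of Theorem \ref{thm:a_posteriori_error}. The key simplification is in the error function $\alpha_K$ of \cref{eq:error_functions-aK}: since $\hbf = \tbf$ is now a \emph{perfect} interpolant of $\tbf$ itself (the interpolation data $\dot{\hbq}_k^{(\ell)}$ are replaced by the exact values $\tbf(\hbq_k^{(\ell)})$), the power-function bound from \Cref{cor:vector_power_function_bound} applied to $\tbf - \tbf$ contributes nothing, so $\alpha_K(s) \equiv 0$. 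Substituting $\alpha_K \equiv 0$, $\delta \equiv 0$, and $\hbq \to \tbq$ into \cref{eq:error_estimate} leaves only the $\alpha_P$ term and the initial-condition term, which I would identify term-by-term with \cref{eq:error_functions-aP} and $\beta(s) = \Lambda_\bM[\bff](\bg(\tbq(s)))$ from \cref{eq:error_functions-b} to recover the stated bound.

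Because this is a specialization rather than a fresh argument, there is no substantive obstacle; the work is purely bookkeeping. The one point deserving care is the interpretation of the hypotheses of Theorem \ref{thm:a_posteriori_error} in this degenerate regime: I must confirm that taking $\tbf \in \cH_K^r$ and $\hbf$ equal to its own unregularized interpolant is consistent with the theorem's setup, so that the power-function estimate genuinely vanishes rather than merely being bounded by $P_{K,\tbQ}(\tbq(s))\norm{\bL}_2\|\tbf\|_{\cH_K^r}$. A cleaner route, and the one I would ultimately present, is to re-examine the proof of Theorem \ref{thm:a_posteriori_error} directly with $\hbf = \tbf$ and $\bdelta = \bf0$: the term $\bV\trp\bff(\bg(\hbq)) + \bdelta(\hbq(t)) - \hbf(\hbq(t))$ appearing in the inner-product expansion becomes $\tbf(\tbq(t)) - \tbf(\tbq(t)) = \bf0$ identically, so the entire $\alpha_K$ contribution drops out before any appeal to \Cref{cor:vector_power_function_bound} is needed. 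This avoids the subtlety altogether.

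I would note that the last exponent in \cref{eq:galerkin_error_estimate} as written involves $\Lambda_\bM[\bff](\bg(\tbq(\tau)))$ while the integrand's exponent is written with $\hbq_I$; for consistency both should read $\tbq$, matching $\beta(s) = \Lambda_\bM[\bff](\bg(\tbq(s)))$ obtained from the specialization. With that identification, applying \Cref{lem:gronwall} exactly as in the proof of Theorem \ref{thm:a_posteriori_error} completes the argument.
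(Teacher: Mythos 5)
Your proposal is correct and takes the same route as the paper, which obtains \Cref{cor:galerkin_error} precisely by examining the special case $\hbf = \tbf$ (hence $\bdelta \equiv \bzero$) of \Cref{thm:a_posteriori_error}. Your further observation — that one must re-run the theorem's proof rather than merely substitute into its final bound, so that the term $\bV\trp\bff(\bg(\hbq(t))) + \bdelta(\hbq(t)) - \hbf(\hbq(t)) = \tbf(\hbq(t)) - \hbf(\hbq(t))$ vanishes identically and no residual power-function term $P_{K,\tbQ}(\hbq(s))\norm{\bL}_2\|\tbf\|_{\cH_K^r}$ survives — is exactly the right (and necessary) reading of that specialization, and your remark that $\hbq_I$ and $\tbq$ in the stated exponents should both read $\tbq$ correctly identifies a notational slip in the corollary as printed.
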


\section{Numerical results}\label{sec:numerics}

In this section, we test Kernel ROMs on several numerical examples using both POD and QM for dimension reduction.
In each experiment, we construct Kernel ROMs with three kernel designs: 1) a feature map kernel encoding the full structure of the projection-based ROM, abbreviated ``FM''; 2) an RBF kernel, marked ``RBF''; and 3) a feature map-RBF hybrid kernel, labeled ``Hybrid''.
We also compare to the performance of intrusive projection-based ROMs in the first two examples and to OpInf in all three examples.

\subsection{1D Advection-diffusion equation}\label{sec:numerics_adv_diff}

We first consider a linear PDE, the advection-diffusion equation in one spatial dimension with periodic boundary conditions:
\begin{subequations}\label{eq:advection_diffusion_pde}
\begin{align}
    \pdt q(x, t) - \kappa \pdxx q(x, t) + \beta \pdx q(x, t) = 0,
    &\qquad
    x \in (0, 1), \quad t \in (0, T),
    \\
    q(0, t) = q(1, t),
    \quad
    \pdx q(0, t) = \pdx q(1, t),
    &\qquad
    t \in (0, T),
    \\ \label{eq:advection_diffusion_initial_condition}
    q(x, 0) = q_0(x;\bmu) \coloneqq e^{-({x-\mu_1})^2/\mu_2^2},
    &\qquad
    x \in (0, 1).
\end{align}
\end{subequations}
Here, $\kappa>0$ is the diffusion parameter, $\beta\geq 0$ is the advection parameter, $T>0$ is the final time, and $\bmu=(\mu_1,\mu_2)$ parameterizes the initial condition.
For this experiment, we set $\kappa=10^{-2}$, $\beta = 1$, and $T = 1$.
The initial condition is a Gaussian pulse with center $\mu_1\in [0.25, 0.35]$ and width $\mu_2\in [0.05, 0.15]$.
The dynamics of \cref{eq:advection_diffusion_pde} are linear, but advective phenomena can be difficult to capture with linear dimension reduction methods such as POD.

\begin{figure}[t]
    \centering
    \includegraphics[width=\textwidth]{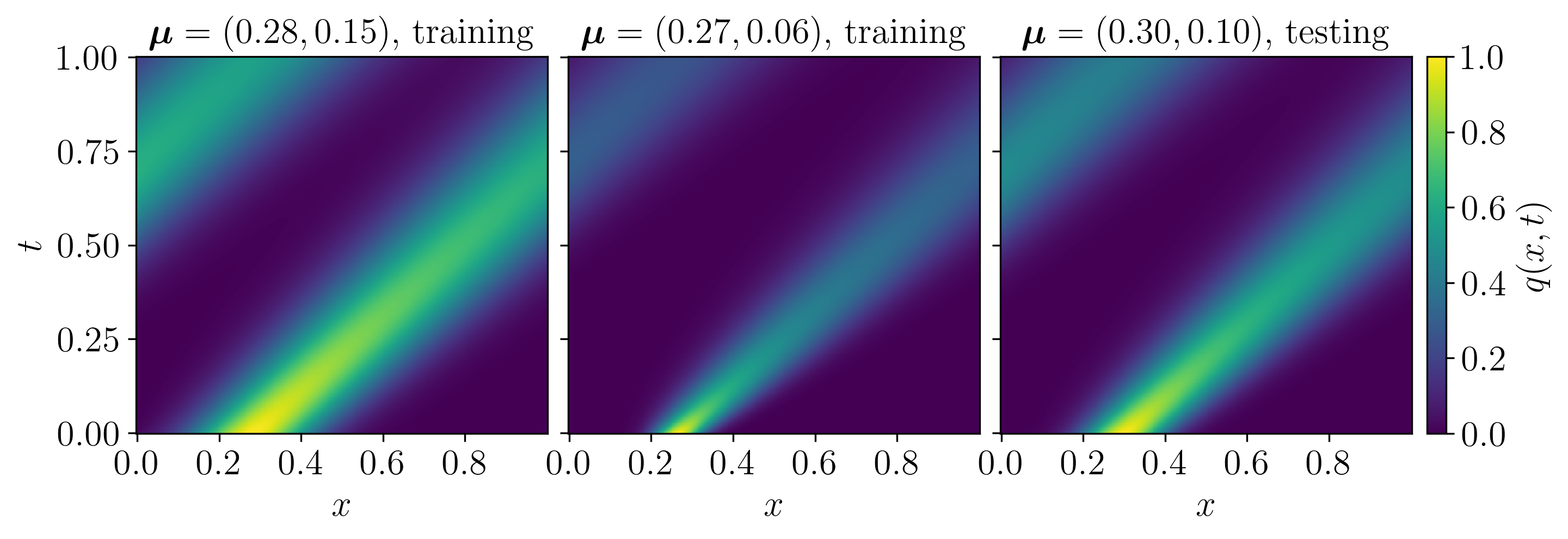}
    \caption{Solutions of the full-order advection diffusion model \cref{eq:advection_diffusion_fom} with initial conditions \cref{eq:advection_diffusion_initial_condition} for various choices of $\bmu$.}
    \label{fig:advdiff_fom_state}
\end{figure}

Spatially discretizing \cref{eq:advection_diffusion_pde} with an upwind finite difference scheme over a grid of $n_q + 1$ uniformly spaced points in the spatial domain $[0, 1]$ results in a linear FOM of the form
\begin{align}
    \label{eq:advection_diffusion_fom}
    \dt \bq(t) &= \bA \bq(t),
    \qquad
    \bq(0) = \bq_0(\bmu),
\end{align}
where $\bq(t),\bq_0(\bmu)\in\real^{n_q}$ and $\bA\in\real^{n_q\times n_q}$.
We use $n_q = 256$ spatial degrees of freedom in this experiment.
To collect training data, we sample $M=10$ initial conditions corresponding to $10$ Latin hypercube samples from the parameter domain $\cD = [0.25, 0.35]\times [0.05, 0.15]$ and integrate the FOM \cref{eq:advection_diffusion_fom} using a fully implicit variable-order backwards difference formula (BDF) time stepper with quasi-constant step size, executed with \texttt{scipy.interpolate.solve\_ivp()} in Python \cite{virtanen2020scipy,LFShampine_MWReichelt_1997a}.
The solution is recorded at $n_t=256$ equally spaced time instances after the initial condition, resulting in $M(n_t+1) = 2570$ total training snapshots.
We also solve the FOM at the testing parameter value $\bar{\bmu} = (0.3,0.1)$, which is not included in the training set.
\Cref{fig:advdiff_fom_state} plots the FOM states for two training parameter values and the testing parameter value.

\begin{table}[t]
\centering
\begin{tabular}{c|cc}
    POD
    & $\bphi(\hbq) = \begin{bmatrix}
        1 \\ \hbq
    \end{bmatrix},
    \qquad
    \bG = \frac{1}{1+r}\bI_{1+r}$
    \\[.5cm] \hline \\[-.25cm]
    QM & $\bphi(\hbq) = \begin{bmatrix}
        1 \\ \hbq \\ \hbq \otimes \hbq
    \end{bmatrix},
    \qquad
    \bG = \begin{bmatrix}
        \bI_{1+r} &\bzero \\ \bzero & \norm{\bW}_F \bI_{r^2}
    \end{bmatrix}$
\end{tabular}
\caption{Feature maps and weighting matrices in POD and QM Kernel ROMs for the 1D advection-diffusion example.}
\label{tbl:advdiff_kernel_table}
\end{table}

The training snapshots are used to compute POD and QM state approximations with the reference vector $\obq$ set to the average training snapshot.
Since the FOM \cref{eq:advection_diffusion_fom} is linear and $\obq\neq\bf0$, the intrusive projection-based POD ROM of dimension $r$ has affine structure,
\begin{align}\label{eq:advdiff_galerkin_rhs_pod}
    \dt\tbq(t)
= \tbc + \tbA \tbq(t),
\end{align}
where $\tbc\in\real^{r}$ and $\tbA\in\real^{r\times r}$, whereas the intrusive QM ROM has the form
\begin{align}\label{eq:advdiff_galerkin_rhs_qm}
    \dt\tbq(t)
= \tbc + \tbA \tbq(t) + \tbH [\tbq(t) \otimes \tbq(t)],
\end{align}
with $\tbc\in\real^{r}$, $\tbA\in\real^{r\times r}$, and $\tbH\in\real^{r\times r^2}$.
For both POD and QM, we construct feature map Kernel ROMs and OpInf ROMs with the corresponding intrusive ROM structure.
The underlying feature maps $\bphi$ and weighting matrices $\bG$ are listed in \Cref{tbl:advdiff_kernel_table}.
Note that the second diagonal block in the weight $\bG$ for QM is scaled by $\norm{\bW}_F$ to account for the fact that $\tbH$ in the intrusive QM ROM \cref{eq:advdiff_galerkin_rhs_qm} also depends on $\bW$.
We also construct an RBF Kernel ROM using a Gaussian kernel-generating RBF $\psi$ (see \Cref{tbl:rbf_kernels}) with fixed shape parameter $\epsilon=10^{-1}$.
This ROM has the same evolution equations in the POD and QM cases, since the compression map $\bh$ is the same in both instances, but we report results for both POD and QM decompression maps $\bg$.
Finally, we construct hybrid Kernel ROMs using the POD feature map from \Cref{tbl:advdiff_kernel_table} with weighting coefficient $c_\phi=1$ and a Gaussian RBF kernel with $\epsilon=0.1$ and weighting coefficient $c_\psi=10^{-3}$, yielding ROMs with the following structure:
\begin{align}\label{eq:advdiff_hybrid_rhs}
    \dt\hbq(t)
= \hbc + \hbA\hbq(t) + 10^{-3}\bOmega\trp\bpsi_{\!\epsilon}(\hbq(t)).
\end{align}
For QM, the RBF term takes the place of the quadratic nonlinearity $\tbH[\tbq(t) \otimes \tbq(t)]$, but for POD, the RBF term is purely supplementary.
Kernel input normalization as in \Cref{remark:kernel_normalization} is not needed in this problem.
Performance is measured with a relative $L^\infty$-$L^2$ error between the FOM and reconstructed ROM states,
\begin{align}\label{eq:relative_linf_l2_error}
    \be(\bq, \hbq) = \frac{\max_{k} \; \norm{\bq(t_k) - \bg(\hbq(t_k))}_2}{\max_{k} \; \norm{\bq(t_k)}_2},
\end{align}
where the ROMs are integrated with the same BDF time stepper as the FOM and the maxima are taken over time indices $k \in \{0, 1, \ldots, n_t\}$.
The ROM error is bounded from below by the projection error $\be(\bq, \bh(\bq))$.

\begin{figure}[t]
    \centering
    \includegraphics[width=\textwidth]{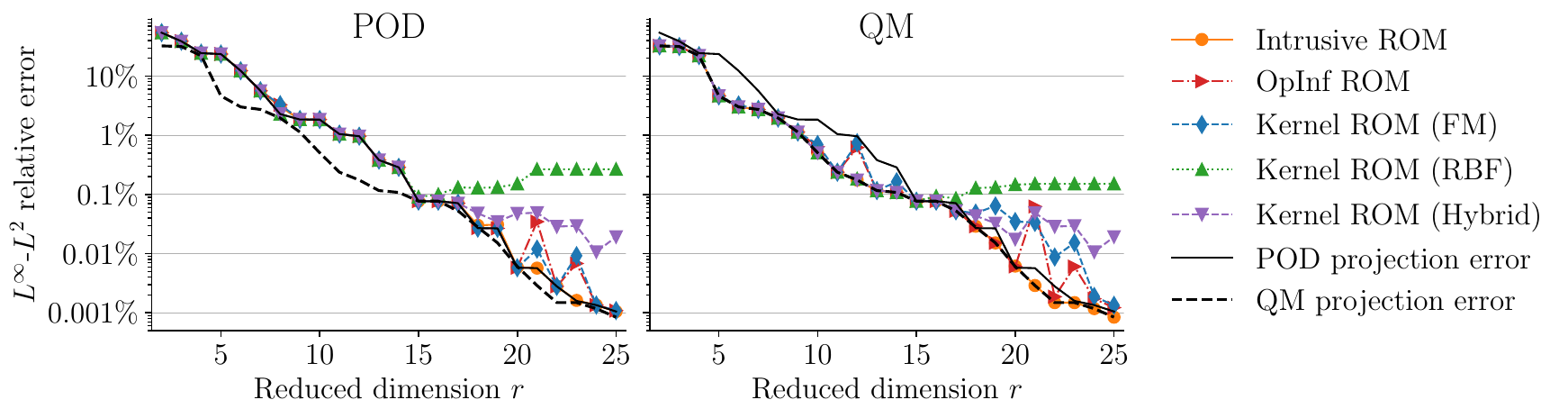}
    \caption{Relative ROM error at the test parameter $\bar{\bmu} = (0.3,0.1)$ as a function of number of basis vectors in linear POD (left) and quadratic manifold (right) reduced state approximations for the advection-diffusion problem \cref{eq:advection_diffusion_pde}.
    }
    \label{fig:advdiff_romsize_vs_error}
\end{figure}

Results are reported in \Cref{fig:advdiff_romsize_vs_error}, which compares ROM and projection errors at the testing parameter value $\bar{\bmu}$ for both POD and QM as a function of the reduced dimension $r$.
For each Kernel ROM, the regularization hyperparameter $\gamma$ for the learning problem \cref{eq:kernel_nonintrusive_regression} is selected to minimize the ROM error over the training data, i.e.,
\begin{align}
    \gamma
    = \underset{\gamma}{\arg\min}
    \sum_{\ell=1}^{M}\sum_{k=0}^{n_t}\big\|\hbq_{k}^{(\ell)} - \hbq(t_k;\bmu_{\ell},\gamma)\big\|_2,
\end{align}
where $\hbq_{k}^{(\ell)}$ are the training snapshots \cref{eq:reduced_state_data} and $\hbq(t;\bmu_{\ell},\gamma)$ denotes the solution to the Kernel ROM with regularization $\gamma$ evaluated for training parameter $\bmu_{\ell}$.
In this experiment, we do this via a grid search over $\gamma\in\{10^{-14},10^{-13},\ldots,10^{2}\}$ for each Kernel ROM.
This procedure is adapted from best practices for OpInf~\cite{SAMcQuarrie_CHuang_KEWillcox_2021a,EQian_IGFarcas_KWillcox_2022a}; a similar selection is carried out for OpInf ROMs with the regularization matrix $\bGamma$ parameterized so that
\begin{align}
    \big\|\bGamma\hbO\trp\big\|_F^2
    = \gamma_1^2(\|\hbc\|_2^2 + \|\hbA\|_{F}^{2}) + \gamma_2^2\|\hbH\|_{F}^{2},
\end{align}
where $\gamma_1,\gamma_2\ge 0$.
This is the state-of-the-art procedure for OpInf and results in accurate ROMs.
Indeed, \Cref{fig:advdiff_romsize_vs_error} shows that each of the POD-based ROMs yield errors that are nearly identical to the POD projection error for $r \leq 15$.
The POD RBF Kernel ROM error plateaus for $r>15$, possibly due to the RBF shape parameter being fixed independent of $r$.
The POD hybrid Kernel ROM error begins to plateau for $r>17$, again possibly due to the fixed RBF shape parameter and fixed weighting coefficients $c_\phi$ and $c_\psi$.
The OpInf ROMs and feature map Kernel ROMs match the projection error for $r \le 20$, but deviate slightly from the projection and intrusive ROM errors for some values of $r>20$.

\begin{figure}[t]
    \centering
    \includegraphics[width=\textwidth]{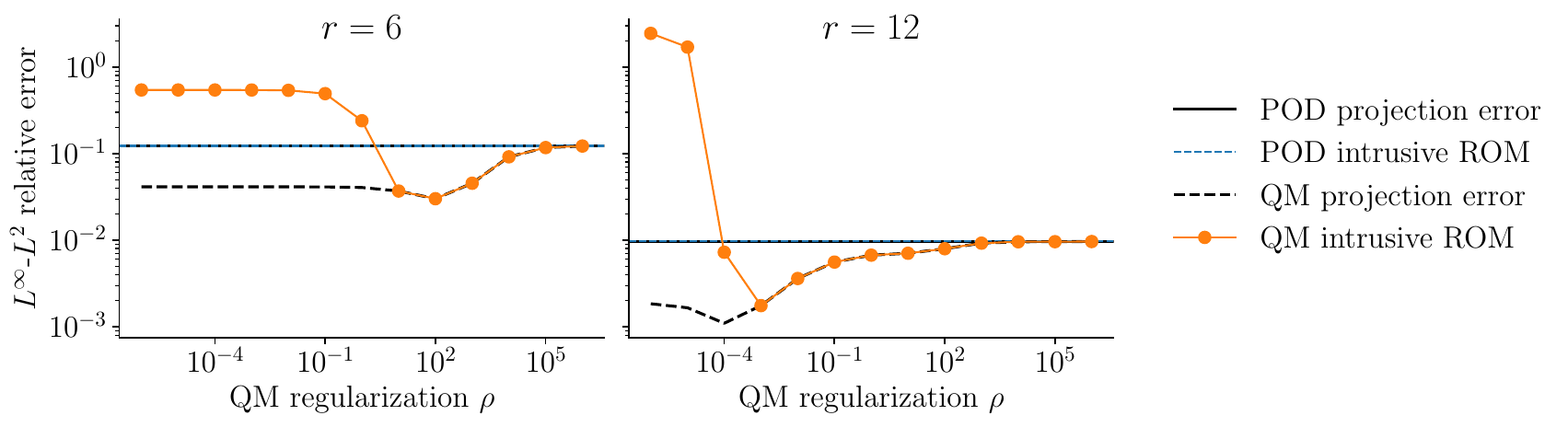}
    \caption{Effect of the quadratic manifold regularization parameter $\rho$ on the projection error and intrusive ROM error for two different reduced state dimensions $r$, using data from the advection-diffusion problem \cref{eq:advection_diffusion_pde} at the test parameter $\bar{\bmu} = (0.3, 0.1)$.}
    \label{fig:advdiff_qm_regularization}
\end{figure}

The QM regularization parameter $\rho \ge 0$ in \cref{eq:qm_optimization_problem} plays an important role in the stability and accuracy of QM ROMs, see \apdxref{appendix:lti_stability_error} for a stability analysis of the intrusive QM ROM for a linear FOM.
\Cref{fig:advdiff_qm_regularization} plots the value of $\rho$ versus the projection error and the intrusive ROM error for two choices of the reduced dimension $r$.
As is evident from \cref{eq:qm_optimization_problem}, $\norm{\bW}_F \to 0$ as $\rho$ increases, which is why the QM projection and QM ROM errors approach their POD counterparts for large enough $\rho$.
Note that the optimal $\rho$ varies with the reduced state dimension $r$. Furthermore, at least for $r = 12$, the best $\rho$ for the reconstruction error is not necessarily the best $\rho$ for the intrusive QM ROM error.
To account for this, the QM results in \Cref{fig:advdiff_romsize_vs_error} report only the best results for each ROM after testing each of the QM regularization values $\rho \in \{10^{-3},10^{-2},\ldots,10^{8}\}$.
In other words, \Cref{fig:advdiff_romsize_vs_error} shows a best-case scenario comparison.
The QM OpInf ROMs and QM feature map Kernel ROMs again show highly similar performance, while the QM RBF and QM hybrid Kernel ROM errors plateau for $r>17$.
Note that the POD and QM projection errors are close for $r>15$, indicating that in this particular problem QM results in diminishing returns over POD for large enough $r$.

\begin{figure}[t]
    \centering
    \begin{subfigure}{\columnwidth}
        \centering
        \includegraphics[width=\textwidth]{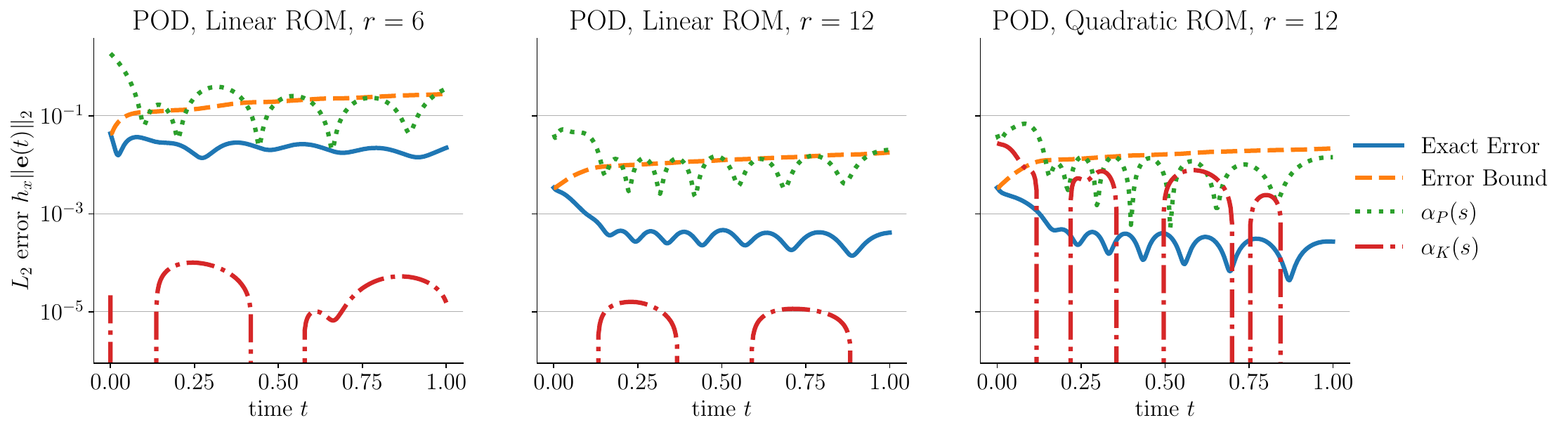}
    \end{subfigure}
    \begin{subfigure}{\columnwidth}
        \centering
        \includegraphics[width=\textwidth]{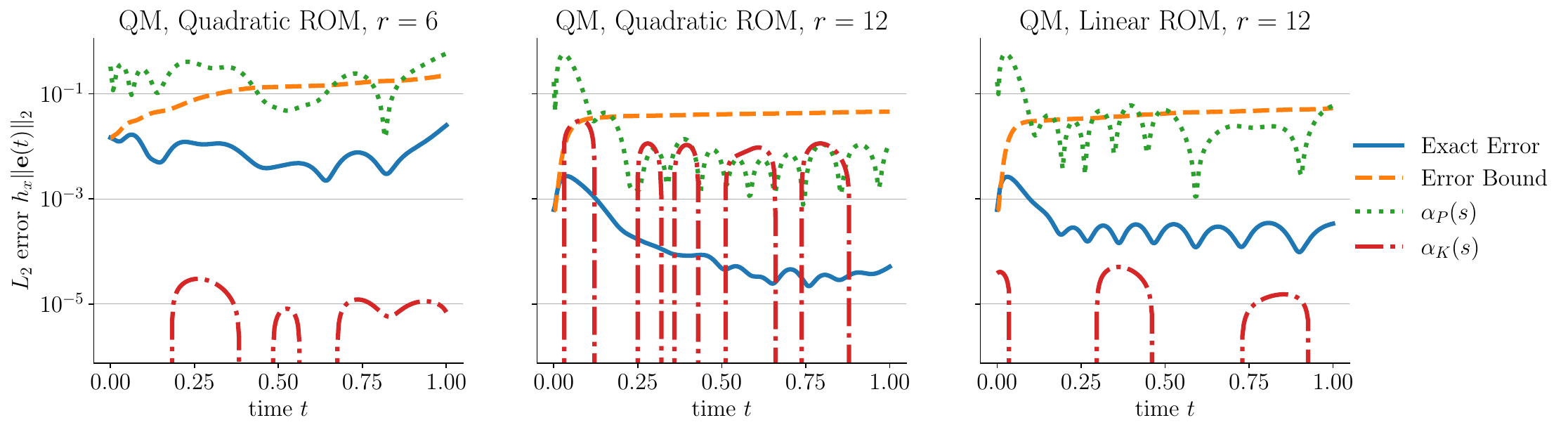}
    \end{subfigure}
\caption{Approximate error bounds for POD and QM Kernel ROMs for the advection-diffusion problem \cref{eq:advection_diffusion_pde}.}
    \label{fig:adv_diff_error_bound}
\end{figure}

Next, we compute the error bound from \Cref{thm:a_posteriori_error} for the feature map Kernel ROMs for $r\in\{6, 12\}$.
Although the computed Kernel ROMs use a nonzero regularization $\gamma\neq 0$, the computed error bounds still hold.
We estimate the norm $\|\tbf+\bdelta\|_{\cH_K^r}$ with the norm of the interpolant $\|\hbf\|_{\cH_K^r}$, which can be computed quickly and explicitly using equation \cref{eq:kernel_norm}.
The local logarithmic Lipschitz constant $\Lambda_\bM[\bff](\bg(\hbq(s)))$ is estimated using the logarithmic norm
$\lambda_\bM(\bff(\hbq(s)))$, and the weighting matrix $\bM$ is taken to be $\frac{1}{r}\bI_{r}$.
We also examine feature map Kernel ROMs where the chosen feature map does not match the true projection-based ROM form, i.e. POD with a quadratic feature map and QM with a linear feature map.
The results are displayed in \Cref{fig:adv_diff_error_bound}, which shows that the computed error estimates indeed bound the true error without dramatically overestimating it.
In the POD cases with linear ROMs, the $\alpha_P$ term, which is related to the POD projection errors, is what dominates the error bound computation, while the $\alpha_K$ term, which corresponds to the pointwise kernel error bound from \Cref{cor:vector_power_function_bound}, is negligible.
For the QM Quadratic ROM with $r=6$, $\alpha_P$ again dominates the error bound and the $\alpha_K$ is negligible.
However, for the QM Quadratic ROM with $r=12$, the $\alpha_K$ term is much larger.
This may indicate that the chosen quadratic feature map may yield a non-optimal model form for the Kernel ROM.
Indeed, since POD with $r=12$ already yields small ROM errors, one may expect that a QM is unnecessary for $r=12$, and thus the quadratic term in the Kernel ROM may be extraneous.
To test this, we remove the quadratic term, which comes from the quadratic component of $\bg$, and compute the error bound for a linear QM Kernel ROM with $r=12$.
We observe that the $\alpha_K$ term is once again negligible in this case.
On the other hand, adding a quadratic term to the POD ROM with $r=12$ also substantially increases $\alpha_K$.
Therefore, we can infer that a larger $\alpha_K$ contribution may indicate that a non-optimal model form (i.e.,\ feature map) was used for the Kernel ROM.

\subsection{1D Burgers' equation}\label{sec:numerics_burgers}

\begin{figure}[t]
    \centering
    \includegraphics[width=\textwidth]{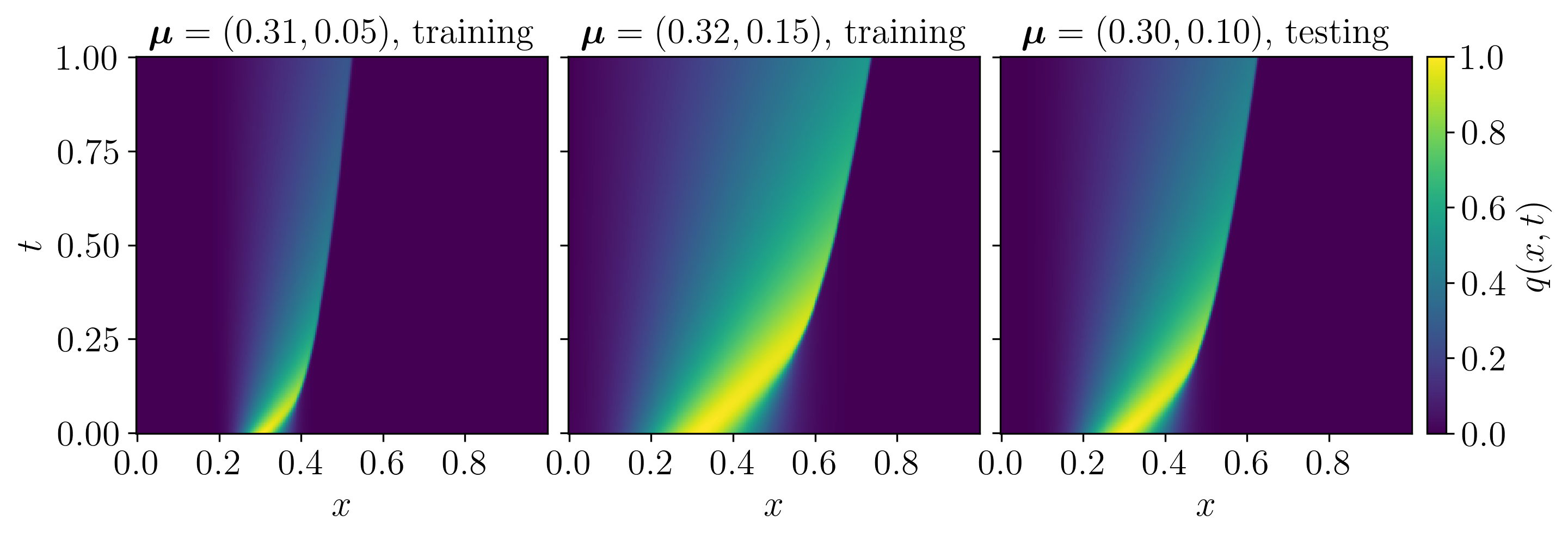}
    \caption{Solutions of the full-order Burgers' model \cref{eq:burgers_fom} with initial conditions \cref{eq:burgers_initial_condition} for various choices of $\bmu$.}
    \label{fig:burgers_fom_state}
\end{figure}

We now consider the 1D viscous Burgers' equation with homogeneous Dirichlet boundary conditions, which is nonlinear with respect to the state:
\begin{subequations}\label{eq:burgers_pde}
    \begin{align}
        \pdt q(x, t) - \nu \pdxx q(x, t) + q(x, t) \pdx q(x, t) = 0,
        &\qquad
        x \in (0, 1), \quad t \in (0, T), \\
        q(0, t) = 0, \quad q(1, t) = 0,
        &\qquad t \in (0, T), \\
        \label{eq:burgers_initial_condition}
        q(x, 0) = q_0(x;\bmu) \coloneqq e^{-(x - \mu_1)^{2}/\mu_2^{2}},
        &\qquad x \in (0, 1).
    \end{align}
\end{subequations}
Here, $\nu>0$ is the viscosity, which we set to $\nu = 10^{-4}$ for our experiments.
Solutions to this system are characterized by sharp gradients along an advection front.
Just as in the previous problem, we consider parameterized Gaussian initial conditions with center $\mu_1\in[0.25,0.35]$ and $\mu_2\in[0.05,0.15]$, set the final time to $T=1$, use $n_q = n_t = 256$ spatial degrees of freedom and temporal observations, and draw $M = 10$ latin hypercube samples of the parameters $\bmu = [~\mu_1~~\mu_2~]\trp$ to use for generating training data.
The spatial discretization uses a uniform centered finite difference for the second derivative term and an upwind finite difference for the first derivative term, yielding a quadratic FOM of the form
\begin{align}
    \label{eq:burgers_fom}
    \dt \bq(t)
    = \bA \bq(t) + \bH[\bq(t)\otimes\bq(t)],
    \qquad
    \bq(0) = \bq_0(\bmu),
\end{align}
where $\bq(t),\bq_0(\bmu)\in\real^{n_q}$, $\bA\in\real^{n_q\times n_q}$, and $\bH\in\real^{n_q\times n_q^2}$.
We again use a BDF time integrator to solve the FOM (and constructed ROMs) at the parameter samples, resulting in $M = 10$ trajectories of $n_t + 1 = 257$ snapshots each.
The FOM states for a few parameter values are displayed in \Cref{fig:burgers_fom_state}.

\begin{table}[t]
\centering
\begin{tabular}{c|c}
    POD & $\bphi(\hbq) = \begin{bmatrix}
        \hbq \\ \hbq \otimes \hbq
    \end{bmatrix},
    \qquad
    \bG = \frac{1}{r+r^2}
    \bI_{r+r^2}$
    \\[.65cm] \hline \\[-.25cm]
    QM & $\bphi(\hbq) = \begin{bmatrix}
        \hbq \\ \hbq \otimes \hbq \\ \hbq \otimes \hbq \otimes \hbq \\ \hbq \otimes \hbq \otimes \hbq \otimes \hbq
    \end{bmatrix},
    \qquad
    \bG = \begin{bmatrix}
            \bI_{r+r^2} &\bzero &\bzero \\
            \bzero & \norm{\bW}_F \bI_{r^3} &\bzero \\
            \bzero & \bzero & \norm{\bW}_F^2 \bI_{r^4} \\
    \end{bmatrix}$
\end{tabular}
\caption{Feature maps and weighting matrices for POD and QM Kernel ROMs for the 1D Burgers' example.}
\label{tbl:burgers_kernel_table}
\end{table}

For both POD and QM, we use $\obq = \bf0$, hence the intrusive POD ROM takes the quadratic form \cref{eq:linear_qm_rom}, whereas the intrusive QM ROM has the quartic form \cref{eq:quadratic_qm_rom}.
We therefore construct feature map Kernel ROMs to mirror the structure of the intrusive models by using the feature maps and weighting matrices listed in \Cref{tbl:burgers_kernel_table}.
While the intrusive ROM informs the choice of feature map, we omit a comparison to the intrusive ROM since Galerkin ROMs are known to struggle in advection-dominated settings---often requiring additional stabilization or the usage of a least-squares Petrov-Galerkin projection (see, e.g., \cite{carlberg2017galerkin})---and focus instead on comparisons between the Kernel and OpInf ROMs and the projection error.
Similar to before, we learn Gaussian RBF Kernel ROMs with fixed shape parameter $\epsilon=10^{-1}$ and hybrid Kernel ROMs using the POD feature map from \Cref{tbl:burgers_kernel_table} with weighting coefficient $c_\phi=1$ and a Gaussian RBF kernel with $\epsilon=0.1$ and weighting coefficient $c_\psi=10^{-3}$, which result in ROMs of the form
\begin{align}\label{eq:burgers_hybrid_rhs}
    \dt\hbq(t)
= \hbc + \hbA\hbq(t) + \tbH [\tbq(t) \otimes \tbq(t)]
    + 10^{-3}\bOmega\trp\bpsi_{\!\epsilon}(\tbq(t)).
\end{align}
We also learn OpInf ROMs with the intrusive ROM structure, with the regularization designed so
\begin{align}
    \big\|\bGamma\hbO\trp\big\|_F^2
    = \gamma_1^2\|\hbA\|_{F}^{2} + \gamma_2^2\|\hbH\|_{F}^{2}
\end{align}
for the POD OpInf ROM, and
\begin{align}
    \big\|\bGamma\hbO\trp\big\|_F^2
    = \gamma_3^2(\|\hbH_{2}\|_{F}^{2} + \|\hbH_{3}\|_{F}^{2}) + \gamma_4^2\|\hbH_{4}\|_{F}^{2}
\end{align}
for the QM OpInf ROM, performing a grid search for $\gamma_1,\ldots,\gamma_4>0$.
The relative $L^\infty$-$L^2$ error \cref{eq:relative_linf_l2_error} is used to evaluate ROM performance at the testing parameter value $\bar{\bmu} = (0.3, 0.1)$.

\begin{figure}[t]
    \centering
    \includegraphics[width=\textwidth]{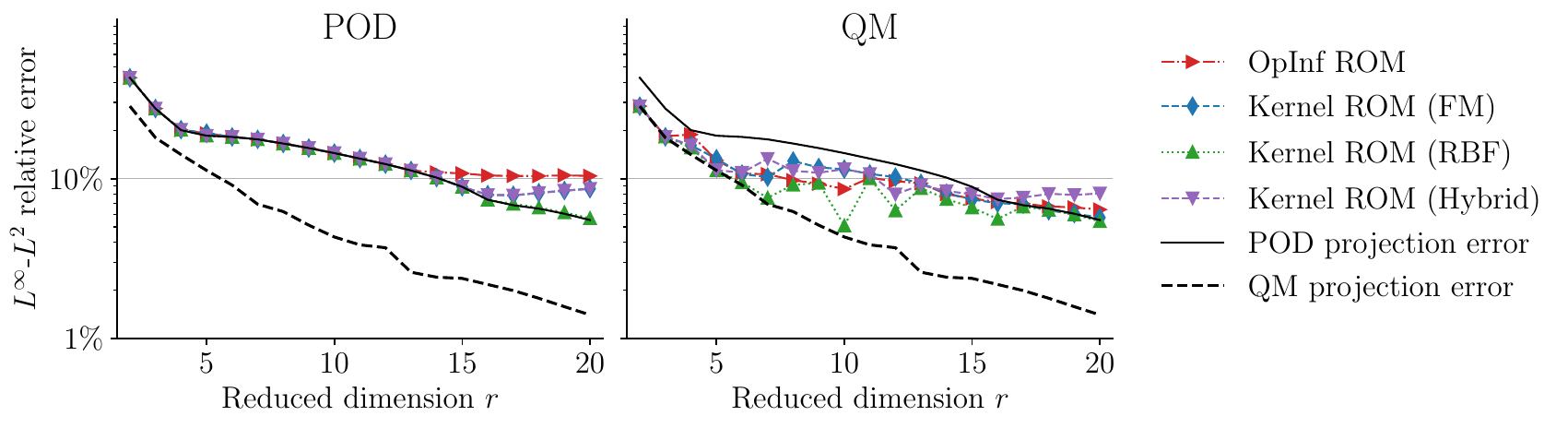}
    \caption{Relative ROM error at the test parameter $\bar{\bmu} = (0.3,0.1)$ as a function of number of basis vectors in linear POD (left) and quadratic manifold (right) state approximations.
    }
    \label{fig:romsize_vs_error-burgers}
\end{figure}

\Cref{fig:romsize_vs_error-burgers} reports results for various reduced dimensions $r$.
The POD ROM errors and the POD projection error are nearly identical for $r\leq 14$. The OpInf POD ROM plateaus in error for $r\geq15$, while the FM and hybrid Kernel ROMs plateau for $r\geq 16$ with slightly improved performance compared to OpInf. The RBF Kernel ROM errors are nearly identical to the projection errors for each value of $r$.
In the QM case, the ROMs yield similar errors to the QM projection error for $r\leq 6$.
For $r>6$, each of the QM ROMs have similar performance, and roughly plateau in error with nearly an order of magnitude difference in comparison to the QM projection error.
The QM OpInf and FM Kernel ROM errors slightly decrease monotonically as $r$ increases, except in the range $8\leq r \leq 12$.
The Hybrid Kernel ROM has similar performance to the FM Kernel and OpInf ROMs, but has slightly worse error for $r\geq 17$.
The RBF Kernel ROM errors have non-monotonic behavior as $r$ increases, but are the lowest for each value of $r$.
The QM ROMs overall obtain smaller errors than the POD projection error for $r\leq 16$, but slightly larger errors for $r>16$.
These results indicate that while QM is effective at decreasing the projection error relative to POD, its benefits are limited when integrated into OpInf or Kernel ROMs in this particular experiment.

\begin{figure}[t]
    \centering
    \begin{subfigure}{\columnwidth}
        \centering
        \includegraphics[width=\textwidth]{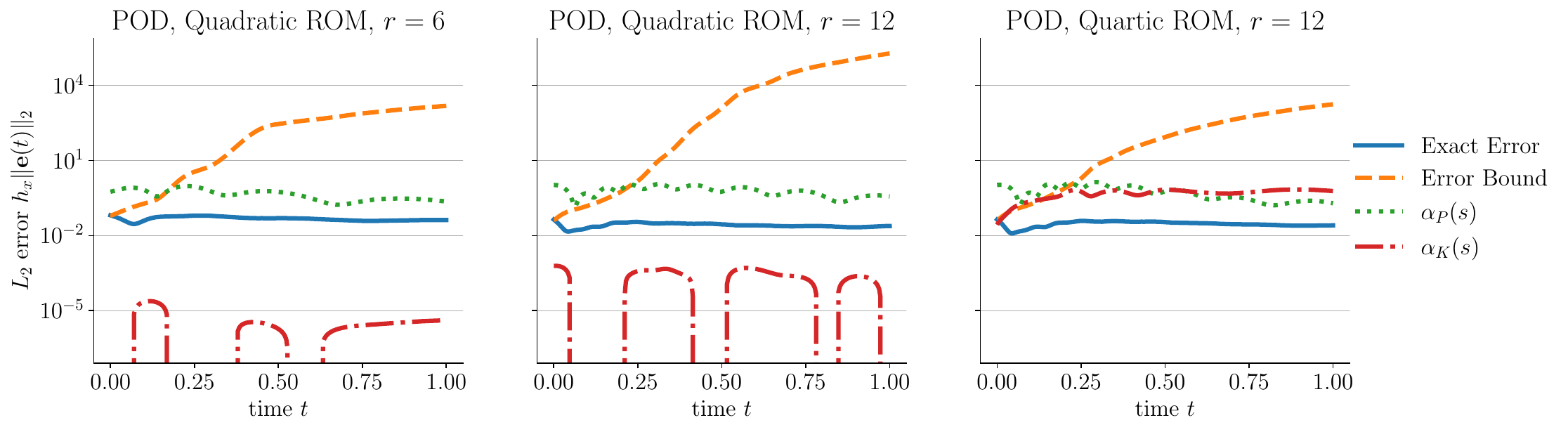}
    \end{subfigure}
    \begin{subfigure}{\columnwidth}
        \centering
        \includegraphics[width=\textwidth]{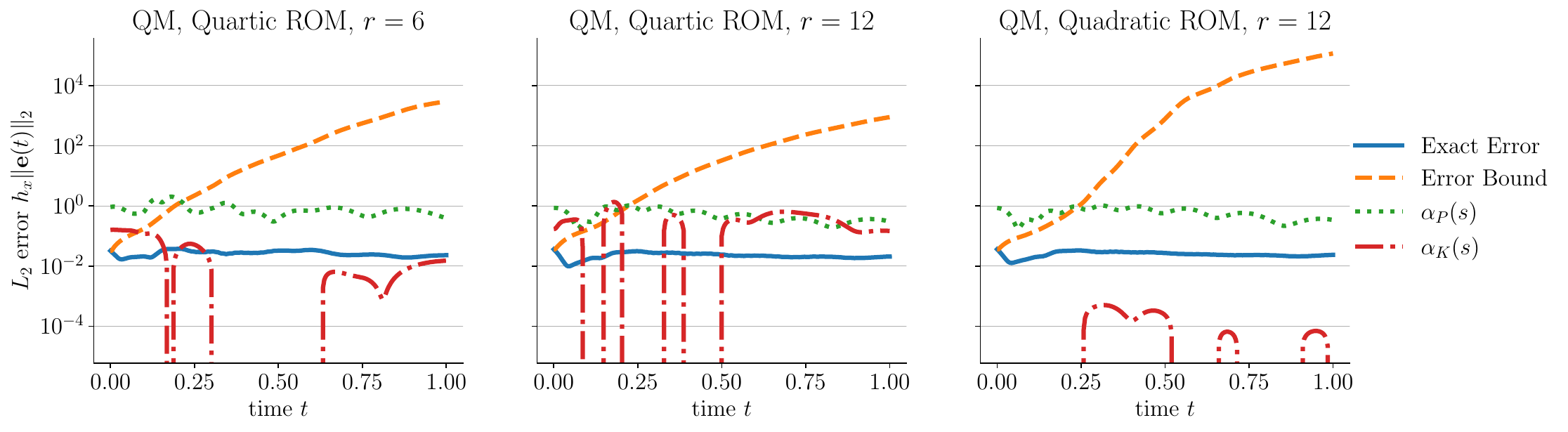}
    \end{subfigure}
\caption{Approximate error bound for POD and QM kernel ROMs for different values of $r$.}
    \label{fig:burgers_error_bound}
\end{figure}

We next compute the error bound from \Cref{thm:a_posteriori_error} for the FM Kernel ROMs for $r=6, 12$.
The quantities $\|\tbf+\bdelta\|_{\cH_K^r}$, $\delta(s)$, $\Lambda_\bM[\bff](\bg(\hbq(s)))$ are estimated in the same way as in the advection-diffusion case.
We use feature map Kernel ROMs corresponding to the quadratic and quartic feature maps in \Cref{tbl:burgers_kernel_table} and examine the cases when the chosen feature map does not match the true projection-based ROM form, i.e. POD with a quartic feature map and QM with a quadratic feature map.
\Cref{fig:burgers_error_bound} displays the results and shows that the computed error estimate again bounds the true error, but is much larger than in the advection-diffusion case. We hypothesize that this difference in magnitude is related to the viscosity parameter $\nu=10^{-4}$.
In the POD cases with quadratic ROMs, the $\alpha_P$ term dominates the error bound contribution, while the $\alpha_K$ term is negligible in the $r=6$ case, but less negligible in the $r=12$ case.
For the QM quartic ROMs, in the $r=6$ case, the $\alpha_P$ term dominates the error bound evaluation, while for $r=12$, the $\alpha_P$ and $\alpha_K$ terms contribute similarly.
We again compute the error bound for a QM ROM with the cubic and quartic terms removed, which come from the quadratic part of $\bg$, resulting in a QM quadratic ROM.
As in the advection-diffusion example, the $\alpha_K$ term decreases significantly, which may indicate that a quadratic model form may be the better choice for a QM Burgers ROM.
To again test if an incorrect model form significantly increases $\alpha_K$, we compute a POD quartic ROM and observe that $\alpha_K$ is much larger than for the POD quadratic ROM, as expected.
This further evidences that a larger $\alpha_K$ contribution may indicate that a non-optimal model form is being used for the Kernel ROM.

\subsection{2D Euler--Riemann problem}\label{sec:numerics_euler_riemann}

Our last numerical example uses the 2D conservative Euler equations
\begin{align}\label{eq:euler_pde}
    \pdt
    \begin{bmatrix}
        \rho \\ \rho u \\ \rho v \\ \rho E
    \end{bmatrix}
    + \pdx
    \begin{bmatrix}
        \rho u \\ \rho u^2+p \\ \rho u v \\ (E+p) u
    \end{bmatrix}
    + \pdx
    \begin{bmatrix}
        \rho v  \\ \rho u v \\ \rho v^2+p \\ (E+p) v
    \end{bmatrix}
    = 0,
\end{align}
where $u$ is the $x$-velocity, $v$ is the $y$-velocity, $\rho$ is the fluid density, $p$ is the pressure, and $E$ is the energy.
The system is closed by the state equation
\begin{align}\label{eq:state_equation}
    p = (\gamma - 1)\left(\rho E - \frac{1}{2} \rho (u^2 + v^2)\right),
\end{align}
where $\gamma=1.4$ is the specific heat ratio.
The spatial domain is the unit square $\Omega = (0,1) \times (0,1)$ with homogeneous Neumann boundary conditions on each side, and the time domain is $(0, 0.8)$.

The initial condition is given by a classical Riemann problem as follows.
The spatial domain is divided into four quadrants with a vertical dividing line at $x=0.8$ and a horizontal dividing line at $y=0.8$.
The initial pressure is set to $p_{BL} = 0.029$ in the bottom left quadrant; in the top right quadrant, the initial velocities are fixed at $u_{TR} = v_{TR} = 0$, and the initial density is $\rho_{TR} = 1.5$.
We parameterize the initial condition by setting the upper-right quadrant pressure to $p_{TR} \in \set{0.5, 0.75, 1.0, 1.25, 1.5}$ and compute remaining quantities following the relations in \cite[Configuration 3]{CWSchulz-Rinne_1993a}.
For testing, we consider the initial upper-right quadrant pressure to $\bar{p}_{TR} = 1.125$.
In every case, the discontinuities of the initial condition propagate through the domain, a highly challenging scenario for projection-based model reduction.

\begin{figure}[t]
    \centering
    \begin{subfigure}{0.32\columnwidth}
        \centering
        \includegraphics[width=\textwidth]{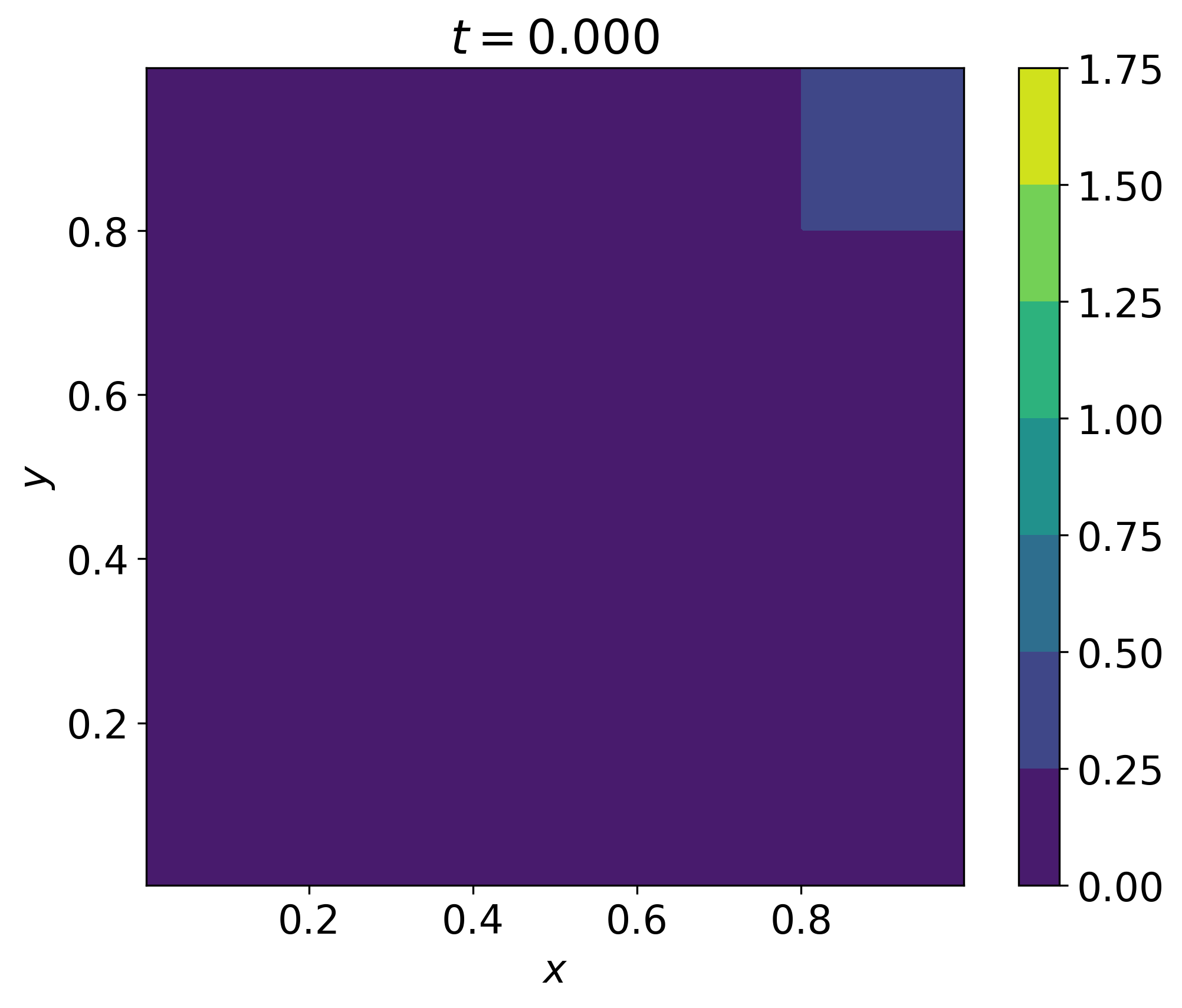}
        \includegraphics[width=\textwidth]{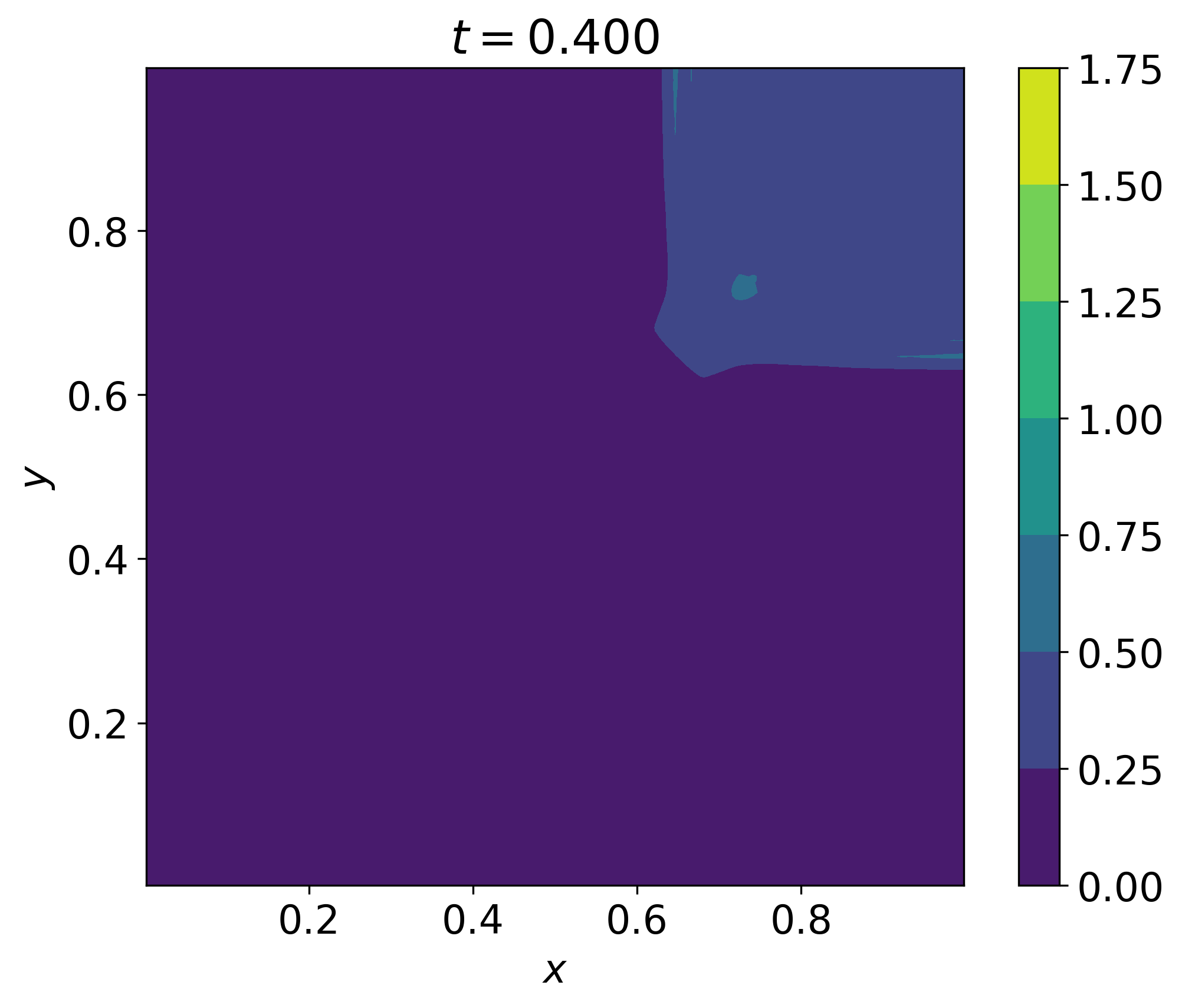}
        \includegraphics[width=\textwidth]{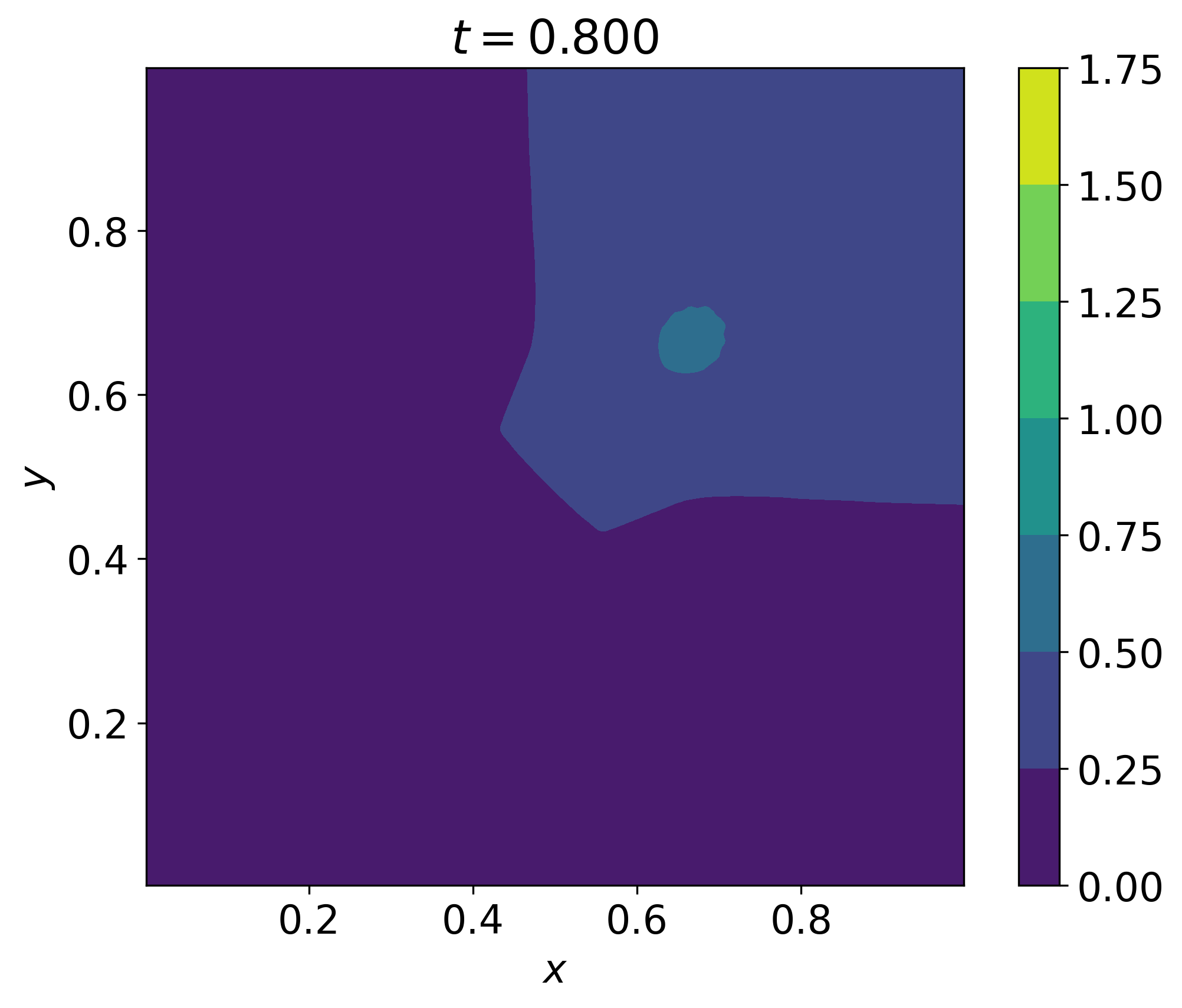}
        \caption{Training, $p_{TR} = 0.5$}
    \end{subfigure}
    \begin{subfigure}{0.32\columnwidth}
        \centering
        \includegraphics[width=\textwidth]{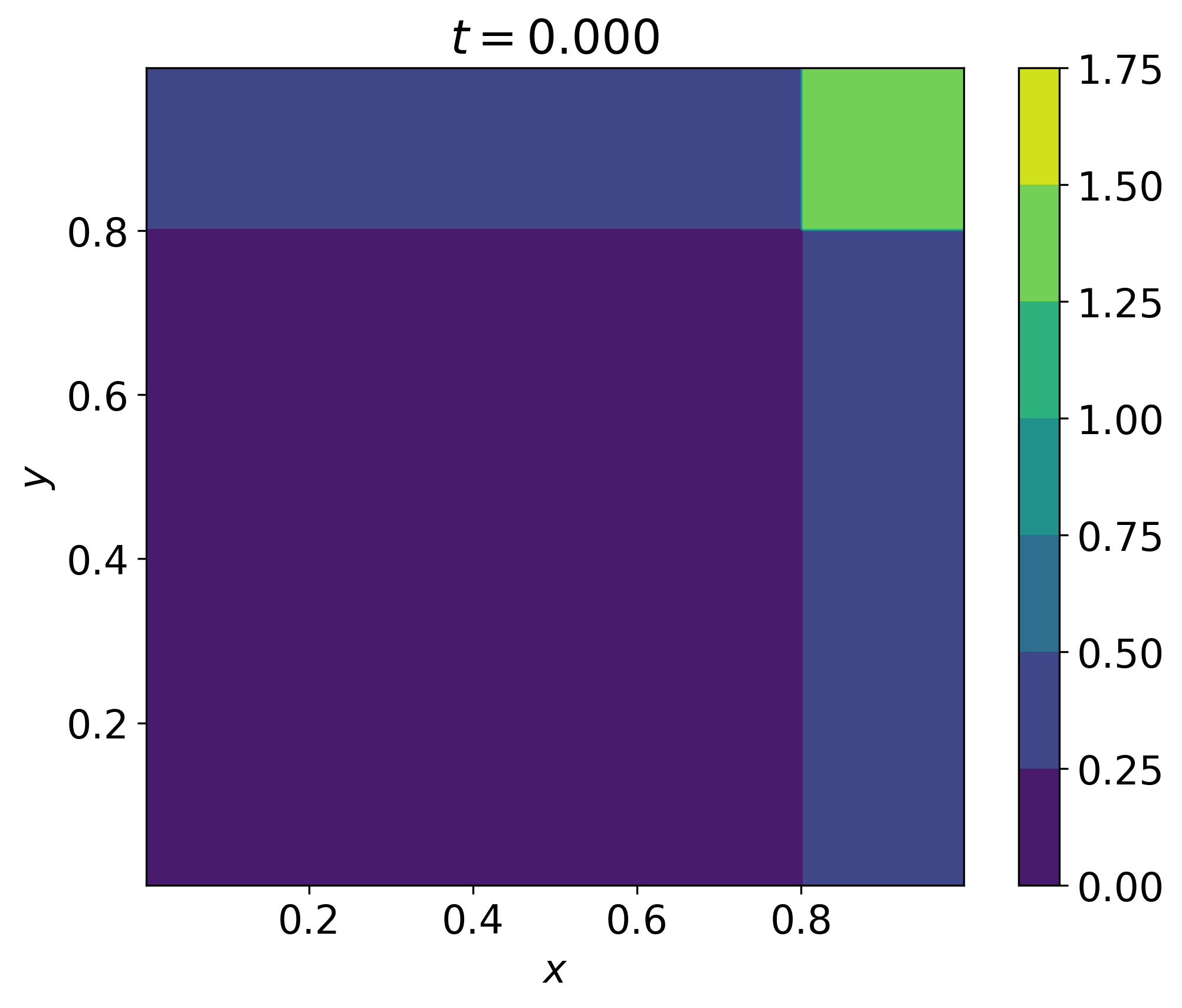}
        \includegraphics[width=\textwidth]{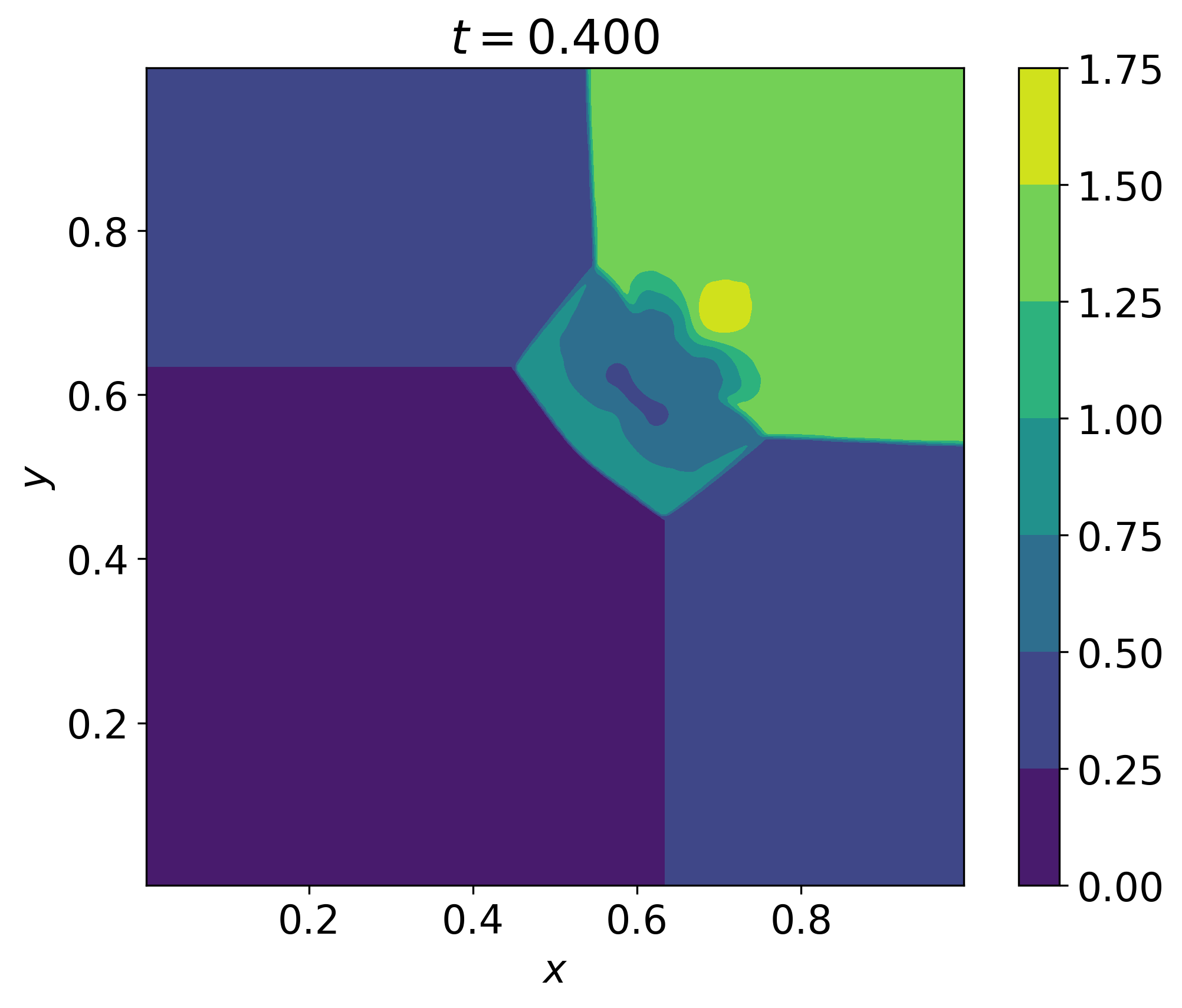}
        \includegraphics[width=\textwidth]{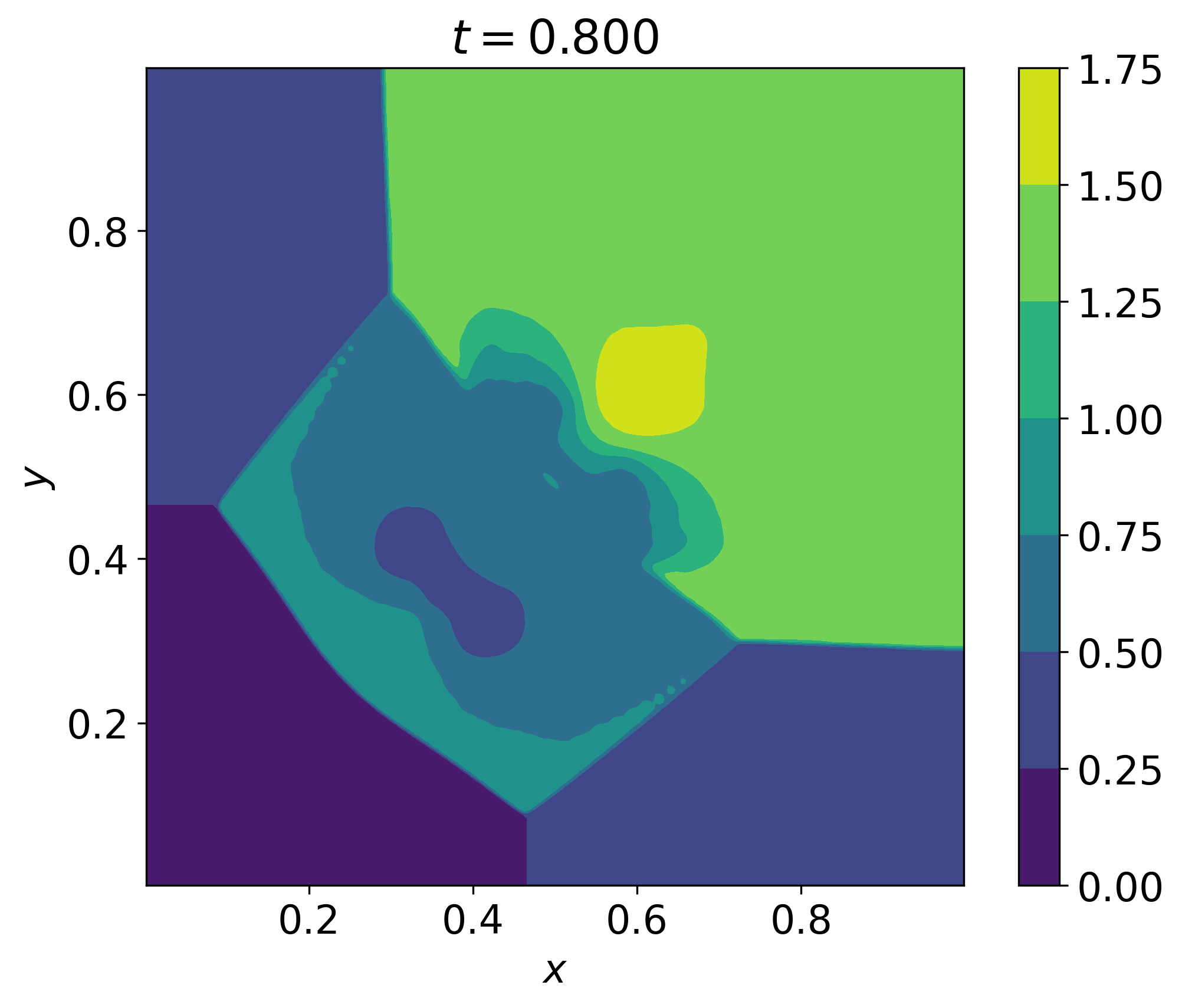}
        \caption{Training, $p_{TR} = 1.5$}
    \end{subfigure}
    \begin{subfigure}{0.32\columnwidth}
        \centering
        \includegraphics[width=\textwidth]{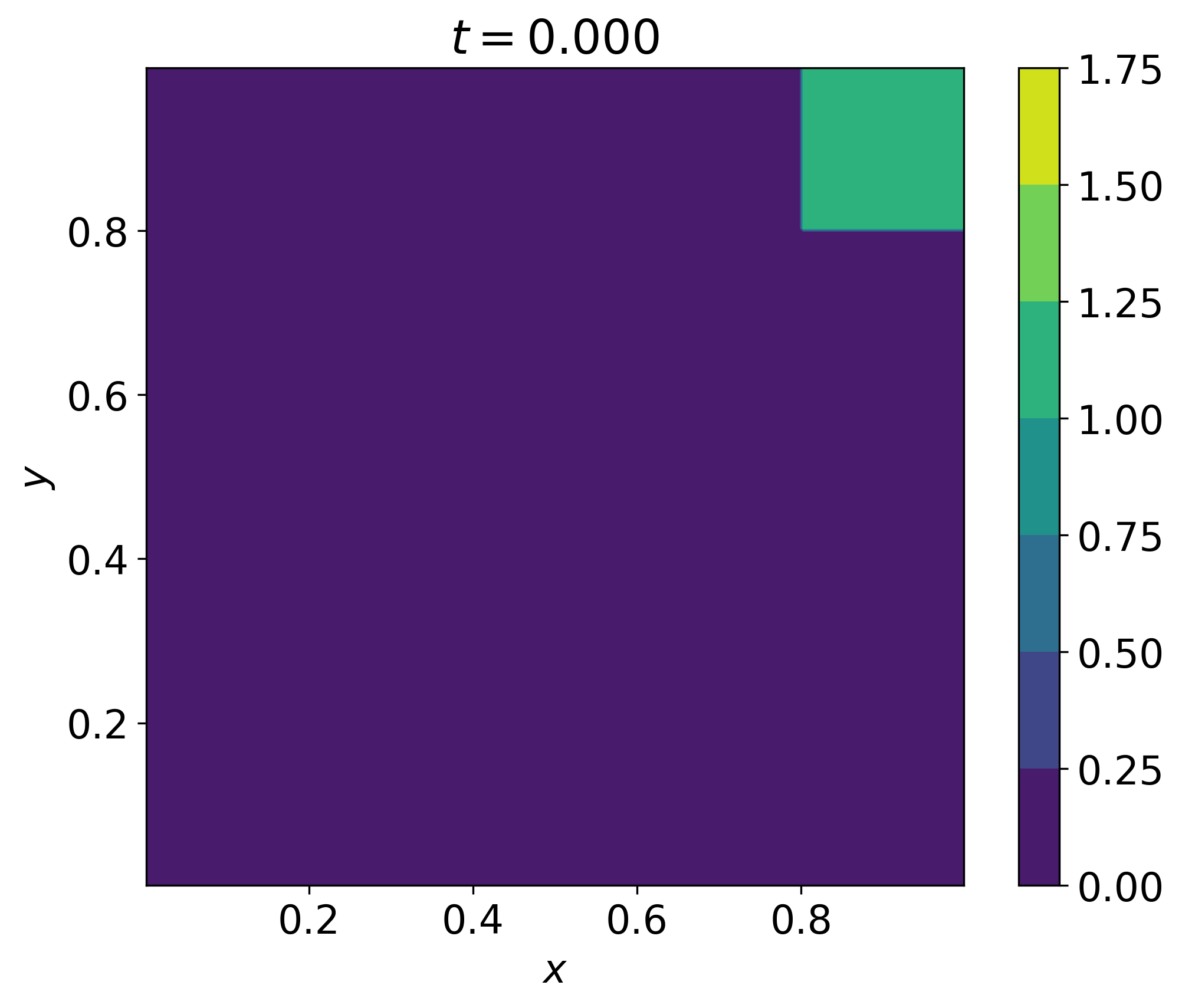}
        \includegraphics[width=\textwidth]{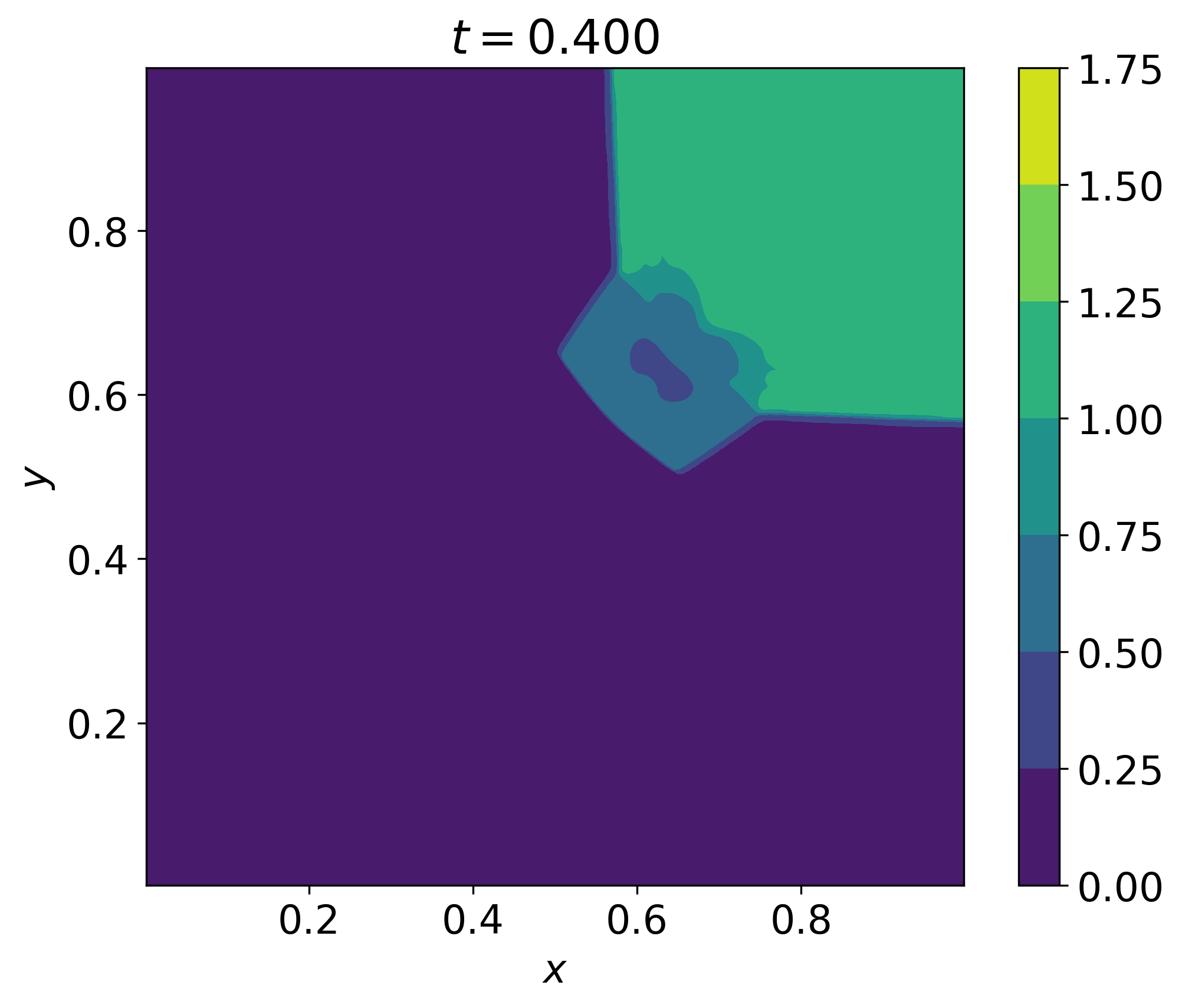}
        \includegraphics[width=\textwidth]{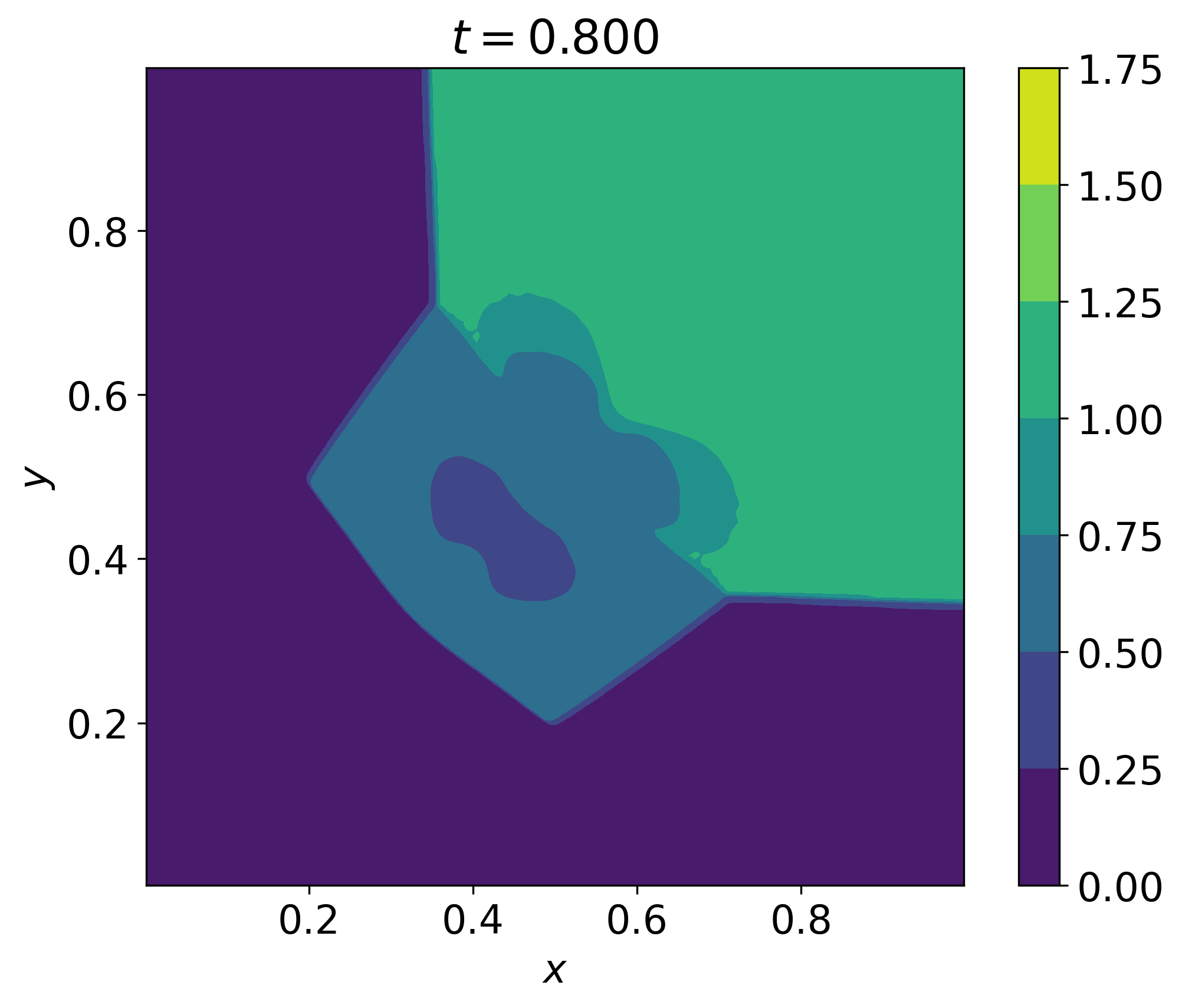}
        \caption{Testing, $p_{TR} = 1.125$}
    \end{subfigure}
    \caption{Pressure snapshots for the 2D Euler equations corresponding to different initial values of $p_{TR}$.}
    \label{fig:euler_snapshots}
\end{figure}

We collect FOM snapshots using the open-source Python library \texttt{pressio-demoapps}\footnote{\href{https://pressio.github.io/pressio-demoapps/index.html}{pressio.github.io/pressio-demoapps}}
to simulate \cref{eq:euler_pde}, which uses a cell-centered finite volume scheme.
For this example, we use a $256\times256$ uniform Cartesian mesh, resulting in a FOM with state dimension $n_q = 256\times256\times 4 = 262,144$, and a Weno5 scheme for inviscid flux reconstruction.
The FOM time stepping is done using \texttt{pressio-demoapps}' SSP3 scheme for times $t\in (0, 0.8)$ with time step $\Delta t=0.001$, while the ROM is integrated with BDF time stepping.
The first $2000$ normalized POD singular values are plotted in \Cref{fig:euler_singular_values}; the slow decay indicates the high difficulty of the problem for POD-based methods.

\begin{figure}[t]
    \centering
    \includegraphics[width=0.5\textwidth]{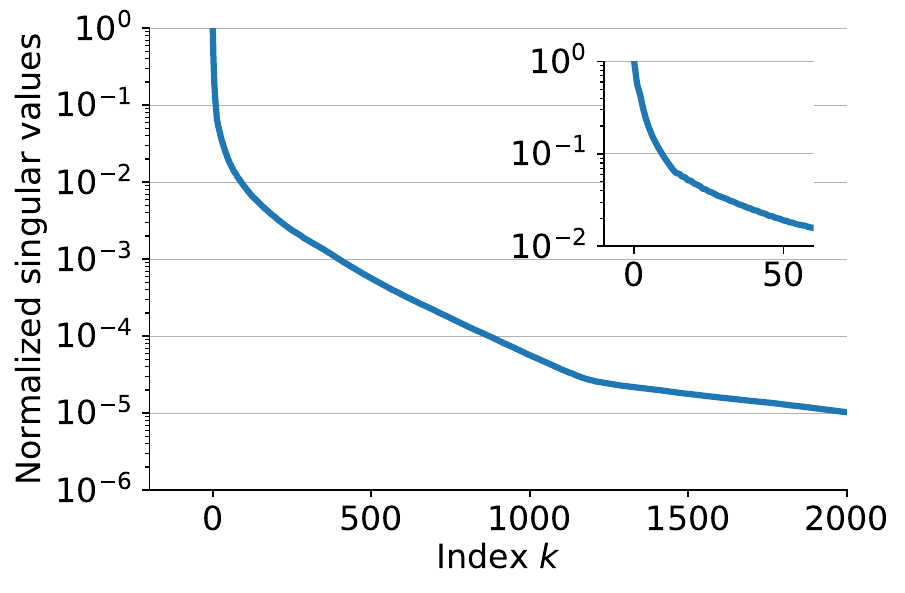}
    \caption{First $2000$ normalized singular values $\sigma_k/\sigma_1$ for the 2D Euler example. The inset figure shows the first $60$ normalized singular values to show where the singular value decay begins to slow.}
    \label{fig:euler_singular_values}
\end{figure}

Before computing ROMs, the FOM state variables are first transformed via the map
\begin{align}\label{eq:specific_volume_transformation}
    \begin{bmatrix}
        \rho \\ \rho u \\ \rho v \\ \rho E
    \end{bmatrix}
    \mapsto
    \begin{bmatrix}
        u \\ v \\ p \\ \zeta
    \end{bmatrix},
\end{align}
where $\zeta = 1/\rho$ is the specific volume.
A discretized FOM using the specific volume formulation is purely quadratic,
\begin{align}\label{eq:euler_quadratic_fom}
    \dt \bq(t) = \bH [\bq(t)\otimes\bq(t)],
    \qquad
    \bq(0) = \bq_0(p_{TR}).
\end{align}
For further details, see, e.g., \cite{EQian_BKramer_BPeherstorfer_KWillcox_2020a}.
This FOM is not formed explicitly, but it motivates an appropriate structure for feature map Kernel ROMs using POD or QM.
In both cases, we set $\obq$ to the average training snapshot and apply the kernel input normalization from \Cref{remark:kernel_normalization}, leading to a POD ROM structure
\begin{align}\label{eq:euler_galerkin_rhs_pod}
    \dt\hbq(t)
= \hbc + \hbA \hbq(t) + \hbH [\hbq(t) \otimes \hbq(t)],
\end{align}
whereas the QM ROMs have the quartic form
\begin{align}\label{eq:euler_galerkin_rhs_qm}
    \dt\hbq(t)
= \hbc + \hbA \hbq(t) + \hbH_2 [\hbq(t) \otimes \hbq(t)]
    + \hbH_{3}[\hbq(t)\otimes\hbq(t)\otimes\hbq(t)]
    + \hbH_{4}[\hbq(t)\otimes\hbq(t)\otimes\hbq(t)\otimes\hbq(t)].
\end{align}
Since we use \texttt{pressio-demoapps} to collect FOM data, this example only considers the purely non-intrusive cases.
That is, we do not compute intrusive ROMs for this problem and do not evaluate the \emph{a posteriori} error bound as in the previous examples.

The POD and QM OpInf ROMs are constructed to have the same structure as \cref{eq:euler_galerkin_rhs_pod} and \cref{eq:euler_galerkin_rhs_qm}, respectively.
Notice that this is the same structure as for Burgers' equation.
Consequently, the feature map Kernel ROMs use the same feature maps $\bphi$ and weighting matrices $\bG$ as in  \Cref{tbl:burgers_kernel_table}.
As in both previous examples, the RBF Kernel ROMs use a Gaussian RBF kernel with fixed shape parameter $\epsilon=10^{-1}$.
The hybrid Kernel ROMs use the sum of the kernel induced by the POD feature map from \Cref{tbl:burgers_kernel_table} with weighting coefficient $c_\phi=1$ and the same Gaussian RBF kernel with $\epsilon=0.1$ and weighting coefficient $c_\psi=10^{-3}$, resulting in a right-hand side of the form \cref{eq:burgers_hybrid_rhs}.
The error metric that we consider is the relative $L^\infty$-$L^1$ norm
\begin{align}\label{eq:relative_linf_l1_error}
    \be(\bq, \hbq) = \frac{\max_{k} \; \norm{\bq(t_k) - \bg(\hbq(t_k))}_1}{\max_{k} \; \norm{\bq(t_k)}_1},
\end{align}
The $L^1$ norm is more appropriate than $L^2$ for this problem due to the discontinuities in the solution.

We plot the error \cref{eq:relative_linf_l1_error} versus the reduced dimension $r$ for the POD OpInf, feature map Kernel, RBF Kernel, and hybrid Kernel ROMs in \Cref{fig:euler_pod_romsize_vs_error}.
For $r=5, 10, 15$, each of the ROMs obtain nearly identical performance.
For $r>15$, the projection error and the Kernel ROM errors plateau, with the Kernel ROMs yielding a $<2\%$ difference in error compared to the projection error.
The Hybrid and FM Kernel ROMs have nearly identical errors, while the RBF yields slightly different but very similar errors.
The OpInf ROM increases slightly in error for $r>15$, yet still obtains errors within a few percent of the projection error.
We note that the plateauing of the ROM and projection errors for the tested ROM sizes is expected since the singular value decay is slow, as shown in \Cref{fig:euler_singular_values}.

\begin{figure}[t]
    \centering
    \includegraphics[width=0.9\textwidth]{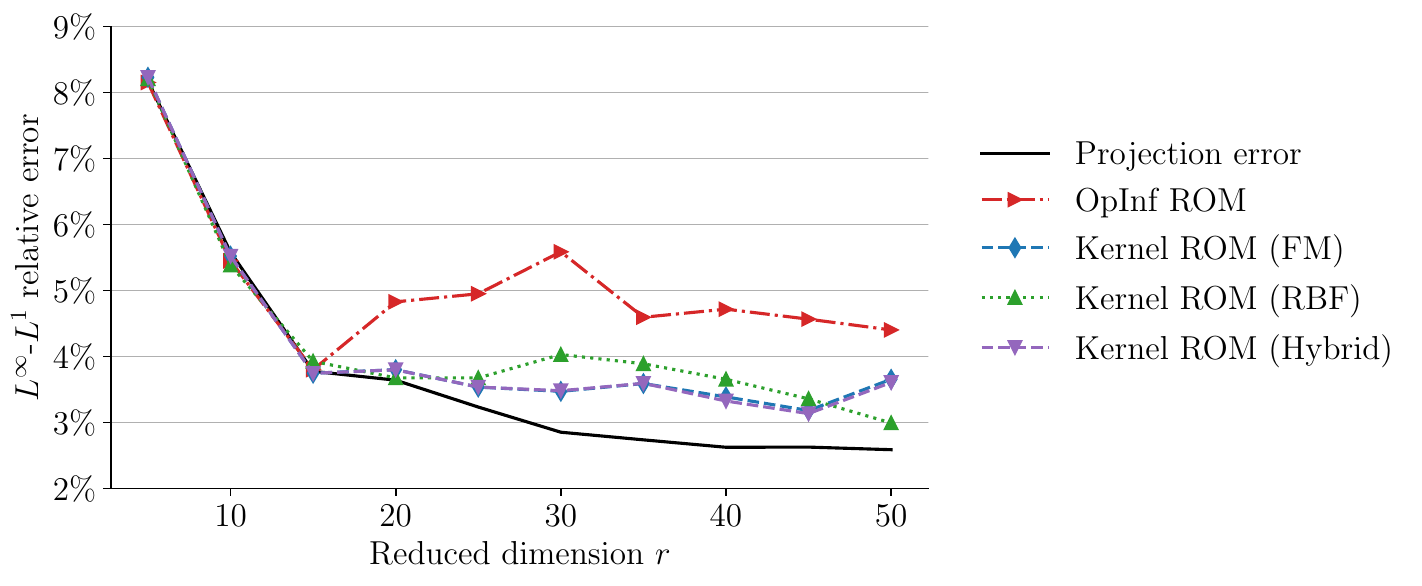}
    \caption{Relative errors for several non-intrusive POD ROMs for the Euler--Riemann problem \cref{eq:euler_pde}.}
    \label{fig:euler_pod_romsize_vs_error}
\end{figure}

We omit a similar comparison for the QM ROMs for this problem because the resulting ROMs are highly dependent on the QM regularization $\rho$, and require very large values of $\rho$ to obtain a stable ROM.
To illustrate this, we compute QM Kernel FM ROMs for $r=10, 20$ for QM regularizations $\rho \in \set{10^0, 10^1, \dots, 10^{12}}$ and plot the resulting errors, see \Cref{fig:euler_qm_error_vs_reg}.
For $r=10$, we observe that the QM Kernel ROM errors are very large for $\rho<10^{10}$, whereas the corresponding QM projection errors are relatively small.
The QM ROM errors do not approach the QM projection errors until $\rho = 10^{11}$, where a slightly better error compared to POD is achieved.
For $r=20$, the QM ROMs for $\rho < 10^8$ are unstable and do not finish the time integration, while for $\rho =10^8, 10^9, 10^{10}$, the QM ROM errors still exceed the POD errors.
The QM ROMs for $\rho=10^{11}, 10^{12}$ obtain yield the best errors, but because the QM regularization $\rho$ is so large, the resulting ROM errors are no better than POD.

\begin{figure}[t]
    \centering
    \includegraphics[width=\textwidth]{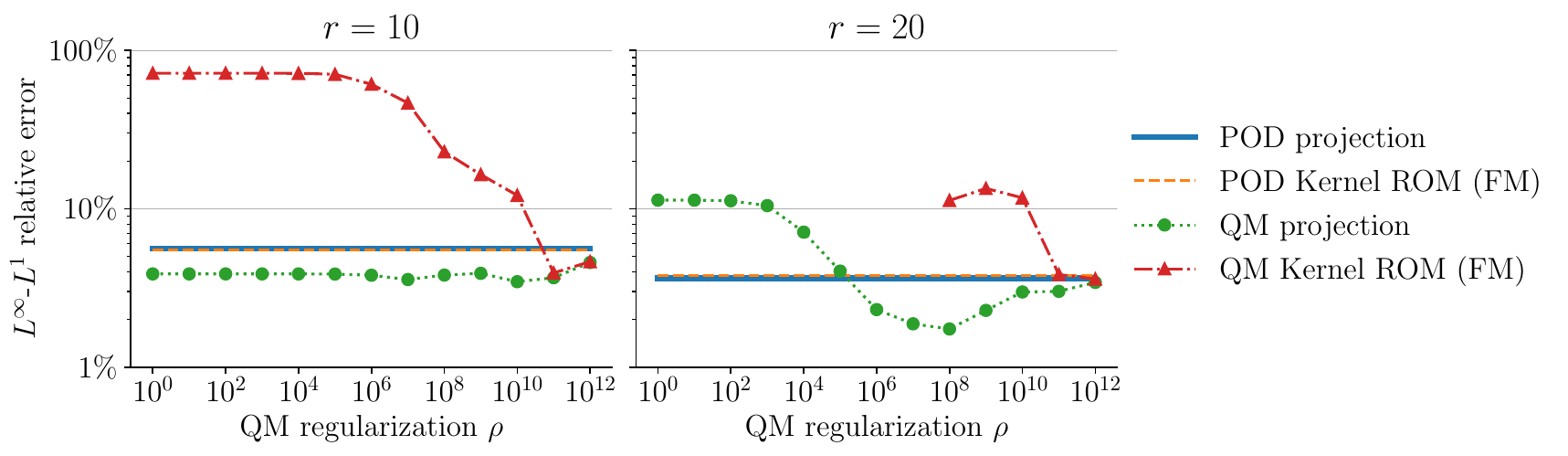}
    \caption{QM regularization $\rho$ versus relative $L^\infty$-$L^1$ error.}
    \label{fig:euler_qm_error_vs_reg}
\end{figure}

\section{Conclusion}\label{sec:conclusion}

This paper develops a novel non-intrusive model reduction technique grounded in regularized kernel interpolation.
While previous approaches approximate the ROM dynamics by solving a data-driven polynomial regression problem,
our approach yields an optimal approximant to the ROM dynamics from an RKHS, which is determined by the choice of kernel.
In particular, using kernels induced by feature maps allows one to imbue interpretable structure into the resulting ROM.
Furthermore, using an RBF kernel or a hybrid approach using the sum of a feature map and an RBF kernel allows one to compute effective non-intrusive ROMs that incorporate no structure or partial structure.
The hybrid approach also provides a natural way of incorporating closure terms into our ROM formulation, and this approach was demonstrated to be effective in each of the numerical examples.
Since the approximant lives in an RKHS, we can leverage the pointwise error bound from \Cref{thm:power_function_bound}, a standard result from RKHS theory, as well as standard intrusive ROM error estimates to derive an \emph{a posteriori} error estimate for our Kernel ROMs in \Cref{thm:a_posteriori_error}.
This error estimate, as well as the added flexibility afforded by arbitrary choices of kernel, are key innovations of our approach.

Future work will focus on expanding the applicability and efficiency of Kernel ROMs.
In particular, we will extend our approach to problems where the FOM right-hand side $\bff$ is parametrized, which is the case in many engineering applications of interest.
Second, we will implement a greedy sampling procedure to build a minimal training set for the kernel interpolants.
This is particularly relevant when using an RBF interpolant, since the computation cost of evaluating the RBF interpolant is proportional to the amount of training data whenever the kernel is not entirely prescribed by feature maps.
Third, we will develop a method for non-intrusively approximating the \emph{a posteriori} error bound in \Cref{thm:a_posteriori_error}.
As mentioned in \Cref{sec:numerics}, evaluating the bound \cref{eq:error_estimate} requires access to the FOM right-hand side $\bff$, which we assume that we cannot access in the fully non-intrusive setting.
Therefore, in future work, it will be necessary to develop an accurate estimator for the quantities in \cref{eq:error_functions}.

\section*{Acknowledgements}

S.A.M.~was supported in part by the John von Neumann postdoctoral fellowship, a position at Sandia National Laboratories sponsored by the Applied Mathematics Program of the U.S.~Department of Energy Office of Advanced Scientific Computing Research.
Sandia National Laboratories is a multi-mission laboratory managed and operated by National Technology \& Engineering Solutions of Sandia, LLC (NTESS), a wholly owned subsidiary of Honeywell International Inc., for the U.S.~Department of Energy's National Nuclear Security Administration~(DOE/NNSA) under contract DE-NA0003525. This written work is authored by an employee of NTESS. The employee, not NTESS, owns the right, title and interest in and to the written work and is responsible for its contents. Any subjective views or opinions that might be expressed in the written work do not necessarily represent the views of the U.S.~Government.

\begin{appendices}
\section{Quadratic systems with QM approximations}
\label{appendix:fullquadstructure}

This appendix considers a linear-quadratic FOM,
\begin{align}
    \tag{\ref{eq:linearquadratic_fom}}
    \dt\bq(t)
    = \bff(\bq(t))
    \coloneqq \bA\bq(t) + \bH[\bq(t) \otimes \bq(t)],
\end{align}
and derives the structure of the corresponding intrusive projection-based ROM with a QM approximation,
\begin{align}
    \tag{\ref{eq:quadratic_decoder}}
    \bg(\tbq) = \obq + \bV\tbq + \bW[\tbq \otimes \tbq],
\end{align}
for nonzero $\obq\in\real^{n_q}$ and $\bW\in\real^{n_q\times r^2}$.
Specifically, we show that a nonzero reference vector $\obq$ causes a constant term to appear in the ROM dynamics.

Using the product rule $(\bX\otimes\bY)(\bZ\otimes\bU) = (\bX\bZ)\otimes(\bY\bU)$, we have
\begin{align*}
    \bg(\tbq) \otimes \bg(\tbq)
    &= \big(\obq + \bV\tbq + \bW[\tbq \otimes \tbq]\big) \otimes \big(\obq + \bV\tbq + \bW[\tbq \otimes \tbq]\big)
    \\
    &= \obq\otimes\obq + \obq\otimes(\bV\tbq) + \obq\otimes(\bW[\tbq \otimes \tbq])
    \\&\qquad
    + (\bV\tbq)\otimes\obq + (\bV\tbq)\otimes(\bV\tbq) + (\bV\tbq)\otimes(\bW[\tbq \otimes \tbq])
    \\&\qquad
    + (\bW[\tbq \otimes \tbq]) \otimes \obq + (\bW[\tbq \otimes \tbq]) \otimes (\bV\tbq) + (\bW[\tbq \otimes \tbq]) \otimes (\bW[\tbq \otimes \tbq])
    \\
    &= \obq\otimes\obq + (\obq\otimes\bV)\tbq + (\obq\otimes\bW)[\tbq \otimes \tbq]
    \\&\qquad
    + (\bV \otimes\obq)\tbq + (\bV \otimes \bV)[\tbq\otimes\tbq] + (\bV\otimes\bW)[\tbq \otimes \tbq \otimes \tbq]
    \\&\qquad
    + (\bW\otimes\obq)[\tbq \otimes \tbq] + (\bW\otimes \bV)[\tbq \otimes \tbq \otimes \tbq] + (\bW \otimes \bW)[\tbq \otimes \tbq \otimes \tbq \otimes \tbq]
    \\
    &= \obq\otimes\obq + (\obq\otimes\bV + \bV\otimes\obq)\tbq + (\obq\otimes\bW + \bV \otimes \bV + \bW\otimes\obq)[\tbq \otimes \tbq]
    \\&\qquad
    + (\bV\otimes\bW + \bW\otimes \bV)[\tbq \otimes \tbq \otimes \tbq]
    + (\bW \otimes \bW)[\tbq \otimes \tbq \otimes \tbq \otimes \tbq].
\end{align*}
Therefore,
\begin{align*}
    \bff(\bg(\tbq))
    &= \bA\big(\obq + \bV\tbq + \bW[\tbq \otimes \tbq]\big)
    + \bH[\bg(\tbq) \otimes \bg(\tbq)]
    \\
    &= \bA\obq + \bH[\obq\otimes\obq] + \big(\bA\bV + \bH(\obq\otimes\bV + \bV\otimes\obq)\big)\tbq
    \\&\qquad
    + \big(\bA\bW + \bH(\obq\otimes\bW + \bV \otimes \bV + \bW\otimes\obq)\big)[\tbq \otimes \tbq]
    \\&\qquad
    + \bH(\bV\otimes\bW + \bW\otimes \bV)[\tbq \otimes \tbq \otimes \tbq]
    + \bH(\bW \otimes \bW)[\tbq \otimes \tbq \otimes \tbq \otimes \tbq],
\end{align*}
so that the intrusive projection-based ROM \cref{eq:generic_rom-withgandh} can be written as
\begin{subequations}\label{eq:intrusive-quartic-apdx}
\begin{align}
    \begin{aligned}
    \dt\tbq(t)
    &= \bV\trp\bff(\bg(\tbq(t)))
    \\
    &= \tbc + \tbA \tbq(t) + \tbH_2 [\tbq(t) \otimes \tbq(t)]
    + \tbH_{3}[\tbq(t)\otimes\tbq(t)\otimes\tbq(t)]
    + \tbH_{4}[\tbq(t)\otimes\tbq(t)\otimes\tbq(t)\otimes\tbq(t)],
    \end{aligned}
\end{align}
where
\begin{align}
    \begin{aligned}
    \tbc &= \bV\trp(\bA\obq + \bH[\obq\otimes\obq]) \in \real^{r},
    \\
    \tbA &= \bV\trp\bA\bV + \bV\trp\bH(\obq\otimes\bV + \bV\otimes\obq) \in \real^{r\times r},
    \\
    \tbH_{2} &= \bV\trp\bA\bW + \bV\trp\bH(\obq\otimes\bW + \bV \otimes \bV + \bW\otimes\obq)\in\real^{r\times r^2},
    \\
    \tbH_{3} &= \bV\trp\bH(\bV\otimes\bW + \bW\otimes \bV)\in\real^{r\times r^{3}},
    \\
    \tbH_{4} &= \bV\trp\bH(\bW \otimes \bW) \in\real^{r\times r^4}.
    \end{aligned}
\end{align}
\end{subequations}

The quartic polynomial structure of \cref{eq:intrusive-quartic-apdx} also arises when $\obq = \bf0$ but a Kernel ROM is constructed with the input scaling preprocessing step of \Cref{remark:kernel_normalization}.
In that case, the matrices in \cref{eq:intrusive-quartic-apdx} reduce to
\begin{align*}
\tbA &= \bV\trp\bA\bV,
    &
    \tbH_{2} &= \bV\trp\bA\bW + \bV\trp\bH(\bV \otimes \bV),
    \\
    \tbH_{3} &= \bV\trp\bH(\bV\otimes\bW + \bW\otimes \bV),
    &
    \tbH_{4} &= \bV\trp\bH(\bW \otimes \bW),
\end{align*}
with $\tbc = \bf0$.
However, the Kernel ROM targets a shifted and scaled reduced state $\hbq(t) = \bSigma^{-1}(\tbq(t) - \obx)$ for some $\bSigma\in\real^{r\times r}$ and $\obx\in\real^{r}$, which evolves according to
\begin{align*}
    \dt\hbq(t)
    &= \dt\left[\bSigma^{-1}\tbq(t) - \bSigma^{-1}\obx\right]
    \\
    &= \bSigma^{-1}\big(\tbA \tbq(t) + \tbH_2 [\tbq(t) \otimes \tbq(t)]
    + \tbH_{3}[\tbq(t)\otimes\tbq(t)\otimes\tbq(t)]
    + \tbH_{4}[\tbq(t)\otimes\tbq(t)\otimes\tbq(t)\otimes\tbq(t)]\big)
    \\
    &= \bSigma^{-1}\big(\tbA (\bSigma\hbq(t) + \obx) + \tbH_2 [(\bSigma\hbq(t) + \obx) \otimes (\bSigma\hbq(t) + \obx)]
    \\&\qquad\qquad
    + \tbH_{3}[(\bSigma\hbq(t) + \obx)\otimes(\bSigma\hbq(t) + \obx)\otimes(\bSigma\hbq(t) + \obx)]
    \\&\qquad\qquad
    + \tbH_{4}[(\bSigma\hbq(t) + \obx)\otimes(\bSigma\hbq(t) + \obx)\otimes(\bSigma\hbq(t) + \obx)\otimes(\bSigma\hbq(t) + \obx)]\big)
    \\
    &= \hbc + \hbA \hbq(t) + \hbH_2 [\hbq(t) \otimes \hbq(t)]
    + \hbH_{3}[\hbq(t)\otimes\hbq(t)\otimes\hbq(t)]
    + \hbH_{4}[\hbq(t)\otimes\hbq(t)\otimes\hbq(t)\otimes\hbq(t)],
\end{align*}
where
\begin{align*}
    \hbc
    &= \bSigma^{-1}\big(\tbA\obx
    + \tbH_2 [\obx \otimes \obx]
    + \tbH_{3}[\obx \otimes \obx \otimes \obx]
    + \tbH_{4}[\obx \otimes \obx \otimes \obx \otimes \obx]\big)
    \in\real^{r},
    \\
    \hbA
    &= \bSigma^{-1}\big(
        \tbA\bSigma
        + \tbH_2(\bSigma\otimes\obx + \obx\otimes\bSigma)
        + \tbH_3(
            \bSigma\otimes\obx\otimes\obx
            + \obx\otimes\bSigma\otimes\obx
            + \obx\otimes\obx\otimes\bSigma
        )
    \\&\qquad\qquad
    + \tbH_4(
        \bSigma\otimes\obx\otimes\obx\otimes\obx
        + \obx\otimes\bSigma\otimes\obx\otimes\obx
        + \obx\otimes\obx\otimes\bSigma\otimes\obx
        + \obx\otimes\obx\otimes\obx\otimes\bSigma
    )
    \big)\in\real^{r\times r},
    \\
    \hbH_{2}
    &= \bSigma^{-1}\big(
        \tbH_2(\bSigma \otimes \bSigma)
        + \tbH_3(
            \bSigma\otimes\bSigma\otimes\obx
            + \bSigma\otimes\obx\otimes\bSigma
            + \obx\otimes\bSigma\otimes\bSigma
        )
    \\&\qquad\qquad
    + \tbH_4(
        \bSigma\otimes\bSigma\otimes\obx\otimes\obx
        + \bSigma\otimes\obx\otimes\bSigma\otimes\obx
        + \bSigma\otimes\obx\otimes\obx\otimes\bSigma
        \\&\qquad\qquad\qquad
        + \obx\otimes\bSigma\otimes\bSigma\otimes\obx
        + \obx\otimes\bSigma\otimes\obx\otimes\bSigma
        + \obx\otimes\obx\otimes\bSigma\otimes\bSigma
    )
    \big)\in\real^{r\times r^2},
    \\
    \hbH_{3}
    &= \bSigma^{-1}\big(
        \tbH_{3}(\bSigma\otimes\bSigma\otimes\bSigma)
    \\&\qquad\qquad
        + \tbH_{4}(
            \bSigma\otimes\bSigma\otimes\bSigma\otimes\obx
            + \bSigma\otimes\bSigma\otimes\obx\otimes\bSigma
            + \bSigma\otimes\obx\otimes\bSigma\otimes\bSigma
            + \obx\otimes\bSigma\otimes\bSigma\otimes\bSigma
        )
    \big)\in\real^{r\times r^3},
    \\
    \hbH_{4} &= \bSigma^{-1}\tbH_{4}(\bSigma\otimes\bSigma\otimes\bSigma\otimes\bSigma)
    \in\real^{r\times r^{4}}.
\end{align*}
The salient point is that none of these matrices need to be constructed explicitly when using a non-intrusive model reduction method: only the desired structure is needed to design the non-intrusive ROM.

\section{Stability for linear systems}
\label{appendix:lti_stability_error}

The following stability result illustrates the importance of the regularization hyperparameter $\rho \ge 0$ when solving the minimization problem \cref{eq:qm_optimization_problem} for computing $\bW$.
Applying the QM approach with reference state $\obq=\bzero$ to a linear FOM
\begin{align}
    \dt\bq(t) = \bA\bq(t),
    \qquad
    \bq(0) = \bq_0(\bmu),
\end{align}
results in a ROM with quadratic dynamics
\begin{align}\label{eq:linear_qm_rom_appendix}
    \dt \tbq(t) = \tbA\tbq(t) + \tbH[\tbq(t)\otimes\tbq(t)],
    \qquad \tbq(0) &= \bV\trp\bq_0(\bmu),
\end{align}
where
$\tbA = \bV\trp \bA \bV$ and $\tbH = \bV\trp \bA \bW$.
We then have a stability estimate for the ROM solution.

\begin{proposition}\label{thm:lti_stability}
Let $\lambda$ denote the maximum eigenvalue of $\bA_{sym}=\frac{1}{2}(\bA+\bA\trp)$ (the symmetric part of $\bA$).
Then the following stability estimate for the QM ROM \cref{eq:linear_qm_rom_appendix} holds for all $t\in [0, T]$:
    \begin{align}\label{eq:stablity_result}
        \norm{\tbq(t)}_2\leq \norm{\bA}_2\norm{\bW}_2\int_0^t \|\tbq(s)\otimes\tbq(s)\|_2 e^{\lambda(t-s)}ds + e^{\lambda t} \norm{\bV\trp\bq_0}_2.
    \end{align}
\end{proposition}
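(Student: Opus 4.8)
The plan is to follow the logarithmic-norm-plus-Grönwall strategy used throughout \Cref{sec:error_estimates}, but applied directly to the reduced state $\tbq(t)$ instead of to a reconstruction error, and with the standard (unweighted) Euclidean inner product playing the role of $\innerprod{\cdot,\cdot}_\bM$. First I would differentiate the norm along trajectories, writing
\[
    \dt\norm{\tbq(t)}_2 = \frac{\innerprod{\tbq(t), \dt\tbq(t)}}{\norm{\tbq(t)}_2},
\]
which is valid wherever $\tbq(t)\neq\bzero$, and then substitute the dynamics \cref{eq:linear_qm_rom_appendix} to split the numerator into a linear contribution $\innerprod{\tbq(t), \tbA\tbq(t)}$ and a quadratic contribution $\innerprod{\tbq(t), \tbH[\tbq(t)\otimes\tbq(t)]}$ that I can bound separately.

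The quadratic term is the routine one. By Cauchy--Schwarz and submultiplicativity of the spectral norm,
\[
    \innerprod{\tbq, \tbH[\tbq\otimes\tbq]} \le \norm{\tbq}_2\,\norm{\tbH}_2\,\norm{\tbq\otimes\tbq}_2,
\]
and since $\tbH = \bV\trp\bA\bW$ with $\norm{\bV\trp}_2 = 1$ (orthonormal columns), I obtain $\norm{\tbH}_2 \le \norm{\bA}_2\norm{\bW}_2$. This produces precisely the prefactor $\norm{\bA}_2\norm{\bW}_2$ multiplying the integral in \cref{eq:stablity_result}.

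The main obstacle, and the only genuinely subtle step, is the linear term, where I must relate $\tbA = \bV\trp\bA\bV$ back to the \emph{full} operator $\bA$. Setting $\bp = \bV\tbq$, I would write $\innerprod{\tbq, \tbA\tbq} = \bp\trp\bA\bp = \bp\trp\bA_{sym}\bp$, the antisymmetric part of $\bA$ contributing zero to the quadratic form, and then bound $\bp\trp\bA_{sym}\bp \le \lambda\norm{\bp}_2^2$ by the Rayleigh-quotient characterization of the largest eigenvalue $\lambda$ of $\bA_{sym}$. Because $\bV$ has orthonormal columns, $\norm{\bp}_2 = \norm{\bV\tbq}_2 = \norm{\tbq}_2$, giving $\innerprod{\tbq, \tbA\tbq} \le \lambda\norm{\tbq}_2^2$. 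Equivalently, this says the compressed matrix $\bV\trp\bA_{sym}\bV$ has largest eigenvalue no greater than that of $\bA_{sym}$ (a Cauchy interlacing fact); the delicacy is that the exponential rate is governed by $\bA$ rather than by $\tbA$, which is exactly what makes the estimate a meaningful statement about the projection.

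Combining the two bounds and dividing through by $\norm{\tbq(t)}_2$ yields the differential inequality
\[
    \dt\norm{\tbq(t)}_2 \le \lambda\norm{\tbq(t)}_2 + \norm{\bA}_2\norm{\bW}_2\,\norm{\tbq(t)\otimes\tbq(t)}_2.
\]
I would then close the argument by invoking \Cref{lem:gronwall} with $u(t) = \norm{\tbq(t)}_2$, the constant rate $\beta(t) \equiv \lambda$, and the forcing $\alpha(t) = \norm{\bA}_2\norm{\bW}_2\norm{\tbq(t)\otimes\tbq(t)}_2$. Since $\int_s^t \beta\,d\tau = \lambda(t-s)$, $\int_0^t \beta\,d\tau = \lambda t$, and $\tbq(0) = \bV\trp\bq_0$, pulling the constant $\norm{\bA}_2\norm{\bW}_2$ outside the integral reproduces \cref{eq:stablity_result} verbatim. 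I expect essentially no further difficulty once the linear-term compression bound is in hand.
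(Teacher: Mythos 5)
Your proposal is correct and follows essentially the same argument as the paper: replace $\bA$ by its symmetric part in the quadratic form $(\bV\tbq)\trp\bA(\bV\tbq)$ and bound it by $\lambda\norm{\tbq}_2^2$ using orthonormality of $\bV$, bound the quadratic term via Cauchy--Schwarz and submultiplicativity to get the factor $\norm{\bA}_2\norm{\bW}_2$, then apply \Cref{lem:gronwall} with $u(t)=\norm{\tbq(t)}_2$, $\beta\equiv\lambda$, and $\alpha(t)=\norm{\bA}_2\norm{\bW}_2\norm{\tbq(t)\otimes\tbq(t)}_2$. The only cosmetic differences are your explicit substitution $\bp=\bV\tbq$ and the interlacing remark, neither of which changes the argument.
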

\begin{proof}
    Observe that
    \begin{align}
        \label{eq:lti_stability_bound1}
        \begin{aligned}
        \tbq(t)\trp\dt\tbq(t)
        &= \tbq(t)\trp \bV\trp\bA \bV \tbq(t) + \tbq(t)\trp\bV\trp\bA\bW [\tbq(t)\otimes\tbq(t)]\\
        &= \tbq(t)\trp \bV\trp\bA_{sym} \bV \tbq(t) + \tbq(t)\trp\bV\trp\bA\bW [\tbq(t)\otimes\tbq(t)]\\
        &\leq \lambda\norm{\bV\tbq(t)}_2^2 + \norm{\bA}_2\norm{\bW}_2\norm{\bV\tbq(t)}_2\|\tbq(t)\otimes\tbq(t)\|_2\\
        &= \lambda\norm{\tbq(t)}_2^2 + \norm{\bA}_2\norm{\bW}_2\norm{\tbq(t)}_2\|\tbq(t)\otimes\tbq(t)\|_2,
        \end{aligned}
    \end{align}
    where the last line follows from the orthonormality of $\bV$.
    The bound \cref{eq:lti_stability_bound1} implies that
    \begin{align*}
        \dt\norm{\tbq(t)}_2 =\frac{\tbq(t)\trp\dt\tbq(t)}{\norm{\tbq(t)}_2} \leq
        \lambda\norm{\tbq(t)}_2 + \norm{\bA}_2\norm{\bW}_2\|\tbq(t)\otimes\tbq(t)\|_2
    \end{align*}
    Applying \Cref{lem:gronwall} with $u(t)=\norm{\tbq(t)}_2$,
    $\alpha(t)=\norm{\bA}_2\norm{\bW}_2\|\tbq(t)\otimes\tbq(t)\|_2$,
    and $\beta(t)=\lambda$ yields the result.
\end{proof}

\Cref{thm:lti_stability} indicates that the magnitude of $\norm{\bW}_2$ has a crucial impact on the stability of the resulting QM ROM.
Consequently, it is important to apply sufficient regularization (i.e., choose $\rho$ large enough) when computing $\bW$ to ensure that $\norm{\bW}_2$ remains small.

\section{Error estimate between intrusive ROM and Kernel ROM}
\label{appendix:intrusive_vs_kernel_error_estimate}

We derive an error result comparing the intrusive projection-based ROM solution $\tbq(t)$ and the Kernel ROM solution $\hbq(t)$.
Let $\hbe(t) = \tbq(t)-\hbq(t)$, which satisfies the ODE
\begin{align}
    \label{eq:rom_error_ode}
    \dt\hbe(t) = \tbf(\tbq(t)) - \hbf(\hbq(t)),
    \qquad
    \hbe(0) = \bzero.
\end{align}
We then have the following.

\begin{proposition}\label{prop:non-intrusive_vs_intrusive_error}
    Let $\hbM\in \real^{r\times r}$ be a symmetric positive definite weighting matrix with Cholesky factorization $\hbM=\hbL\hbL\trp$.
    If $\hbf$ is an unregularized kernel interpolant of $\tbf+\bdelta\in \cH_K^r$ where $\norm{\bdelta(\hbq(s))}_{\hbM}<\delta(s)$,
    then
    \begin{align}\label{eq:non-intrusive_vs_intrusive_error}
        \norm{\hbe(t)}_{\hbM}
        &\leq \int_0^t \left(P_{K, \tbQ}(\hbq(s)) \|\hbL\|_2 \|\tbf+\bdelta\|_{\cH_K^r}+\delta(s)\right)e^{\int_s\trp \Lambda_{\hbM}[\hbf](\hbq(\tau))d\tau} ds,
        & \forall \; t & \in (0, T).
    \end{align}
\end{proposition}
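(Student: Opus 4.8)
The plan is to mirror the argument used for \Cref{thm:a_posteriori_error}, but now applied to the reduced error $\hbe(t)=\tbq(t)-\hbq(t)$ governed by \cref{eq:rom_error_ode}, with careful attention to \emph{which} logarithmic Lipschitz constant appears in the exponent. As in that proof, I would work with the scalar quantity $\dt\norm{\hbe(t)}_{\hbM} = \innerprod{\hbe(t),\dt\hbe(t)}_{\hbM}/\norm{\hbe(t)}_{\hbM}$ and bound the numerator. The decisive step is the decomposition of the right-hand side of \cref{eq:rom_error_ode}: the correct choice is to add and subtract $\hbf(\tbq(t))$, writing
\begin{align*}
\dt\hbe(t) = \big(\hbf(\tbq(t))-\hbf(\hbq(t))\big) + \big(\tbf(\tbq(t))-\hbf(\tbq(t))\big).
\end{align*}
The first group is a difference of $\hbf$ evaluated at $\tbq(t)$ and $\hbq(t)$, which is precisely the combination controlled by $\Lambda_{\hbM}[\hbf]$; the second group is the interpolation residual of $\hbf$ against its target $\tbf+\bdelta$. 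Adding and subtracting $\tbf(\hbq(t))$ instead would leave $\tbf(\tbq(t))-\tbf(\hbq(t))$ and hence produce $\Lambda_{\hbM}[\tbf]$ in the exponent, which is not the statement asked for; this is why the ``mirror-image'' split acting on $\hbf$ is essential.

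Taking the $\hbM$-weighted inner product with $\hbe(t)=\tbq(t)-\hbq(t)$ and treating the two groups separately, for the first group I would invoke \Cref{def:local_log_lipschitz} at base point $\hbq(t)$ with test point $\tbq(t)$, which gives directly
\begin{align*}
\innerprod{\hbe(t),\,\hbf(\tbq(t))-\hbf(\hbq(t))}_{\hbM} \le \Lambda_{\hbM}[\hbf](\hbq(t))\,\norm{\hbe(t)}_{\hbM}^2 .
\end{align*}
For the second group I would apply Cauchy--Schwarz in the $\hbM$-inner product and then bound the residual norm $\norm{\tbf(\cdot)-\hbf(\cdot)}_{\hbM}$. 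Since $\hbf$ is the unregularized interpolant of $\tbf+\bdelta$, the triangle inequality gives $\norm{\tbf-\hbf}_{\hbM}\le\norm{(\tbf+\bdelta)-\hbf}_{\hbM}+\norm{\bdelta}_{\hbM}$, and \Cref{cor:vector_power_function_bound} (with weighting matrix $\hbM=\hbL\hbL\trp$ and interpolation points $\tbQ$) bounds the first piece by $P_{K,\tbQ}(\cdot)\,\norm{\hbL}_2\,\|\tbf+\bdelta\|_{\cH_K^r}$, while the hypothesis $\norm{\bdelta(\hbq(s))}_{\hbM}<\delta(s)$ controls the second. This reproduces the factor $P_{K,\tbQ}(\hbq(s))\norm{\hbL}_2\|\tbf+\bdelta\|_{\cH_K^r}+\delta(s)$ appearing in \cref{eq:non-intrusive_vs_intrusive_error}.

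Combining the two estimates and dividing by $\norm{\hbe(t)}_{\hbM}$ yields a scalar differential inequality
\begin{align*}
\dt\norm{\hbe(t)}_{\hbM} \le \Lambda_{\hbM}[\hbf](\hbq(t))\,\norm{\hbe(t)}_{\hbM} + \big(P_{K,\tbQ}(\hbq(t))\norm{\hbL}_2\|\tbf+\bdelta\|_{\cH_K^r}+\delta(t)\big),
\end{align*}
to which I would apply \Cref{lem:gronwall} with $\beta(t)=\Lambda_{\hbM}[\hbf](\hbq(t))$ and the bracketed quantity as $\alpha(t)$. Because the intrusive and Kernel ROMs share the initial condition $\bV\trp(\bq_0(\bmu)-\obq)$, we have $\hbe(0)=\bzero$, so the homogeneous Gr\"onwall term drops out and only the integral term of \cref{eq:non-intrusive_vs_intrusive_error} remains.

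The main obstacle is selecting the mirror-image decomposition so that the exponent carries $\Lambda_{\hbM}[\hbf]$ rather than $\Lambda_{\hbM}[\tbf]$, and then ensuring the interpolation residual and discrepancy are evaluated consistently along the trajectory so that the integrand collapses to the stated $\alpha$-term. A secondary technical point, inherited from \Cref{cor:vector_power_function_bound}, is that the power-function bound requires the \emph{unregularized} ($\gamma=0$) interpolant, matching the hypothesis; as in \Cref{thm:a_posteriori_error}, the constant $\Lambda_{\hbM}[\hbf]$ is hard to compute exactly and would in practice be replaced by the logarithmic norm of the Jacobian $\hbf'(\hbq(t))$.
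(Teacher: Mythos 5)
Your proposal takes a genuinely different decomposition from the paper, and that choice opens a real gap. The paper's proof adds and subtracts $\tbf(\hbq(t))$ and $\bdelta(\hbq(t))$ in \cref{eq:rom_error_ode}, splitting the error dynamics into $\tbf(\tbq(t))-\tbf(\hbq(t))$ (controlled by $\Lambda_{\hbM}[\tbf](\hbq(t))$ via \Cref{def:local_log_lipschitz}), the interpolation residual $\tbf(\hbq(t))+\bdelta(\hbq(t))-\hbf(\hbq(t))$, and $-\bdelta(\hbq(t))$. Every term is then evaluated along the \emph{Kernel ROM} trajectory $\hbq(t)$, which is exactly what makes \Cref{cor:vector_power_function_bound} and the hypothesis $\norm{\bdelta(\hbq(s))}_{\hbM}<\delta(s)$ applicable, and it yields the exponent $\Lambda_{\hbM}[\tbf]$. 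The $\Lambda_{\hbM}[\hbf]$ in the displayed bound \cref{eq:non-intrusive_vs_intrusive_error} is evidently a hat-versus-tilde typo for $\Lambda_{\hbM}[\tbf]$ --- the same display also contains $\int_s\trp$ where $\int_s^t$ is meant, and the statement's $P_{K,\tbQ}$ versus the proof's $P_{K,\hbQ}$ is a similar slip --- so your decision to treat the printed exponent as the target, and to reject the paper's split because it ``produces $\Lambda_{\hbM}[\tbf]$,'' steered you away from the provable statement.

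Concretely, your mirror-image split $\dt\hbe(t) = \big(\hbf(\tbq(t))-\hbf(\hbq(t))\big) + \big(\tbf(\tbq(t))-\hbf(\tbq(t))\big)$ leaves the residual evaluated at $\tbq(t)$, the intrusive trajectory. After Cauchy--Schwarz and your triangle inequality, the second group is bounded by $P_{K,\hbQ}(\tbq(s))\norm{\hbL}_2\|\tbf+\bdelta\|_{\cH_K^r} + \norm{\bdelta(\tbq(s))}_{\hbM}$, and neither factor matches the proposition: the power function is evaluated at $\tbq(s)$ rather than $\hbq(s)$, and the hypothesis bounds only $\norm{\bdelta(\hbq(s))}_{\hbM}$, so $\norm{\bdelta(\tbq(s))}_{\hbM}$ is simply not controlled by $\delta(s)$. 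You correctly identify ``ensuring the interpolation residual and discrepancy are evaluated consistently along the trajectory'' as the main obstacle, but you then assert the integrand ``collapses to the stated $\alpha$-term'' without an argument --- it does not collapse, and this mismatch of evaluation points is precisely where your proof fails. What your route does establish is a legitimate sibling bound (exponent $\Lambda_{\hbM}[\hbf]$, power function and discrepancy evaluated along $\tbq$, under the modified hypothesis $\norm{\bdelta(\tbq(s))}_{\hbM}<\delta(s)$), which is arguably attractive since $\hbf$ is the computable non-intrusive function; but it is not the proposition with its stated hypothesis and integrand. The remaining machinery in your plan --- the scalar inequality for $\dt\norm{\hbe(t)}_{\hbM}$, the triangle inequality through $\tbf+\bdelta$, \Cref{cor:vector_power_function_bound} with weighting $\hbM=\hbL\hbL\trp$, and \Cref{lem:gronwall} with $\hbe(0)=\bzero$ killing the homogeneous term --- coincides with the paper's and is sound.
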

\begin{proof}
The dynamics in \cref{eq:rom_error_ode} can be rewritten as
\begin{align*}
    \dt\hbe(t) &= \tbf(\tbq(t))
    -\tbf(\hbq(t)) + \tbf(\hbq(t))
    + \bdelta(\hbq(t))
    - \hbf(\hbq(t)) - \bdelta(\hbq(t)).
\end{align*}
Taking the $\hbM$-weighted inner product with $\hbe(t)$ yields
\begin{align*}
    &\innerprod{\hbe(t), \dt\hbe(t)}_{\hbM}\\
    &=
    \innerprod{\hbe(t),\tbf(\tbq(t)) -\tbf(\hbq(t))}_{\hbM}
    + \innerprod{\hbe(t), \tbf(\hbq(t)) + \bdelta(\hbq(t)) - \hbf(\hbq(t)) }_{\hbM}
    - \innerprod{\hbe(t), \bdelta(\hbq(t))}_{\hbM} \\
    &\leq \Lambda_{\hbM}[\tbf](\hbq(t))\norm{\hbe(t)}_{\hbM}^2
    + \norm{\tbf(\hbq(t)) + \bdelta(\hbq(t)) - \hbf(\hbq(t))}_{\hbM}\norm{\hbe(t)}_{\hbM}
    +\delta(t)\norm{\hbe(t)}_{\hbM} \\
    &\leq \Lambda_{\hbM}[\tbf](\hbq(t))\norm{\hbe(t)}_{\hbM}^2
    + P_{K, \hbQ}(\hbq(t))\|\hbL\|_2\|\tbf+\bdelta\|_{\cH_K^r}\norm{\hbe(t)}_{\hbM}
    +\delta(t)\norm{\hbe(t)}_{\hbM}.
\end{align*}
Therefore
\begin{align*}
    \dt \norm{\hbe(t)}_{\hbM} &= \frac{\innerprod{\hbe(t), \dt\hbe(t)}_{\hbM}}{\norm{\hbe(t)}_{\hbM}}
    \leq \Lambda_{\hbM}[\tbf](\hbq(t))\norm{\hbe(t)}_{\hbM}
    + P_{K, \hbQ}(\hbq_N(t))\|\hbL\|_2\|\tbf+\bdelta\|_{\cH_K^r}
    +\delta(t).
\end{align*}
Applying \Cref{lem:gronwall} yields the result.
\end{proof}

\clearpage
\section{Main nomenclature}
\label{appendix:nomenclature}

\phantom{Notation}

\begin{tabular}{rl}
    \textbf{Kernel interpolation} \\
    $K:\real^{n_x}\times\real^{n_x}\to\real$ & symmetric kernel function \\
    $\cH_K$ & reproducing kernel Hilbert space \\
    $\bx_j\in\real^{n_x}$ & inputs for kernel interpolation \\
    $\by_j\in\real^{n_y}$ & outputs for kernel interpolation \\
    $\bfun:\real^{n_x}\to\real^{n_y}$ & function to interpolate: $\by_j = \bfun(\bx_j)$ \\
    $\gamma \ge 0$ & kernel regularization parameter \\
    $\bOmega\in \real^{m \times n_y}$ & coefficient matrix for kernel interpolation \\
    $\bs_\bfun^\gamma\in \cH_K^{n_y}$ & kernel interpolant of $\bfun$ with regularization $\gamma$ \\
    $\bpsi_{\!\epsilon}:\real^{n_x}\to\real^{m}$ & RBF kernel evaluation function \\
    $\bphi:\real^{n_x}\to\real^{n_\phi}$ & feature map \\
    $\bG\in\real^{n_\phi \times n_\phi}$ & weighting matrix for feature map kernels \\
    $\bC\in\real^{n_y\times n_\phi}$ & post-feature map kernel coefficients
\\[.0625in]

    \textbf{Full-order models} \\
    $\bq(t)\in\real^{n_q}$ & full-order model state \\
    $\bff:\real^{n_q}\to\real^{n_q}$ & full-order model dynamics function \\
    $\bmu\in\real^{n_\mu}$ & parameters for the initial condition \\
    $\bQ\in\real^{n_q\times M(n_t + 1)}$ & shifted state snapshot matrix (all trajectories)
\\[.0625in]

    \textbf{Reduced-order models} \\
    $\tbq(t)\in\real^{r}$ & intrusive reduced-order model state \\
    $\hbq(t)\in\real^{r}$ & non-intrusive reduced-order model state \\
    $\bV\in\real^{n_q\times r}$ & proper orthogonal decomposition (POD) basis matrix \\
    $\bW\in\real^{r\times r(r+1)/2}$ & quadratic manifold (QM) weight matrix \\
    $\bg:\real^{r}\to\real^{n_q}$ & decompression map \\
    $\bh:\real^{n_q}\to\real^{r}$ & compression map \\
$\be,\hbe$ & error quantities
\end{tabular}

\end{appendices}

\bibliographystyle{siamplain}
\bibliography{references}

\end{document}